
\documentclass[pra,aps,twocolumn,groupedaddress,superscriptaddress,nofootinbib,preprintnumbers]{revtex4-2}

\usepackage{graphicx}
\usepackage[nointegrals]{wasysym} %
\usepackage[export]{adjustbox}
\usepackage{amsmath,amsfonts,amssymb,latexsym}
\usepackage{hhline}
\usepackage{bm}
\usepackage{verbatim}
\usepackage{enumitem}
\hyphenation{PYTHIA}
\usepackage{mathrsfs}
\usepackage{slashed}
\usepackage{empheq}

\newcommand{\p}{\partial}

\newcommand{\lan}{\langle}
\newcommand{\ran}{\rangle}

\newcommand{\unit}{\mathbf{1}}

\newcommand{\da}{{\dagger}}

\newcommand{\ob}[1]{\mkern 1.5mu\overline{\mkern-1.5mu#1\mkern-1.5mu}\mkern 1.5mu}

\newcommand{\ra}{\rightarrow}
\newcommand{\lra}{\leftrightarrow}

\newcommand{\wt}{\widetilde}

\newcommand{\uvx}{{\mathbf{\hat x}}}
\newcommand{\uvy}{{\mathbf{\hat y}}}

\newcommand{\bfzero}{{\mathbf{0}}}

\renewcommand{\(}{\left(}
\renewcommand{\)}{\right)}
\renewcommand{\[}{\left[}

\newcommand{\mt}{\mapsto}

\newcommand{\tp}{\otimes}

\newcommand{\D}{\nabla}

\newcommand\bpm            {\begin{pmatrix}}
	\newcommand\epm           {\end{pmatrix}}

\newcommand{\ms}{\medskip}

\def\app#1#2{%
	\mathrel{%
		\setbox0=\hbox{$#1\sim$}%
		\setbox2=\hbox{%
			\rlap{\hbox{$#1\propto$}}%
			\lower1.1\ht0\box0%
		}%
		\raise0.25\ht2\box2%
	}%
}

\newcommand{\tw}{\textwidth}

\newcommand{\inv}{^{-1}}

\newcommand{\ope}\odot

\usepackage{manfnt}

\newcommand{\bi}{\begin{itemize}}
	\newcommand{\ei}{\end{itemize}}

\usepackage{amsthm}
\newtheorem{theorem}{Theorem}
\newtheorem{definition}{Definition}

\newtheorem{proposition}{Proposition}

\theoremstyle{definition}

\newcommand\bpro		  {\begin{proposition}}
	\newcommand\epro 		  {\end{proposition}}
\newcommand\bproof			  {\begin{proof}}
	\newcommand\eproof 		  {\end{proof}}

\newcommand\ed            {\end{definition}}

\newcommand\be            {\begin{equation}}
\newcommand\ee            {\end{equation}}
\newcommand\ba            {\begin{aligned}}
\newcommand\ea            {\end{aligned}}
\newcommand\bea{\begin{equation}\begin{aligned}}
	\newcommand\eea{\end{aligned}\end{equation}}
\usepackage{xcolor}

\usepackage{hyperref} 
\hypersetup{final}
\definecolor{darkblue} {rgb}{.1,.5,0.65}
\definecolor{darkgreen}{rgb}{.1,.18,.82}
\hypersetup{colorlinks,
	urlcolor   = blue,         
	linkcolor  = darkgreen,     
	citecolor  = darkblue,    
}
\newcommand{\sss}{\subsubsection}
\renewcommand{\ss}{\subsection}

\renewcommand{\a}{\alpha}
\renewcommand{\b}{\beta}
\renewcommand{\d}{\delta}
\newcommand{\De}{\Delta}
\newcommand{\g}{\gamma}

\newcommand{\s}{\sigma}

\newcommand{\ep}{\varepsilon} %
\renewcommand{\l}{\lambda}

\renewcommand{\t}{\theta}

\renewcommand{\o}{\omega}

\renewcommand{\c}{\chi}
\newcommand{\z}{\zeta}

\newcommand{\bfdel}{{\boldsymbol{\delta}}}

\newcommand{\bfn}{\mathbf{n}}

\newcommand{\bfr}{\mathbf{r}}

\newcommand{\bfv}{\mathbf{v}}

\newcommand{\bfx}{\mathbf{x}}

\newcommand{\bfy}{\mathbf{y}}

\newcommand{\zt}{\mathbb{Z}_2}
\newcommand{\zn}{\mathbb{Z}_N}

\newcommand{\EE}{\mathbb{E}}

\newcommand{\qq}{\qquad}

\newcommand{\zz}{\mathbb{Z}}

\newcommand{\mcc}{\mathcal{C}}

\newcommand{\mcb}{\mathcal{B}}

\newcommand{\mco}{\mathcal{O}}

\newcommand{\mch}{\mathcal{H}}
\newcommand{\mca}{\mathcal{A}}

\newcommand{\mcm}{\mathcal{M}}

\newcommand{\mcv}{\mathcal{V}}

\newcommand{\mcr}{\mathcal{R}}

\newcommand{\sfF}{\mathsf{F}}

\newcommand{\sfM}{\mathsf{M}}

\newcommand{\sfP}{\mathsf{P}}

\newcommand{\sfR}{\mathsf{R}}
\newcommand{\sfT}{\mathsf{T}}

\newcommand{\sfa}{\mathsf{a}}
\newcommand{\sfb}{\mathsf{b}}

\newcommand{\sfm}{\mathsf{m}}

\newcommand{\sfr}{\mathsf{r}}

\usepackage[mathscr]{eucal} %

\renewcommand{\tt}[1]{\mathtt{#1}}

\renewcommand{\k}[1]{|#1\rangle}

\usepackage{dcolumn}

\usepackage{pifont}
\usepackage[normalem]{ulem}

\usepackage[caption=false]{subfig}
\usepackage{enumitem}
\usepackage{amsthm}
\usepackage{physics}
\usepackage{booktabs}
\usepackage{algorithm}
\usepackage{algpseudocode}
\usepackage{bm}
 
\renewcommand\qq{\qquad}

\newcommand{\oEE}{\mathop{\mathbb{E}}}

\newcommand{\tmem}{t_{\rm mem}}
\newcommand{\poly}{{\rm poly}}
\newcommand{\maj}{{\sf maj}}
\newcommand{\dam}{{\sf dam}}
\usepackage{seqsplit}
\usepackage{array}

\begin{document}

	\title{Squeezing codes:  robust fluctuation-stabilized memories}
	
	\author{Ethan Lake}	
	\email{elake@berkeley.edu} 
	\affiliation{Department of Physics, University of California, Berkeley}	
	\author{Sunghan Ro}
	\affiliation{Department of Physics, Harvard University}
    
	\begin{abstract} 
		We introduce families of classical stochastic dynamics in two and higher dimensions which stabilize order in the absence of any symmetry. Our dynamics are qualitatively distinct from Toom's rule, and have the unusual feature of being fluctuation-stabilized: their order becomes increasingly fragile in larger dimensions. One of our models maintains an ordered phase only in two dimensions. The phase transitions that occur as the order is lost appear to realize new dynamical universality classes which are fundamentally non-equilibrium in character. 
	\end{abstract}
	
	\begin{figure*}
		\includegraphics[width=.97\tw]{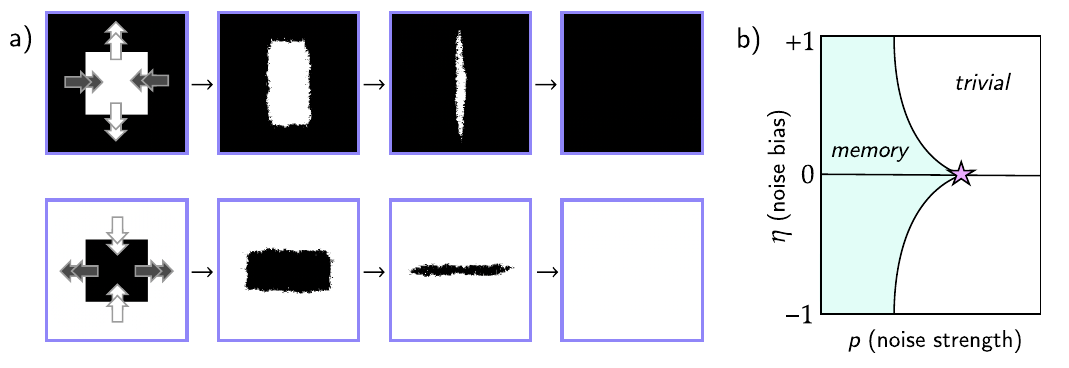}
		\caption{\label{fig:overview} {\sf a)} Error correction in the $\sfR$ squeezing code, shown for noiseless asynchronous updates. The top panels: a $+1$ (white) minority domain is annihilated by being squeezed in the direction indicated by the arrows. The bottom panels show the squeezing experienced by a $-1$ (black) minority domain, which is related to the process in the top panels by the combination of a spin flip and a $\pi/2$ rotation. {\sf b)} the schematic phase diagram of squeezing codes as a function of noise strength $p$ and bias $\eta$. The phase transition out of the memory phase is first order at $\eta \neq 0$ and second order at $\eta = 0$. The pink star at zero bias denotes a new type of  non-equilibrium critical point, which is discussed in Sec.~\ref{sec:crit}.}
	\end{figure*}	
	
	\maketitle
	\tableofcontents
	
	\section{Introduction and summary}\label{sec:intro}

	\ss{Many-body memories} 
	This work studies {\it robust many-body memories:} locally-interacting classical systems which maintain order even when subjected to arbitrary perturbations. These systems retain memory of their initial conditions for thermodynamically long time scales, and are capable of counteracting the effects of a noisy adversarial environment. From a condensed matter standpoint, they define absolutely stable open phases of matter; from an information theory standpoint, they define autonomous error-correcting codes. 
	
	The characterization of robust memories is an ongoing research program at the intersection of non-equilibrium statistical mechanics and condensed matter physics \cite{ponselet2013phase,rakovszky2024defining,sang2024mixed,pajouheshgar2025exploring}, fault-tolerant (quantum) computation \cite{gacs1978one,toom1980stable,cirel2006reliable,gacs2001reliable}, and artificial life \cite{mordvintsev2020growing,challa2024effect,pajouheshgar2024noisenca}. 
	Memory in locally-interacting systems must be achieved through collective, distributed error correction: errors in the information encoded into the initial state must be autonomously detected and corrected by the noisy dynamics itself. Most of the recent progress on understanding how this can be done has been made in the quantum error correction community, largely driven by an interest in parallelized decoding algorithms for quantum memories \cite{Harrington2004,ray2024protecting,balasubramanian2024local, paletta2025high,lake2025fast}. In spite of this, our understanding of the general mechanisms for stabilizing robust many-body memories remains far from complete. 
	
	This work introduces new classes of robust memories, which we define within the framework of probabilistic cellular automata. Cellular automata (CA) are general models of translation-invariant dynamics operating on lattices of discrete spins, with each spin updated according to a particular (potentially non-deterministic) function $\mca$ of the spins in its immediate vicinity. To study the robustness of a particular CA, the most common approach is to define noisy dynamics that performs $\mca$ with probability $1-p$, and suffers noise with probability $p$. When noise acts on a spin, we replace that spin by $\pm1$ with probability $(1\pm\eta)/2$. The parameter $p$ will be referred to as the noise strength, and $\eta$ referred to as the noise bias. Letting $s_{\{\bfr'\sim \bfr\}}$ denote the set of spins neighboring the site $\bfr$, the noisy CA dynamics acts on each spin $s_\bfr$ as
	\be s_\bfr \mt \begin{dcases} \mca(s_{\{\bfr'\sim \bfr\}}) & \text{w/ prob. $1-p$} \\ 
		\pm 1 & \text{w/ prob. $p(1\pm \eta)/2$} \end{dcases}.\ee 
	and the challenge is to find a choice of $\mca$ such that the system retains order in an open subset of the $(p,\eta)$ plane. 
	
	There are many CA known to be capable of retaining (and even processing) information when $p=0$: examples include Conway's game of life, several of Wolfram's elementary CA, the 2D Ising model (with the updates chosen probabilistically in a way that respects detailed balance), among others. 
	Unfortunately, nearly all such CA are fine-tuned: the information they retain is, in almost all cases, rapidly destroyed at any nonzero $p$ (the memory in the Ising model is destroyed in the presence of a nonzero magnetic field, which is analogous to having noise at bias $\eta \neq 0$). Robustness against noise mandates that these systems function as collective error correcting codes: errors must be identified and eliminated locally, despite the fact that any given spin does not have access to the full information that it is trying to protect. 
	
	The goal of this work is twofold: 
	\begin{itemize}
		\item We aim to better understand the mechanisms by which collective error correction can be performed, so as to further the classification of absolutely stable phases of matter, and 
		\item  By studying robust memories from the point of view of statistical physics, we aim to gain new insights into non-equilibrium statistical mechanics.
	\end{itemize}
	We are particularly interested in memories with simple update rules, and which can be analytically proven to have thresholds. 
	
	This work presents families of CA, dubbed {\it squeezing codes}, which meet these criteria. These CA---which we will mostly discuss in $d=2$, but which are defined in any $d\geq 2$---can be rigorously proven to robustly remember a single bit of information. They correct errors in a way qualitatively distinct from other existing memories, and this endows them with physical properties that are quite unusual from the point of view of statistical mechanics. 
	
	\ss{Overview of prior work} 
	
	Before describing squeezing codes, we first give a brief summary of the robust memories known at the time of writing. 
	
	The first example of a robust memory is a 2D CA known as {\it Toom's rule},\footnote{Toom's rule was in fact originally introduced not by Toom, but by the authors of \cite{vasil1969modelling} in 1969, where it was studied numerically. Toom was the first to give a rigorous proof of its noise robustness, which he did in 1974  \cite{toom1974nonergodic} (c.f. the discussion in sec.1.5 of \cite{swart2022peierls}).} which operates by sending each spin to the majority vote of itself and its northern and eastern neighbors, $\mca_{\sf Toom} \, : \, s_\bfr \mt \maj(s_{\bfr},s_{\bfr+\uvx},s_{\bfr+\uvy})$. Toom proved that this dynamics retains memory of the sign of its magnetization for a thermodynamically long time scale \cite{toom1974nonergodic,toom1980stable}. Although Toom's rule is a robust memory, from the perspective of statistical physics, it is rather similar to the 2D Ising model (in a way we make precise later on). 
	
	After this, Gacs produced an extremely complicated memory in 1D based on a technique referred to as ``hierarchical self-simulation'' \cite{gacs2001reliable} (see also \cite{gacs1989self}). A model of 1D local dynamics with an ordered phase has the potential to be very interesting from the point of view of statistical mechanics, but Gacs' automaton is unfortunately so complicated that even implementing it numerically has proven to be a prohibitively daunting task. 
	
	With the exception of thermally stable gauge theories (which exist only in dimensions $d\geq 3$ and lack local order parameters \cite{poulin2019self}), and the generalization of Toom's rule to different lattices given in \cite{kubica2019cellular}, Toom's rule remained the only {\it simple} known robust memory\footnote{In fact, Toom constructed another concrete example in $d=2$, defined by \cite{toom1980stable,fernandez2001non}
		\be s_\bfr \mt \max(\min(s_\bfr, s_{\bfr+\uvx}), \min(s_{\bfr+\uvy},s_{\bfr+\uvx+\uvy})).\ee 
		Apart from a brief mention in  \cite{ponselet2013phase}, this rule appears not to have been studied in detail, perhaps because it lacks any symmetry. The way it corrects errors is somewhat similar to the $\sfR$ squeezing code introduced below.} until a family of new models in $d=2$ dimensions were constructed by one of the authors \cite{pajouheshgar2025exploring}. This work showed that automata which perform majority votes over sufficiently asymmetric collections of sites are robust memories, and used machine learning methods to discover a large number of other candidate examples which were not mathematically proven to be memories, but were suggested to be so from numerics. The majority vote automata are in some sense qualitatively similar to Toom's rule, and the rules discovered with machine learning have the drawback of sometimes being hard to interpret, and not being amenable to analytic study.
	
	\ss{Building non-equilibrium memories by squeezing} 
	
	Let us first recall the central challenge faced when attempting to construct a robust memory. Consider a locally-interacting system which stores a single bit of logical information, which it encodes in the sign of the system's magnetization.\footnote{Other than Gacs' automaton and gauge theories, all currently known robust memories are of this form.} Let $s_\pm$ be the state in which all spins are $\pm1$, and assume that $s_\pm$ are invariant under the dynamics (the $s_\pm$ will be referred to as ``logical states'' in what follows). Consider taking $s_+$ and creating a minority domain of $-1$ spins of radius $R$ somewhere in the system. Let $R\gg 1$ be large, but vanishingly small compared to the system size $L$.  
	Due to noise, there is always a finite nucleation rate per unit area for minority domains of this size, and such droplets will inevitably appear in large systems. Therefore, for the dynamics to realize a robust memory, it must autonomously detect and correct errors introduced by these domains.

	Since $s_\pm$ are invariant under the dynamics, this domain can only be corrected by a gradual contraction of the domain wall at its boundary. 
	If $\eta<0$, so that the noise in the system is biased to favor $-1$, the minority domain will expand ballistically in the absence of error correction. In order to combat this ballistic expansion, the noiseless part of the dynamics must therefore {\it ballistically} erode minority domains. At first glance, it is not obvious how to do this, since the dynamics must act to correct errors as soon as they form: it does not have enough time to establish which side of the domain is the ``inside'' and which is the ``outside''. 
	
	Toom's rule moves domain walls normal to $\uvx-\uvy$ along $-\uvx-\uvy$ while keeping all other domain walls fixed, and Toom proved that this always results in the ballistic erosion of minority domains. There is, however, another way of eroding minority domains, which is simple to describe: by ``squeezing'' them. 
	
	Consider dynamics where 
	\begin{enumerate}
		\item $-1$ spins convert their north and south neighbors to $-1$,
		\item $+1$ spins convert their east and west neighbors to $+1$.
	\end{enumerate}
	What happens to a $-1$ minority domain under this dynamics? Point 1 means that it vertically expands, and point 2 means that it horizontally contracts. This leads to the minority domain being ballistically {\it squeezed}: it steadily becomes taller and narrower, becoming increasingly narrow until its width vanishes. The opposite is true for a $+1$ minority domain, which steadily becomes wider and shorter until its height vanishes. See Fig.~\ref{fig:overview}~$\sfa$ for an illustration, and an interactive demonstration is provided in the Supplementary Material~\cite{supp}. 
	
	To define this dynamics more concretely, we will find it convenient to use nonstandard Boolean notation where a $+1$ spin is treated as $\tt{true}$ and a $-1$ spin is treated as $\tt{false}$. In terms of $\wedge$ ($\tt{and}$) and $\vee$ ($\tt{or}$) operations, one choice of a microscopic update rule implementing this squeezing process is 
	\bea \label{sqzrule_r} \sfR\, : \, s_\bfr(t+1) = \begin{dcases} s_{\bfr+\uvy}(t) \, \wedge \, s_{\bfr}(t) \wedge\, s_{\bfr-\uvy}(t)  & t \,\, {\rm even}  \\ 
		s_{\bfr+\uvx}(t)\, \vee \, s_\bfr(t) \, \vee \,  s_{\bfr-\uvx}(t) & t \, \, {\rm odd} 
	\end{dcases} \eea
	During even time steps $-1$ domains expand vertically, and during odd ones $+1$ domains expand horizontally; together, these updates implement the desired squeezing process. 
	While an update rule defined in terms of $\vee,\wedge$ may appear unusual, we will show in Sec.~\ref{ss:glauber} that it emerges naturally in a system undergoing heat-bath dynamics with respect to a time-periodic Ising model. 
	
	The dynamics written in \eqref{sqzrule_r} is {\it synchronous}, meaning that at each (discrete) time step, all of the spins in the system are updated simultaneously. This is the type of dynamics usually studied in the literature on cellular automata, but from a physical perspective, it is not desirable: it requires that all of the sites in the system, regardless of how far apart they are, have instantaneous access to a shared ``clock'', which tells them when to update. This renders the dynamics non-local (in the sense that it cannot be generated by a local Lindbladian), and requires that the clock be completely reliable, immune to the effects of noise. Both of these features are a departure from the desired features for a robust memory, and as such, we will mostly focus on {\it asynchronous} updates, where each site updates at times dictated by independent Poisson processes (doing so ensures that the dynamics can be described by a continuous time Markov process). For $\sfR$, we make the dynamics asynchronous by having the $\wedge$ and $\vee$ updates chosen randomly with equal probability. The snapshots in Fig.~\ref{fig:overview}~$\sfa$ were generated with updates of this type. 
	
	There are several reasons why $\sfR$ performs error correction in a way qualitatively distinct from Toom's rule. For one, convex minority domains do not always get monotonically smaller as they are corrected: the dynamics can sometimes make errors worse before ultimately correcting them (consider a minority domain of $-1$ spins which is taller than it is wide). Another reason is that it possesses a different symmetry. Let 
	\be \label{xdef} X \, : \, s \mt - s \ee 
	denote the spin-flip operation which exchanges $+1$ and $-1$ spins. Toom's rule is symmetric under $X$, since it is computed using majority votes. The symmetry of $\sfR$ is instead a spin flip {\it combined} with a $\pi/2$ rotation $R_{\pi/2}$, viz. the operation 
	\be \label{mixsym} X_{R_{\pi/2}} = X \circ R_{\pi/2}.\ee 
	As we will see in the following sections, this intertwining of a spin flip with a spatial symmetry has important consequences for several aspects of $\sfR$'s physics. 
	
	In addition to $R_{\pi/2}$, it is also possible to construct similar types of error-correcting dynamics symmetric under $X \circ C$, where $C$ is any spatial symmetry (except, as it turns out, a $\pi$ rotation).  Dynamics with these symmetries also corrects errors by squeezing minority domains, and we will therefore refer to these models as {\it squeezing codes}. 
	
	\ss{The physics of squeezing codes}
	
	We now give an overview of the physics of squeezing codes, which will also function as a summary of our main results. 
	
	\sss{Robust memory phases} 
	
	Our first result concerns the ability of squeezing codes to function as robust memories. We study four classes of squeezing codes in this work, denoted by $\sfR,\sfF,\sfM,\sfT$, which we introduce in Sec.~\ref{ss:setup}. Each class is distinguished by its symmetries: $\sfR$ combines a spin flip with a rotation, $\sfF$ and $\sfM$ combine spin flips with different types of reflections, and $\sfT$ has a pure spin-flip symmetry. In Sec.~\ref{ss:erosion}, we show that under synchronous updates, all squeezing codes are ``eroders'', meaning that they are able to correct errors in initial states in the absence of further transient noise (in the quantum error correction language, they function as ``offline decoders''). $\sfR,\sfF,$ and $\sfM$ remain eroders when the updates are made asynchronous, but $\sfT$ does not. In fact, under asynchronous updates, $\sfT$ very rapidly produces featureless disordered states from almost any initial state, even in the absence of noise. We will return to this point below. 
	
	In Sec.~\ref{ss:synchproof}, we give a rigorous proof that each of $\sfR,\sfF,\sfM$ and $\sfT$ is a robust memory under synchronous updates. This is done using a theorem of Toom \cite{toom1980stable} regarding ``monotonic eroders'', and guarantees that the system functions as a memory at sufficiently small values of the noise strength $p$ (regardless of the bias $\eta$). 
	Making rigorous proofs becomes more complicated for asynchronous updates, and in Sec.~\ref{ss:asynch_numerics} we use Monte Carlo numerics to establish the existence of memory phases in this setting. We find that the  $\sfR,\sfF$, and $\sfM$ squeezing codes remain robust memories under asynchronous updates, with phase diagrams resembling the schematic drawn in Fig.~\ref{fig:overview}~$\sfb$. 

    The intertwining of internal and spatial symmetries in the $\sfR$, $\sfF$, and $\sfM$ squeezing codes gives rise to unusual domain-wall dynamics in the memory phase, which we investigate in Sec.~\ref{ss:langevin}. 
    Such intertwining also appears in many models of living and active matter systems \cite{bowick2022symmetry,fruchart2021non,dinelli2023non,o2006fisher,giometto2021antagonism}, and the error-correcting dynamics of the $\sfF$ and $\sfM$ squeezing codes exhibit clear parallels with the physics of nonreciprocally interacting systems. 
    For example, the magnetization dynamics of the $\sfM$ squeezing code obey the Langevin equation
    \be \partial_t m = D \nabla^2 m + \gamma \partial_x m + \lambda m \partial_y m - r m + u m^3, \ee 
    where we omit the noise term for simplicity. This equation is not invariant under inversion of $x$ or $y$, indicating anisotropic dynamics. As will be shown in Sec.~\ref{ss:langevin}, this leads to different states of the squeezing code exerting direction-dependent nonreciprocal interactions, giving rise to propagating-band behavior reminiscent of active models with nonreciprocal couplings~\cite{dinelli2023non}. 

    In Sec.~\ref{ss:inftquench}, we study how the magnetization evolves following a quench from a random state into the ordered phase. Unlike Toom’s rule and the Ising model, the $\sfR$ squeezing code always rapidly converges to a uniform homogeneous state (whereas Toom’s rule and thermal Ising dynamics have a constant probability of becoming ``stuck'' in configurations with system-spanning strips of minority spins). In contrast, the $\sfF$ and $\sfM$ squeezing codes have a finite probability of quenching into a propagating-band steady state, in which a band of minority spins ballistically circles the system, dissipating only after a thermodynamically long time. The dynamics of each interface resemble the invading front of a species with higher fitness at the boundary between two antagonistic species, which roughens over time~\cite{o2006fisher,giometto2021antagonism}. 
	
	\sss{Fluctuation-stabilized order} 
	
	One of the most interesting aspects of squeezing codes is that they rely crucially on fluctuations to stabilize order. This is explained in Sec.~\ref{sec:meanfield}, where we analyze the squeezing codes in the dynamic mean-field approximation. Recall that this approximation is based on neglecting correlations between spins, and becomes increasingly accurate in the limit of large spatial dimension $d$. Because fluctuations are usually detrimental to the establishment of a state with uniform order, the mean-field treatment {\it over}estimates the tendency towards order. 
	Additionally, since fluctuations are usually stronger in lower dimensions, decreasing $d$ usually decreases its tendency to order. 
	
	Squeezing codes break this intuition, and in fact behave in precisely the opposite fashion: their ordering becomes weaker in higher dimensions, and stronger in smaller ones (they are only defined in $d\geq 2$). The mean-field analysis of Sec.~\ref{sec:meanfield} shows that they are in fact {\it always} disordered in mean-field, regardless of $d$, and it is in this sense that we refer to them as being ``fluctuation-stabilized''. 
	
	This phenomenon was also recently observed in the memories found numerically in \cite{pajouheshgar2025exploring}. In the present work, we show concretely how it arises in the context of squeezing codes (which, unlike the automata of \cite{pajouheshgar2025exploring}, have natural generalizations to $d>2$), and use the cluster variational method (CVM) \cite{kikuchi1951theory,pelizzola2005cluster} to include short-range fluctuations into the naive mean-field analysis. We show that doing this restores the existence of an ordered phase, and produces phase diagrams that match quite well with numerics. The CVM analysis also makes two interesting further predictions: 1) it predicts that the $\sfR$ squeezing code has an ordered phase {\it only} in two dimensions, and is disordered for all $d>2$; and 2) it predicts that if we drop the $s_\bfr$ on the RHS of \eqref{sqzrule_r}---which changes neither the symmetries of the dynamics nor the qualitative way in which it corrects errors---a memory phase is restored in {\it all} dimensions, albeit with a critical noise strength that scales as $p_c \sim 1/d^2$. Numerics are consistent with this prediction.

	\sss{New non-equilibrium critical points} 
	
	Our next results concern the nature of the phase transitions that occur as the noise strength is tuned at zero bias (the pink star in Fig.~\ref{fig:overview}~$\sfb$), which we investigate numerically in Sec.~\ref{sec:crit}. From previous theoretical work, the most reasonable prior is that these phase transitions lie in the model-A dynamic universality class \cite{hohenberg1977theory}, viz. the universality class of the critical Ising model under local, detailed-balance obeying updates. This is primarily due to RG calculations using the $\ep$ expansion, which suggest that critical model-A dynamics is stable against {\it arbitrary} perturbations, even those which break all symmetries and take the dynamics out of thermal equilibrium \cite{grinstein1985statistical,bassler1994critical,tauber2002effects}. As an example of this universality, the critical point for Toom's rule is known to belong to the model-A universality class\footnote{The agreement of the static exponents has been established in the literature, but not the agreement of the dynamic exponent $z$. In Sec.~\ref{sec:crit} we confirm that the value of $z$ indeed matches that of model-A dynamics.}  \cite{makowiec2000study,ray2024protecting},  implying that, despite the ordered phase in Toom's rule requiring the breaking of detailed balance, an effective equilibrium description emerges at the critical point. 
	
	In Sec.~\ref{sec:crit}, we use Monte Carlo numerics to show that, except possibly for  $\sfR$, the critical points of the squeezing codes are {\it not} in the model-A universality class. The static exponents $\nu,\b$ deviate from their model-A values by small but measurable amounts, and the dynamic exponent $z$ is quite different: for $\sfR,\sfF,$ and $\sfM$, we estimate $z\approx 1.942(3), 1.668(19),$ and $1.396(16)$, respectively (c.f. the model-A value of $z_A \approx 2.167$ \cite{adzhemyan2022dynamic,nightingale1996dynamic}). The (strongly) super-diffusive nature of these exponents is of particular interest, as it was recently proven that {\it any} local Markovian dynamics obeying detailed balance must have $z\geq 2$ \cite{masaoka2024rigorous}. Our observation of $z<2$ thus means that the critical points of the squeezing codes define {\it intrinsically non-equilibrium} universality classes. The aforementioned $\ep$ expansion calculations suggest that these critical points are strongly coupled, and describing them theoretically presents an interesting challenge for future work. 
	
	\sss{Synchronicity-protected order} 
	
	Our final results concern the role of synchronicity, where in Sec.~\ref{sec:synch_protection} we consider what happens when deviations from perfect synchronicity are introduced into the dynamics. To do this, we consider a discrete-time model whereby at each time step, any given site has a probability $\a \in [0,1]$ of being updated (see e.g. \cite{fates2007asynchronism,pajouheshgar2025exploring}). $\a=1$ corresponds to perfectly synchronous updates, while the $\a \ra 0$ limit corresponds to perfectly asynchronous ones. Taking $\a = 1-\ep$ for small $\ep$ may be regarded as adding small errors into the global synchronizing clock. 
	
	Recall the $\sfT$ squeezing code referred to above: it successfully operates as a robust memory when $\a=1$, but fails (catastrophically, as we will see) to be a memory when $\a \ra 0$. What then occurs at intermediate $\a$? We prove in Sec.~\ref{sec:memories} that $\sfT$'s memory phase is stable for small $\ep$, and so there must be a synchronicity-induced phase transition at an intermediate critical value of $\a$. In Sec.~\ref{sec:synch_protection}, we discuss this phenomenon and show that the critical point occurs at $\a_c \approx 0.38$. 
	Beyond the $\sfT$ automaton, many other squeezing codes are synchronicity-protected in this sense: this is the case for $\sfR$ in dimensions $d>2$, and in fact {\it all} squeezing codes are synchronicity protected when $d$ is odd. These examples complement the recent constructions of {\it asynchronicity}-protected memories in  \cite{pajouheshgar2025exploring}. 
	
	
	\begin{figure*}
		\includegraphics[width=.85\tw]{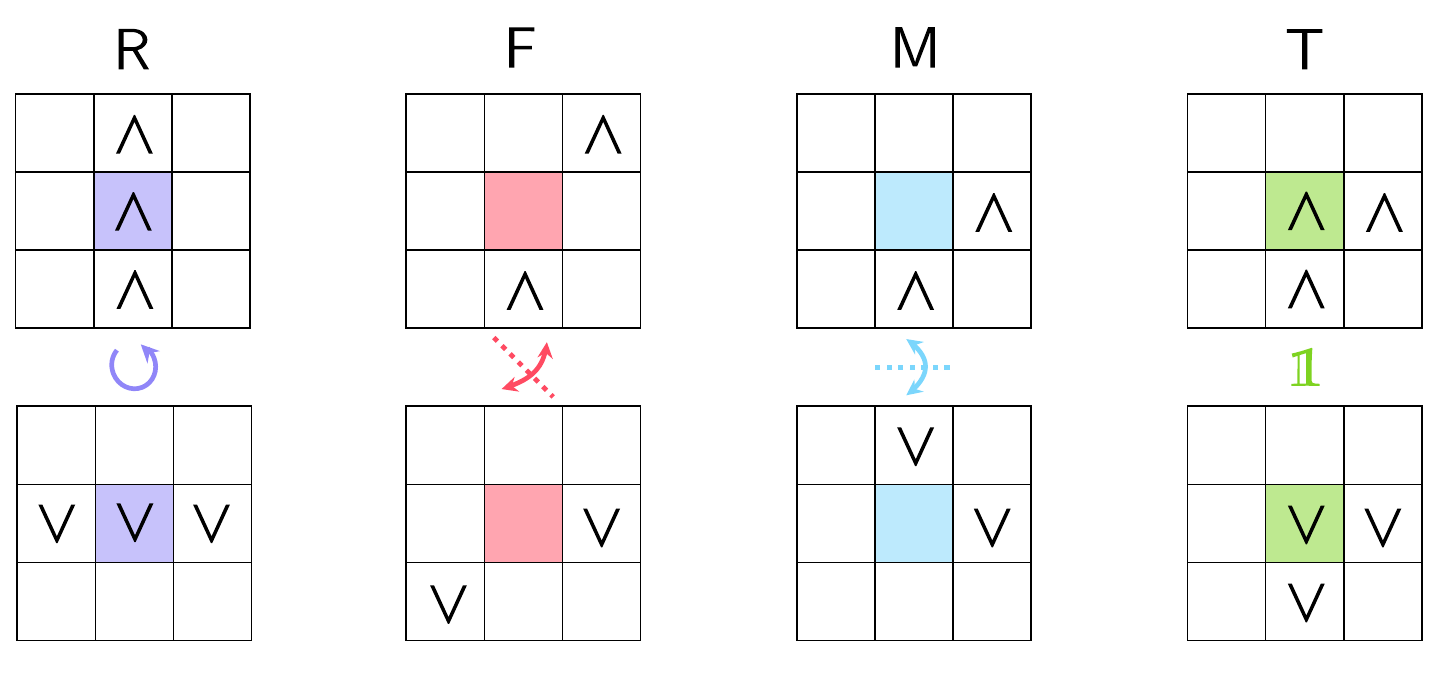} \caption{\label{fig:rules} Schematic definitions of the squeezing codes studied in this work. In each panel, the colored site (square) is updated to be the $\wedge$ (${\tt and}$) or the $\vee$ (${\tt or}$) of the indicated squares, as appropriate.  }
	\end{figure*}
	
	\section{Constructing squeezing codes}\label{sec:construction} 
	
	In this section, we define in detail the dynamics studied in the remainder of the paper. We will specialize to 2D dynamics for now, with generalizations to higher dimensions deferred to Sec.~\ref{sec:meanfield}. 
	
	\ss{Dynamics and symmetries}\label{ss:setup}

	For synchronous updates and in the absence of noise, a general squeezing code $\mca$ is defined by updates of the form (recall that $s=+1$ is associated with \texttt{true} and $s=-1$ is associated with \texttt{false})
	\be \label{general_sqz}\mca \, : \, s_\bfr(t+1) = \begin{dcases} \bigwedge_{\bfdel \in \mcr^\wedge } s_{\bfr+\bfdel}(t) & t \in 2\zz \\ 
		\bigvee_{\bfdel \in \mcr^\vee } s_{\bfr+\bfdel}(t) & t \in 2\zz+1 \end{dcases} \ee 
	where $\mcr^\wedge,\mcr^\vee$ are collections of vectors defining the particular dynamics; for the $\sfR$ squeezing code introduced in the previous section, $\mcr^\vee = \{\uvx,\bfzero,-\uvx\}$ and $\mcr^\wedge = \{\uvy,\bfzero,-\uvy\}$. For asynchronous updates, sites are updated according to independent Poisson clocks, and when an update occurs, $\mca$ performs either a $\vee$ or a $\wedge$ update, with equal probability. This definition ensures that the dynamics generated by $\mca$ can always be associated to a continuous-time Markov process. 
	
	All of the dynamics considered in this work will possess a certain type of $\zt$ symmetry in the absence of noise. This symmetry arises from choosing the regions $\mcr^\vee,\mcr^\wedge$ so that 
	\be \label{rrreln} \mcr^\wedge = C(\mcr^\vee), \qq \mcr^\vee = C(\mcr^\wedge),\ee 
	where $C$ is a symmetry of the square lattice, and is such that $C^2(\mcr^{\wedge/\vee}) = \mcr^{\wedge/\vee}$ (for $\sfR$, $C = R_{\pi/2}$). Letting $X$ denote the spin-flip operation as in \eqref{xdef}, the relation \eqref{rrreln} then means that the dynamics is symmetric under the operation 
	\be \label{xcdef} X_C = X \circ C,\ee 
	which combines a spin flip with a spatial symmetry. 
	The intertwining of internal and spatial symmetries is responsible for many of the most interesting phenomena in active matter (e.g. \cite{solon2015flocking,bowick2022symmetry}), and it will be seen to have similarly important consequences here. 
	
	We will see in Sec.~\ref{sec:memories} that $\mca$ is not a robust memory if $C = R_\pi$, but robust memories can be constructed for all other choices of $C$. In particular, $C$ may be either a $\pi/2$ rotation, a mirror reflection through a coordinate axis, a combination of the two, or the identity. The $\sfR$ automaton provides an example where $C$ is a $\pi/2$ rotation, and now we turn to defining three other automata $\sfF,\sfM,\sfT$ which use the other three possible choices of $C$. 
	
	The first rule $\sfF$ has $C$ equal to a mirror reflection through the line $x=-y$, and is defined by 
	\be \sfF \, : \, s_\bfr(t+1) = \begin{dcases} s_{\bfr+\uvx} \vee s_{\bfr - \uvx -\uvy} & t\in 2\zz  \\ 
		s_{\bfr +\uvx +\uvy} \wedge s_{\bfr- \uvy} & t \in 2\zz+1 \end{dcases} \ee 
	
	The rule $\sfM$ is defined with $C$ a mirror reflection through the $x$ axis: 	
	\be \sfM \, : \, s_\bfr(t+1) = \begin{dcases} s_{\bfr+\uvx} \vee s_{\bfr + \uvy} & t\in 2\zz  \\ 
		s_{\bfr +\uvx} \wedge s_{\bfr - \uvy} & t \in 2\zz+1 \end{dcases} \ee 
	
	Finally, the $\sfT$ rule has $C = \unit$ to be trivial:\footnote{The presence of $s_\bfr$ on the RHS is important: without it, $\sfT$ would be effectively one-dimensional under synchronous updates. It is easy to show that such automata cannot be robust memories.}
	\be \sfT \, : \, s_\bfr(t+1) = \begin{dcases} s_\bfr \vee s_{\bfr+\uvx} \vee s_{\bfr - \uvy} & t\in 2\zz  \\ 
		s_\bfr \wedge 	s_{\bfr +\uvx} \wedge s_{\bfr - \uvy} & t \in 2\zz+1 \end{dcases}.\ee 
	$\sfT$ possesses the same symmetries as Toom's rule, but it will be seen to have very different physics. 
	
	Schematic illustrations of these definitions are given in Fig.~\ref{fig:rules}.

	\ss{Erosion} \label{ss:erosion} 
	
	\begin{figure*}
		\centering
		\includegraphics[width=\tw]{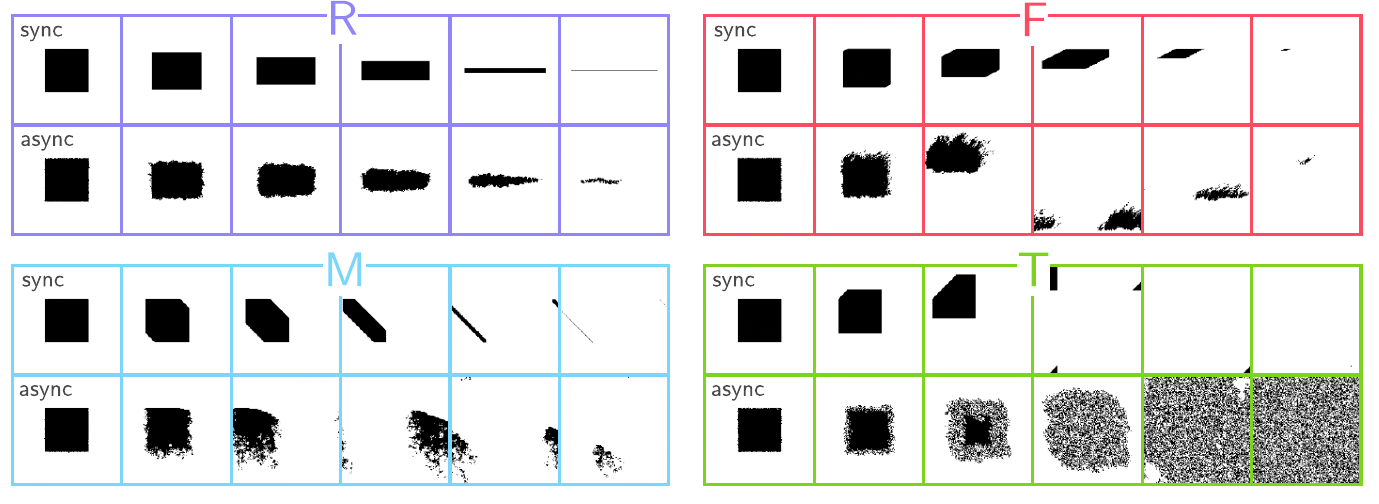}
		\caption{\label{fig:megafig} 
			The erosion of minority domains of $1$s in systems of size $L = 350$ under the different types of squeezing dynamics studied in this paper. Time runs left to right, and different time scales are used in each row. The top rows in each panel show synchronous updates, and the bottom rows show asynchronous updates. The rule $\sfT$ fails to erode the minority domain under asynchronous updates, with the domain boundary rapidly roughening and disordering the system. }
	\end{figure*}
	
	
	To understand how the $\sfR,\sfF,\sfM,\sfT$ squeezing codes perform error correction, we examine how they correct minority domains of errors. As a first step, in Fig.~\ref{fig:megafig} we visualize how noiseless dynamics acts on states containing a single square-shaped minority domain of $+1$ spins, for both synchronous (top rows) and asynchronous (bottom rows) updates (the correction of $-1$ minority domains is related by acting with $X_C$).  
	
	Under synchronous updates, the squeezing codes erase minority domains as follows (see also \href{https://pajouheshgar2025exploring.github.io/2D/floq/}{this interactive demo}): 
	\begin{itemize}
		\item $\sfR$ erases domains by squeezing them along the $\uvy$ direction.
		\item $\sfF$ erases domains by squeezing along $\uvy$ and shearing along $2\uvx+\uvy$.
		\item $\sfM$ erases domains by compression along $\uvx+\uvy$
		\item $\sfT$ erases domains in a way similar to a moving version of Toom's rule: domains are eroded from, and are ballistically displaced towards,  their northwest corners. 
	\end{itemize}
	
	Now consider asynchronous updates. As shown in Fig.~\ref{fig:megafig}, the situation for $\sfR$ is similar to the case of synchronous updates: minority domains roughen due to asynchronicity, but are still squeezed in a similar fashion. The error-correcting dynamics displayed by $\sfF$ are also not greatly modified by breaking synchronicity, although certain boundaries of the minority domain become very rough. For $\sfM$, asynchronicity causes certain domain walls to become extremely rough, with the overall shape of the minority domain being quite different from that for synchronous updates. The difference is most extreme for the $\sfT$ automaton, where asynchronicity causes the boundary of the minority domain to rapidly disorder, with the disordered region ballistically expanding across the entire system. The $\sfT$ automaton thus relies crucially on synchronicity to perform error correction.  
	
	To make our discussion more precise, we define the notion of an {\it eroder} as follows: 
	\begin{definition}[eroders]
		Let $s_{\pm,A}$ be a state which differs from a logical state $s_{\pm}$ on a  contiguous region $A$. A CA rule $\mca$ is a {\it synchronous eroder} if for all finite regions $A$, there is a finite time $t_A$ such that under synchronous updates, the errors on $A$ are erased by $t_A$:\footnote{In the math literature, one normally requires that this hold only for states $s_{-,A}$. Here we require that it hold for both $s_{+,A}$ and $s_{-,A}$.}
		\be \mca^t(s_{\pm,A}) = s_{\pm1} \,\,\forall\,\, t \geq t_A.\ee 
		
		$\mca$ is a {\it linear eroder} if domains are erased in a time proportional to their linear size, viz. if there is a constant $v$ such that if $A$ is contained in an $\infty$-norm ball of radius $R_A$, then 
		\be t_A \leq R_A / v.\ee  
		
		Finally, $\mca$ is said to be an {\it asynchronous eroder} if, under asynchronous updates, the probability that $\mca^t(s_{\pm,A})$ is not equal to $s_{\pm}$ is exponentially small in $t/t_A$; i.e. if there is a constant $c$ such that 
		\be \sfP[\mca^t(s_{\pm,A}) \neq s_\pm] < ce^{-t/t_A}.\ee 
		$\mca$ is analogously said to be an asynchronous linear eroder if there is a constant $v$ such that $t_A \leq R_A / v$.
	\end{definition}
	
	Naively, one might expect that $\mca$ must be a linear eroder if it is to be a robust memory. The argument for this is quite simple, and was essentially given above: consider a state $s_{+,A}$, where $A$ is a large (but finite) round domain. Under noise biased to favor $-1$ spins, the noise will cause this domain to ballistically grow. If the error-correcting dynamics does not cause the domain to ballistically shrink at a larger speed, it will grow and destroy the encoded logical information, and hence linear erosion is required. 
	
	With the present definition of eroder---which distinguishes $s_\pm$ as logical states---this intuition is in fact not true: there exist non-eroding automata which are nevertheless robust memories, as was shown (numerically) very recently in Ref.~\cite{pajouheshgar2025exploring}. These automata do not possess $s_\pm$ as absorbing states, and their logical states are instead probability distributions over similar states with non-extremal magnetization. Nevertheless, we expect that if $\mca$ has $s_\pm$ as absorbing memory states, as is the case in the present work, then linear erosion is indeed necessary for $\mca$ to be a robust memory.
	
	We will rigorously prove in Sec.~\ref{sec:memories} that $\sfR,\sfF,\sfM,\sfT$ are all synchronous linear eroders (as is indicated in the top rows of Fig.~\ref{fig:megafig}), and by a result of Toom, this will be shown to imply that these automata are robust memories under synchronous updates. 
	To establish what happens under asynchronous updates, we turn to numerics.\footnote{For certain types of asynchronous updates, this is not necessary. For example, if the update time intervals at each site are distributed uniformly on $[a,b]$ with $a,b > 0$ (which yields {\it non-Markovian} continuous-time dynamics), it is possible to give a rigorous proof of linear erosion  \cite{active_message_passing}. } Let us define the expected  erosion time as 
	\be \lan t_{\sf erode}(R) \ran = \max_{\pm}\( \oEE \min \{ t \, : \, \mca^t(s_{\pm,R}) = s_{\pm} \}\) ,\ee 
	where the expectation is over random update schedules, and $s_{\pm,R}$ is a state which equals $\mp1$ in the ball $r < R$ and equals $\pm1$ everywhere else. 
	
	\begin{figure}
		\includegraphics[width=.47\tw]{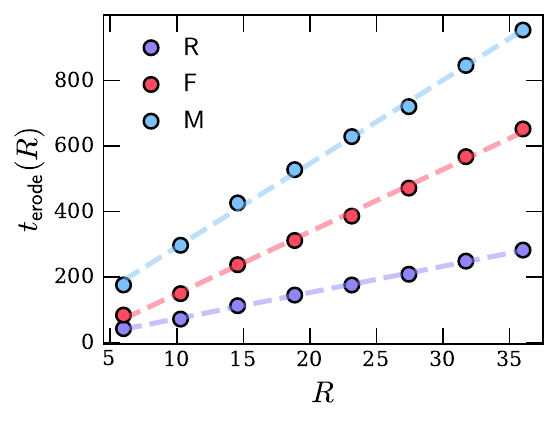} 
		\caption{\label{fig:erosion} Linear erosion under asynchronous updates in the automata $\sfR,\sfF,\sfM$. $\lan t_{\sf erode}(R)\ran$ is the average time taken by noiseless asynchronous dynamics to erode a circular domain of size $R$. Each data point is an average over 1000 samples on systems of size $L = 50R$, and the dashed lines are linear fits.}
	\end{figure}
	
	\begin{figure}
\includegraphics[width=.47\tw]{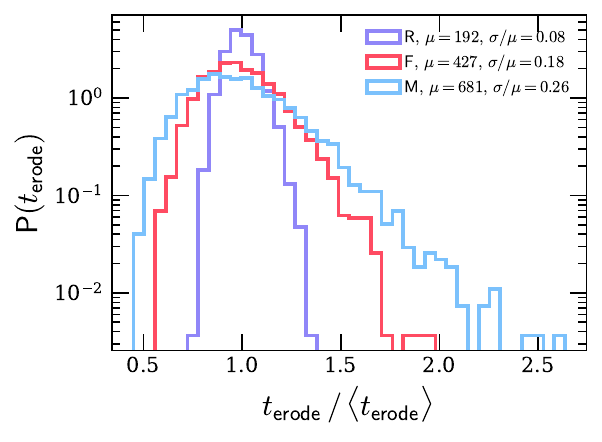} 
\caption{\label{fig:erosion_concentration} Concentration of erosion times. Normalized histograms of erosion times are shown for asynchronously updated automata with initial states containing a domain of radius $R = 25$ in systems of size $L =  1250$. $5000$ runs of the dynamics are used for each automaton. In the caption, $\mu$ is the average erosion time and $\s$ the standard deviation.  }
	\end{figure}
	
	As shown in Fig.~\ref{fig:erosion}, for each of the squeezing codes $\sfR,\sfF$ and $\sfM$, $\lan t_{\sf erode}$ scales linearly with $R$. Fig.~\ref{fig:erosion_concentration} furthermore shows that for these automata, the erosion time is exponentially concentrated about its mean. These automata are thus all asynchronous linear eroders, and from Fig.~\ref{fig:erosion} they can be seen to have erosion velocities ordered as 
	\be \label{erosion_speeds} v_\sfR> v_\sfF > v_\sfM\ee 
	$\sfR$ is thus the ``best'' asynchronous eroder (and from Fig.~\ref{fig:erosion_concentration} has the most tightly concentrated erosion times), while $\sfM$ is the ``worst'' (and has the least tightly concentrated). 
	
	As discussed above, $\sfT$ is observed not to be an asynchronous eroder at all. In fact, $\sfT$ exhibits a synchronicity-induced transition between an eroder at large amounts of synchronicity and something which rapidly reaches a trivial disordered state at small values of synchronicity; the details are explained in Sec.~\ref{sec:synch_protection}.

	\ss{Realization in Floquet Glauber dynamics}	\label{ss:glauber} 
	
	The update rules presented thus far may appear artificial from a physical point of view. However, it turns out that asynchronous dynamics under at least the $\sfR$ automaton admits a natural physical realization as Glauber dynamics under a time-periodic Ising model, which we now briefly explain (similar constructions for the other rules are also possible).
	
	The basic idea is as follows. Consider an Ising model with Hamiltonian 
	\be H = -J_x \sum_\bfr s_\bfr s_{\bfr+\uvx} - J_y \sum_\bfr s_\bfr s_{\bfr+\uvy} - h \sum_\bfr s_\bfr,\ee 
	and consider trying to engineer $H$ to implement the $\vee$ part of $\sfR$ dynamics, whereby $+1$ spins spread along $\pm\uvx$. This can be accomplished by taking $h<0$, and letting $J_x < J_y$. Having $h<0$ favors minority domains of $+1$ spins to expand, but since $J_x < J_y$, the domains expand more rapidly along the $\pm \uvx$ directions than they do along $\pm\uvy$ (growing a flat domain along $\pm\uvx$ requires roughening the domain surface by introducing domain walls normal to $\uvy$). With this choice of $H$, $+1$ domains thus grow with both directions, but expand along $\pm\uvx$ more quickly. 
	To implement the $\wedge$ part of the $\sfR$ dynamics, we then (by $X_{R_{\pi/2}}$ symmetry) need only choose $h>0$ and $J_x > J_y$: with these parameters, $+1$ domains shrink along both directions, but are eroded more quickly along $\pm\uvy$. Alternating between these two choices of $H$ sufficiently quickly thus allows the full squeezing dynamics to be simulated.
	
	To be completely explicit, consider performing Glauber dynamics under the Hamiltonian 
	\bea \label{floquet_glauber} H(t) & = -( \ob J +\wt J \cos(\o t)) \sum_i s_i s_{i + \uvx} \\ 
	& - (\ob J - \wt J \cos(\o t))\sum_i s_i s_{i + \uvy} \\
	& - \wt h \cos(\o t)\sum_i s_i.\eea 
	If we choose the parameters such that 
	\be 2(\ob J-\wt J) < h < 2(\ob J + \wt J),\ee
	then as $T\ra0$,  a $+1$ minority domain will expand along $\pm\uvx$ but stay fixed along $\pm\uvy$ when $\cos(\o t)<0$, while the opposite will occur when $\cos(\o t)>0$. In order for the time-dependent Glauber dynamics to mimic the CA dynamics, the switching should be faster than the timescale for minority bubble nucleation, which can be accomplished by taking $\o \gtrsim {\ob J} e^{-\b \ob J^2/\wt h}$. Numerics reveal the phase diagram under these types of updates to be qualitatively the same as those under asynchronous squeezing updates, and a more in-depth investigation is deferred to future work. 

	\section{Squeezing codes as robust memories} \label{sec:memories}
	
	In this section, we rigorously prove that the squeezing codes introduced in the previous section are robust memories under synchronous updates, and provide results from numerics which show that all but $\sfT$ are robust memories under asynchronous updates. 
	
	To make our discussion precise, we need two definitions: 
	
	\begin{definition}[memory time]
		The {\it memory time} $\tmem$ is the expected time by which the magnetization relaxes to half of its initial value, when initialized in a logical state: 
		\be \tmem =  \EE \min_\pm \{ t\, : \, \pm m[\mca^t(s_\pm)] < 1/2 \},\ee 
		where $m(s)= L^{-2}\sum_\bfr s_\bfr$ is the magnetization 
		and the expectation value is over noise realizations and, if appropriate, patterns of asynchronous updates. 
	\end{definition}

	
	We then define robust memories as dynamics where that $\tmem$ diverges in the thermodynamic limit $L\ra\infty$, even in the presence of a small amount of arbitrarily-biased noise: 
	\begin{definition}[robust memory]\label{def:memory}
		A CA rule $\mca$ is a {\it robust memory} if there is an open ball in the $(p,\eta)$  such that for all noise parameters in this ball,\footnote{Note that we do not place a restriction on how fast $\tmem$ must diverge with $L$, and some readers may be concerned with allowing for a weak divergence like $\tmem \sim \log L$. As an example of why one might be concerned, recall that under dynamics consisting of trivial i.i.d bit flip noise---which certainly should not count as a robust memory---a global majority vote can recover memory of a state's initial magnetization up to time $\log L$. However, our definition of $\tmem$ requires that the average magnetization be separated from the trivial value of $m=0$ by a {\it constant} amount (here $1/2$), and with this requirement, memory is lost under i.i.d noise in {\it constant} time. For this reason, we do not impose a particular requirement on how quickly $\tmem$ diverges with $L$.}
		\be \lim_{L\ra \infty} \tmem = \infty,\ee 
		with the radius of the ball remaining finite in this limit.
	\end{definition} 
    This means that in order to be a robust memory, the dynamics must be able to maintain memory for thermodynamically long times, even when {\it all} the symmetries and constraints it possesses are (weakly) broken. 
    
	A simpler definition would be to say that $\mca$ is a robust memory if $\lim_{L\ra\infty} \tmem = \infty$ as long as $p<p_c$ for some $\eta$-dependent constant $p_c$. This definition would be acceptable for the memories studied in this work (which have phase diagrams like the one in Fig.~\ref{fig:overview}~$\sfb$), but not in general: the ``open ball'' phrasing in def.~\ref{def:memory} is needed because there are some memories for which this is not true, viz. which are ordered only in an intermediate nonzero range of $p$ \cite{pajouheshgar2025exploring,marsan2025perturbed}. 
	
	\ss{Synchronous updates} \label{ss:synchproof}
	
	For synchronously updated squeezing codes, we are able to give a rigorous proof of robustness: 
	\begin{theorem}\label{thm:robust}
		The squeezing codes $\sfR, \sfM,\sfF,\sfT$ are robust memories under synchronous updates. 
	\end{theorem}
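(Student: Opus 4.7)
The plan is to deduce Theorem~\ref{thm:robust} as an immediate corollary of Toom's theorem on monotonic eroders \cite{toom1980stable}: any deterministic, translation-invariant, monotonic cellular automaton on $\{-1,+1\}^{\zz^d}$ which linearly erodes finite perturbations of both logical states admits a $p_c>0$ such that for all $p<p_c$ and all $|\eta|\leq 1$ the noisy dynamics has two distinct stationary measures whose magnetizations stay bounded away from zero uniformly in $L$. This region has positive area in the $(p,\eta)$ plane, so it satisfies the open-ball requirement of Definition~\ref{def:memory} and immediately yields $\lim_{L\to\infty}\tmem=\infty$.

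First I would verify monotonicity with respect to the coordinate-wise order $s\leq s'$ iff $s_\bfr\leq s'_\bfr$ for all $\bfr$. Because $\wedge$ and $\vee$ are both monotone Boolean operations, each of the even- and odd-time layers in \eqref{general_sqz} is a monotone map, and hence so is the period-two composition $\mca^2$. I would apply Toom's theorem to $\mca^2$ viewed as a single deterministic translation-invariant monotone CA; the noise applied between the two sub-steps is absorbed into the effective noise parameter at the cost of an overall factor of two, which only shifts $p_c$ by a constant.

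Second, I would promote the geometric picture in Section~\ref{ss:erosion} (top rows of Fig.~\ref{fig:megafig}) to a rigorous linear-erosion bound for $\mca^2$, by exhibiting an affine functional on the minority region whose value drops by a positive constant per application of $\mca^2$. For $\sfR$ the vertical extent of a minority $-1$ domain decreases by two per $\wedge$-step and the horizontal extent of a minority $+1$ domain decreases by two per $\vee$-step, giving $t_A\leq R_A/v$ with an explicit $v>0$; for $\sfF$ and $\sfM$ one takes a linear functional aligned with the mirror axis of $C$ and adapted to the non-axis-aligned geometry of $\mcr^{\wedge/\vee}$; and for $\sfT$ one uses the sum of coordinates along the northwest direction, mirroring Toom's original treatment of the NEC rule. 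Because all four rules satisfy the intertwined symmetry $X_C$ of \eqref{xcdef}, it suffices to prove erosion for one sign of minority domain: if $\mca^t(s_{+,A})=s_+$ then applying $X_C$ to both sides and using $[\mca,X_C]=0$ gives $\mca^t(s_{-,C(A)})=s_-$, and translation invariance together with the fact that $C$ is a lattice symmetry covers all finite $A$.

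The main obstacle is picking the correct potential for $\sfF$ and $\sfM$: the regions $\mcr^\wedge,\mcr^\vee$ are not axis-aligned and the symmetry $X_C$ is intertwined rather than a pure spin flip, so a naive coordinate functional will not decrease monotonically under a single sub-step; one has to choose the functional so that decreases on the $\vee$-step compensate any increases on the $\wedge$-step and leave a net positive drift on $\mca^2$. Once this drift is in hand for all four rules, Toom's theorem delivers the claim with essentially no further work.
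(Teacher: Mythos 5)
Your overall strategy is the same as the paper's: reduce Theorem~\ref{thm:robust} to Toom's monotone-eroder theorem, note that monotonicity is immediate because $\wedge$ and $\vee$ are monotone Boolean operations, prove linear erosion rule by rule, and use the $X_C$ symmetry to transfer erosion to the dual/opposite-sign case. However, the substantive content of the proof is precisely the per-rule erosion verification, and this is where your sketch has concrete problems. For $\sfR$ your claims are backwards: under the $\wedge$ step of \eqref{sqzrule_r} the $-1$ spins spread vertically, so the vertical extent of a minority $-1$ domain \emph{grows}; what shrinks is the vertical extent of a minority $+1$ domain under $\wedge$ (and the horizontal extent of a minority $-1$ domain under $\vee$). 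This is a fixable labeling slip, but it matters because the quantity you propose to monitor for a given sign of domain would not decrease. For $\sfF$ and $\sfM$ your proposed functionals (``aligned with the mirror axis of $C$'') are not the ones that work, and you explicitly leave this step unresolved; since Toom's theorem does all the remaining work, this is the actual gap. The paper closes it by a domain-wall analysis: by monotonicity one may enlarge any finite error region to a square, and the damage set is then confined inside an intersection of half-spaces whose boundaries move at constant speed, as read off from how each rule acts on single infinite domain walls. Concretely, for $\sfR$ the damage sits between two horizontal half-planes closing at unit speed (empty by $t>R/2$); for $\sfF$ one horizontal boundary is fixed and the other moves (empty by $t>R$); for $\sfM$ the confining walls are normal to $\uvx+\uvy$; and for $\sfT$ one needs three half-spaces (anti-diagonal, horizontal, and vertical) whose intersection empties by $t\sim 2R$, so a single ``northwest coordinate'' functional as in Toom's NEC rule is not enough there either.

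Two smaller points. First, the identity $[\mca,X_C]=0$ is not literally true for the time-periodic rule: $X_C$ exchanges the $\wedge$ and $\vee$ layers, so it commutes with the dynamics only up to a half-period time shift; the paper phrases the needed statement as ``the symmetry implies that the dual automaton $\mca^\neg(s)=-\mca(-s)$ is an eroder whenever $\mca$ is,'' which is the form in which Toom's hypothesis must be checked. Second, your device of applying Toom's theorem to the period-two composition $\mca^2$ and absorbing the interleaved noise into an effective noise strength is reasonable and consistent with the paper's convention of grouping a $\vee$ and a $\wedge$ update into one time step, so no objection there; but the theorem's hypotheses still have to be verified for all four rules, and as written your erosion step does not yet do so.
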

	The proof is a consequence of a general result of Toom \cite{toom1980stable} about asynchronous eroders, and to streamline the presentation, it is deferred in its entirety to App.~\ref{app:proof}. 
	
	As a complementary result, we also show that squeezing codes with $C = R_\pi$ are {\it not} robust memories:
	\begin{theorem}\label{thm:nonrobust}
		If $\mca$ is a squeezing code with $\mcr^\vee = R_\pi(\mcr^\wedge)$, then $\mca$ is not a robust memory.  
	\end{theorem}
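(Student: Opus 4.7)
The plan is to exploit the hypothesis $\mcr^\vee = -\mcr^\wedge$ to show that the noiseless two-step update is a morphological opening, construct an explicit finite $-1$ configuration that is fixed by the dynamics, and conclude via the converse of Toom's monotone-eroder theorem that the rule cannot be a robust memory.

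\textbf{Step 1 (reduction to opening).} I would describe each state by its $+1$-region $A(s) = \{\bfr : s_\bfr = +1\} \subseteq \zz^2$. The $\wedge$-update is the Minkowski erosion $A \mapsto A \ominus \mcr^\wedge := \{\bfx : \bfx + \mcr^\wedge \subseteq A\}$ and the $\vee$-update is the dilation $A \mapsto A \oplus (-\mcr^\vee)$. The hypothesis $\mcr^\vee = R_\pi(\mcr^\wedge) = -\mcr^\wedge$ makes $-\mcr^\vee = \mcr^\wedge$, so the even-time two-step update is the morphological opening
\begin{equation}
	\mca^2(A) = (A \ominus \mcr^\wedge) \oplus \mcr^\wedge,
\end{equation}
which is idempotent and contractive ($\mca^2(A) \subseteq A$); the odd-time two-step is dually the closing by $\mcr^\wedge$, satisfying $A \subseteq \mca^2_{\rm odd}(A)$.

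\textbf{Step 2 (non-erosion).} I would plug in $A = \zz^2 \setminus \{\bfzero\}$, i.e., $s_+$ with a single $-1$ at the origin. A direct computation gives $A \ominus \mcr^\wedge = \zz^2 \setminus (-\mcr^\wedge)$, from which $\bfr \in \mca^2(A)$ iff there is some $\bfy \in \mcr^\wedge$ with $\bfy - \bfr \notin \mcr^\wedge$. No such $\bfy$ exists at $\bfr = \bfzero$ (every $\bfy \in \mcr^\wedge$ satisfies $\bfy - \bfzero \in \mcr^\wedge$), while such a $\bfy$ does exist at every $\bfr \neq \bfzero$, since the finite set $\mcr^\wedge$ is never invariant under a nonzero translation. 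Hence $\mca^2(A) = A$: a lone $-1$ in $s_+$ is an exact fixed point of the two-step dynamics, and by translation invariance every sparse family of $-1$-singletons is fixed. The dual (closing) statement makes single $+1$-defects in $s_-$ fixed under the time-shifted two-step. The rule is therefore not a linear synchronous eroder.

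\textbf{Step 3 (no robust memory).} For monotone local CAs with $s_\pm$ as absorbing states, linear synchronous erosion is necessary as well as sufficient for a robust memory phase, i.e., the converse direction of Toom's theorem \cite{toom1980stable} used (in its forward direction) to prove Theorem~\ref{thm:robust}; compare the discussion in Sec.~\ref{ss:erosion}. Since step~2 produces indestructible minority defects from both logical initial conditions, the rule fails this necessary condition, and combined with the $X_{R_\pi}$-symmetry $(s,\eta) \mapsto (X(s), -\eta)$ that pairs the two initial states in the min defining $\tmem$, no open ball in $(p,\eta)$ can lie in the memory region.

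The main obstacle is that the converse direction invoked in step~3 is not fully rigorous in the literature—the paper itself labels it as an expectation in Sec.~\ref{ss:erosion}. A self-contained substitute is a direct noise-accumulation argument at bias $\eta \le 0$: on the even-time subsequence one couples the noisy dynamics with an independent per-site two-state Markov chain of transition rates $(p(1-\eta)/2,\, p(1+\eta)/2)$ and uses the contractivity of opening to show that the $-1$-marginal stochastically dominates the chain, whose stationary $-1$-density is $(1-\eta)/2 \ge 1/2$ and which equilibrates in time $O(1/p)$ independent of $L$. The subtle point requiring care is that an isolated $\vee$ half-step can locally heal a $-1$ via a noise-induced $+1$ neighbor, so the stochastic domination must be set up at the granularity of full two-step cycles rather than individual half-steps.
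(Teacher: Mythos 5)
Your Steps 1 and 2 are correct and form a genuinely different route from the paper's argument. The paper proves the theorem by observing that the combined symmetry $X\circ R_\pi$ fixes every straight domain-wall state $d^\pm_o$ (if a wall moved with velocity $\bfv$, the symmetry would force $\bfv=-\bfv$), then argues that the dynamics can therefore only act at corners, making erosion curvature-driven and thus too slow to compete with ballistic growth under biased noise. You instead reduce the even-time two-step to a morphological opening by $\mcr^\wedge$, exploit its contractivity and idempotency, and exhibit an explicit non-erodable configuration—a single $-1$ defect in $s_+$—which you verify is an exact two-step fixed point by a clean set-theoretic calculation (using that a finite set cannot be invariant under a nonzero translation). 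Your observation is stronger and more explicit in one sense (an actual fixed configuration, not just fixed infinite walls), while the paper's observation connects more directly to the physical picture of curvature-driven coarsening; both give the same intuition that noiseless erosion is absent.

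Your Step 3 faces the same difficulty as the paper's proof, and you are right to flag it. The paper is explicitly ``slightly schematic'' and concludes with a ``standard arguments'' appeal that curvature-driven erosion cannot be a robust memory; you appeal to the (nowhere rigorously established) converse of Toom's eroder theorem, and then sketch a stochastic-domination substitute. Your substitute sketch is on the right track but the domination claim needs more than you state: the odd-time two-step is a closing (expansionary on the $+1$ set), so the noiseless dynamics is not monotone decreasing on $A$ over all steps, only on the even-time subsequence after the first opening; moreover the opening, applied to a general noisy configuration rather than $s_+$ minus a point, is not simply the identity and can actually \emph{remove} isolated defects that noise created at an odd time (the intervening $\vee$ half-step dilates $+1$ before the next $\wedge$ arrives). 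So the comparison chain you couple to should be defined carefully only on the even-time marginal, with an explicit argument that the $-1$-set observed at even times stochastically dominates the chain. As it stands, Step 3 is at about the same level of rigor as the paper's Step 3—an honest sketch rather than a proof—so this is a fair match, not a gap relative to the paper.
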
 
	This is similar to one of the results in \cite{pajouheshgar2025exploring}, which showed that automata which perform majority votes over collections of sites invariant under $R_\pi$ do not define stable memories. The proof is likewise deferred in its entirety to App.~\ref{app:proof}. 
	
	Because we can rigorously show that squeezing codes have ordered phases under synchronous updates, we will not provide detailed numerics of their performance in this setting. In general though, the threshold noise strengths for synchronous updates are (sometimes significantly) larger than for asynchronous updates. As an example, we find that $\sfR$ has a threshold of $p_{c,synch}\approx 15\%$ under synchronous updates, compared to $p_{c,asynch}\approx 3\%$ under asynchronous updates (to be demonstrated shortly). 
	
	\begin{figure*}
		\centering
		\includegraphics[width=.4\tw]{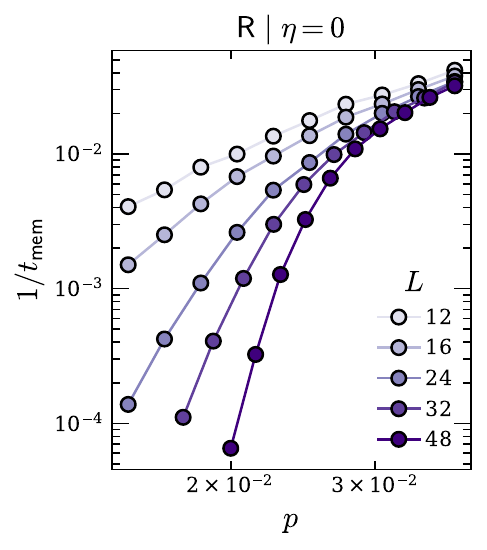}	\includegraphics[width=.4\tw]{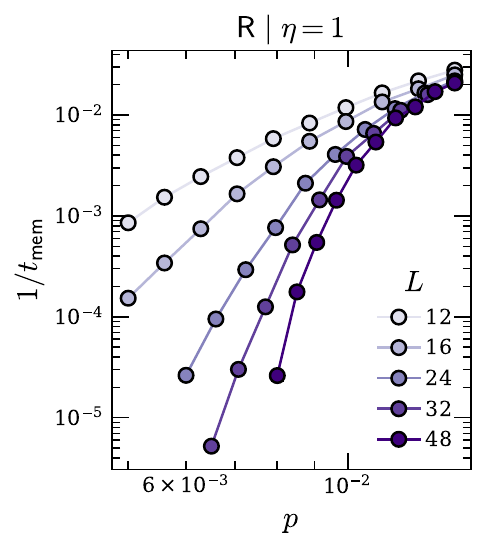}
		\caption{Phase transitions in the memory time under asynchronous $\sfR$ dynamics, at zero bias (left) and maximal bias (right). Error bars are smaller than the data points.}
		\label{fig:memory_fig}
	\end{figure*}
	
	
	
	\begin{figure*}
		\centering 
		\includegraphics[width=.4\tw]{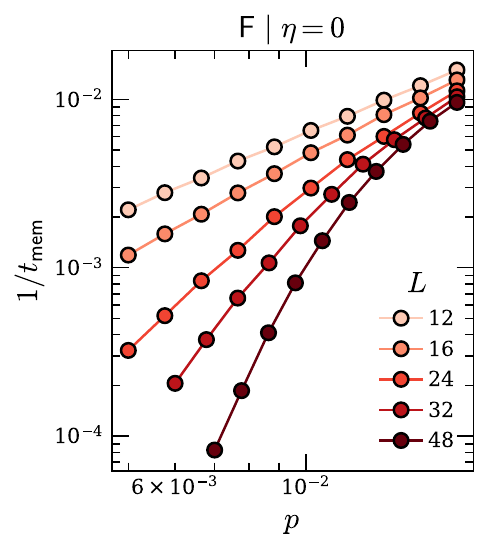} 
		\includegraphics[width=.4\tw]{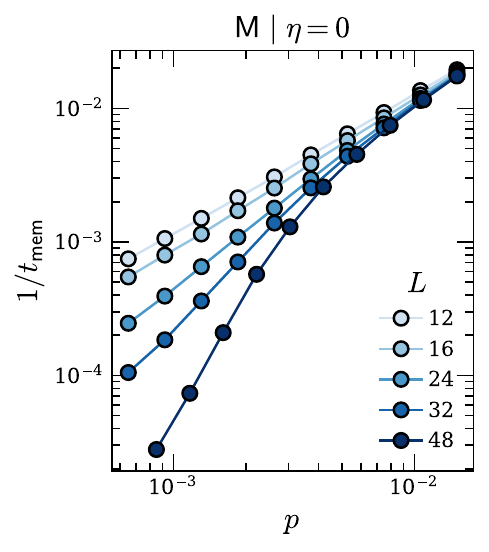}
		\caption{\label{fig:memory_fig2} Phase transitions in the memory time under asynchronous $\sfF$ dynamics (left) and asynchronous $\sfM$ dynamics (right), both at zero bias.}
	\end{figure*}
	
	

	\ss{Asynchronous updates}  \label{ss:asynch_numerics}
	
	\begin{table}[htbp]
		\centering
		\renewcommand{\arraystretch}{1.2}
		\begin{tabular}{c@{\hspace{1em}}c@{\hspace{1em}}c@{\hspace{1em}}c@{\hspace{1em}}c}
			
			$p_c$ (\%)& ${\sf R}$ & ${\sf F}$ & ${\sf M}$  \\
			\toprule 
			$\eta=0$ & 2.5 & 1.0  &  0.3
			\\ $\eta = 1$ & 1.1 & 0.4 & 0.1 \\ 
			\midrule
			\bottomrule
		\end{tabular}
		\caption{Approximate thresholds (in percent) of the $\sfR,\sfF,\sfM$ squeezing codes under asynchronous updates, for unbiased noise ($\eta=0$, top row) and maximally-biased noise ($\eta=1$, bottom row). These estimates are based on a computation of $\tmem$; more precise estimates at $\eta =0$ will be given in Sec.~\ref{sec:crit}. }
		\label{tab:thresholds}
	\end{table}
	
	Theorem~\ref{thm:robust} does not tell us about the robustness of asynchronously updated squeezing codes. Indeed, as the failure of $\sfT$ to be even an asynchronous eroder indicates, the breaking of synchronicity can have substantial effects in general, and it is not a priori obvious that $\sfR,\sfF,$ and $\sfM$ remain memories once synchronicity is broken. 
	
	To address this question, we use Monte Carlo numerics to estimate $\tmem$ for asynchronously updated $\sfR,\sfF,\sfM$ dynamics, excluding $\sfT$ since it is not an asynchronous eroder. 
	
	Before presenting the results, it is useful to first make some predictions about how $\tmem$ should scale with $p$ and $L$. For robust memories in the ordered phase, a loss of the initial magnetization requires the spontaneous nucleation of an extensively large minority domain. For each of $\sfR,\sfF,\sfM$, introducing a homologically nontrivial minority domain by flipping spins along an appropriately-oriented strip circling a cycle of the torus can always create an error that grows to span the entire system. As examples, for $\sfR$ and $\sfF$, inserting a vertical homologically-nontrivial strip of $+1$ spins into $s_-$ is guaranteed to produce a magnetization reversal under noiseless dynamics (and likewise a horizontal strip of $-1$ spins into $s_+$). For $\sfM$, inserting a homologically-nontrivial strip of $+1$ spins into $s_-$ similarly produces a reversal if the strip has boundaries parallel to $\uvx+\uvy$. Since these minority domains involve a number of flipped spins proportional to $L$, and since $\tmem$ is independent of $L$ in the disordered phase, we generically expect 
	\be \tmem = f(p) \cdot (\max(p_c/p,1))^{\a L},\ee 
	where $\a$ is a positive constant, $f(p)$ is a polynomial in $p$, and $p_c$ is the critical noise threshold. For the squeezing codes, the ordering of erosion speeds in \eqref{erosion_speeds} suggests that 
	\be \label{pc_ordering} p_{c,\sfR} > p_{c,\sfF} > p_{c,\sfM}.\ee 
	
	These predictions are consistent with numerical calculations of the memory time, shown in Figs.~\ref{fig:memory_fig}~and~\ref{fig:memory_fig2},  where $p_c$ here can be roughly identified as the point where $\tmem$ begins to increase rapidly with $L$. Fitting the scaling of $\tmem$ in the low-$p$ regime gives values of $\a$ at zero bias ranging from $\a\approx 1/3$ for $\sfR$ to $\a\approx 1/20$ for $\sfM$. The estimated values of $p_c$  in tab.~\ref{tab:thresholds} are consistent with \eqref{pc_ordering} (more precise estimates at $\eta=0$ will be given in Sec.~\ref{sec:crit} by studying cumulants of the magnetization). For each rule, we find that $p_c|_{\eta = 0}$ is approximately twice that of $p_c|_{\eta =1}$. This is roughly expected, since moving from $\eta = 0$ to $\eta=1$ at fixed $p$ doubles the number of noise events that orient spins against the majority. 
	
	\ss{Boundary conditions} 
	
	We close this section with some brief comments on the importance of boundary conditions. In all but this subsection we work with periodic boundary conditions, and this is the context within which theorem~\ref{thm:robust} was proved. In general, changing boundary conditions can significantly change how robust the memory is. As an example, consider Toom's rule, which becomes a much poorer memory under open boundary conditions (OBC).\footnote{For spins on the boundary we truncate the majority vote to only those sites within the system, and let the majority $\maj(\{a,b\})$ of two spins be a random choice between $a$ and $b$. } To see this, consider a system initialized in $s_-$ and subjected to weak noise of maximal bias $\eta=1$. As soon as the spin at the top-right corner of the lattice is subject to noise, it can never flip back to $-1$. After it is flipped, its west and south neighbors also can never return to $-1$ after flipping to $+1$. This means that the system will always reach a state with positive magnetization after time $O(L)$, giving $\tmem = O(L)$. Since our definition of a robust memory only required that $\tmem$ diverge as $L\ra\infty$, Toom's rule with OBC still counts as a robust memory---but it has a memory time exponentially shorter than with PBC.\footnote{We thank Charles Stahl for helpful discussions on this point.}
	
	For $\sfT$, the consequences of adopting OBC are similar to Toom's rule. In contrast, the squeezing codes $\sfR,\sfF,$ and $\sfM$ are more robust to adopting OBC. For these squeezing codes, boundary spins behave in essentially the same way as bulk spins, since there is no boundary site whose truncated interaction is trivial (like the northeast corner in Toom's rule). To guarantee a logical bit flip from $s_-$ to $s_+$ under weak noise with $\eta=1$, at least $L$ spins must be flipped, giving the same $\tmem \sim p^L$ scaling as was observed with PBC in the previous subsection.
	
	\section{Coarsening dynamics}\label{sec:coarsening}
	
	In this section, we investigate how the squeezing codes correct errors by studying their coarsening dynamics. We consider both the dynamics of isolated domains in the ordered state, as well as the dynamics of the magnetization following a quench from large $p$ into the memory phase. We will restrict to asynchronous updates in the entirety of this section, and since $\sfT$ is a  synchronicity-protected memory, we will restrict our attention to $\sfR,\sfF,$ and $\sfM$. 
	
	Since these models are robust memories, the dynamical exponent in the ordered phase is always $z=1$, and the coarsening dynamics is always ballistic (at least for states with only topologically trivial minority domains). This follows from the argument sketched earlier: minority domains must be ballistically eroded if they are to have a chance of being corrected in the presence of biased noise. The way that coarsening happens, however, is different in different squeezing codes, as we now explain.

	\ss{Langevin equations and domain wall dynamics} \label{ss:langevin}

	\begin{figure*}
		\includegraphics[width=.98\tw]{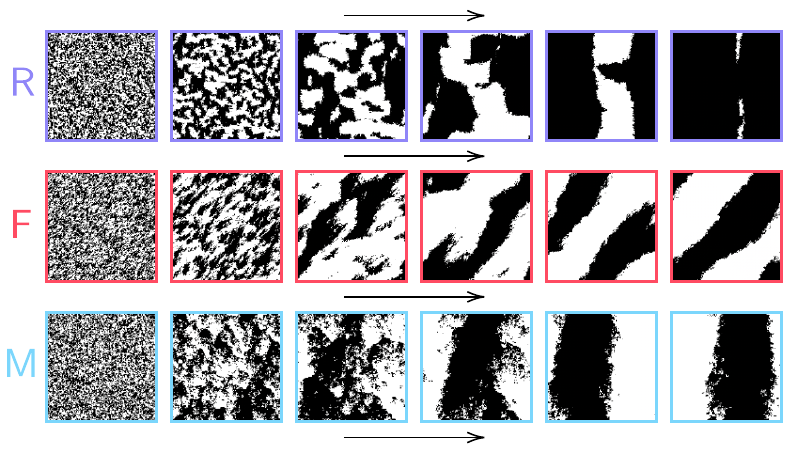}
		\caption{\label{fig:coarsening_histories} Coarsening dynamics from random initial states in systems of size $L=350$; time runs left to right. The stripes in the final panels for the $\sfF$ and $\sfM$ rules move continuously to the southeast (for $\sfF$) and the east (for $\sfM$) and are stable for thermodynamically long time scales.  }
	\end{figure*}
	
	
	We begin by using general symmetry principles to write down differential equations describing the evolution of the coarse-grained magnetization field $m$ in the memory phase under asynchronous squeezing dynamics.
	Keeping the most relevant symmetry-allowed terms that are needed to properly describe the dynamics, and omitting noise terms to lighten the notation, we have 
	\bea \label{langevins} \sfR \, : \, \p_t m & = h + \l m(\D_x^2 - \D_y^2)m - \frac{\d F_A[m]}{\d m} \\ 
	\sfF\, : \, \p_t m & = h + \g (\D_x-\D_y) m + \nu m(\D_x+\D_y) m  \\ & \qq + \l m(\D_x^2 - \D_y^2)m - \frac{\d F_A[m]}{\d m}  \\ 
	\sfM \, : \, \p_t m & = h+ \g \D_x m + \nu m\D_y m   + \l m \D_x \D_y m - \frac{\d F_A[m]}{\d m} 
	\eea 
	where $\g,\l$ are $p$-dependent constants, $h\propto \eta$, and 
	\be F_A[m] = \frac K2 (\nabla m)^2+ V[m]\ee 
	is the free energy of model-A dynamics \cite{hohenberg1977theory},\footnote{A priori the $(\D_xm)^2$ and $(\D_y m)^2$ terms in $F_A$ will not have equal coefficients for $\sfF$ and $\sfM$, but it will not be important to explicitly account for this.} with the interaction chosen as e.g.  $V[m] = \frac r2 m^2 + \frac u4 m^4$. For $\sfF$ and $\sfM$, the reason for keeping the terms proportional to $\l$---which are naively less relevant than those proportional to $\nu$---will become clear below when we discuss domain wall velocities. 	
	
	In the case of the $\sfR$ rule, this Langevin equation can in fact be derived microscopically by performing a Magnus expansion on the time-dependent Glauber dynamics model of \eqref{floquet_glauber}; the details are provided in App.~\ref{app:magnus}. 
	
	None of these Langevin equations can be obtained by varying a free energy, $i.e.$ there is no functional $F_{\rm eff}[m]$ such that $\p_t m \neq -\frac{\d F_{\rm eff}[m]}{\d m},$ {\it even} if we allow $F_{\rm eff}[m]$ to an arbitrarily non-local polynomial of the $m$. This can be derived from observing that the $\g,\l$ terms on the RHS of these Langevin equations all fail the Helmholtz integrability test, and is a formal way of showing that the type of error correction performed by the squeezing codes is not possible in equilibrium. 
	
	These Langevin equations can be used to understand the dynamics of domain walls in the memory phase. The most basic aspect of this concerns how large straight domain walls move under the dynamics.	
	To this end, consider a state with a single straight domain wall of infinite extent, with the domain wall normal to the direction $\bfn = (\cos(\theta),\sin(\theta))$ (with the orientation chosen so that $\bfn$ points in the direction of positive magnetization). By translation invariance, the dynamics must move the domain wall along $\bfn$ with a (potentially negative) angle-dependent velocity $v_\theta$, the functional form of which determines the dynamics of domain walls up to terms suppressed in the domain wall curvature. 
	
	To determine $v_\theta$, we substitute the functional form
	\be m(\bfr,t) = \sfm(\bfr \cdot \bfn - tv_\theta)\ee  
	into the Langevin equation, where the function $\sfm(x)$ is odd in $x$,\footnote{For general directions $\bfn$, the lack of symmetry means that $\sfm$ needn't be odd in general. Symmetry does however dictate that $\sfm$ is odd when $\bfn \in \{ \pm \uvx, \pm \uvy\}$ for $\sfR$, $\bfn \in \{\pm (\bfx - \bfy)/\sqrt2\}$ for $\sfF$, and $\bfn \in \{\pm \uvy\}$ for $\sfM$. } and goes quickly from $-m_0$ to $+m_0$ as $x$ increases past $0$, with $m_0$ the positive minimum of $V(m)$. We then multiply both sides by $(\bfn \cdot \nabla)m$, and integrate along the $\bfn$ direction. For $\sfR$, this yields 
	\be 0 = \int dx\, \(v_\t \sfm'^2 + h\sfm' + (F_A[\sfm])'  + \mco\), \ee 
	where primes denote derivatives along $\bfn$, and the term $\mco$ is 
\begin{equation}
	\mathcal{O} = \begin{dcases} 
		\lambda \cos(2\theta) \mathsf{m} \mathsf{m}' \mathsf{m}'' & \mathsf{R} \\[1ex]
		\begin{aligned}
			&\sqrt{2}\gamma \sin(\pi/4-\theta) \mathsf{m}'^2 + \nu \sin(\pi/4+\theta) \mathsf{m} \mathsf{m}'^2 \\
			&\quad + \lambda \cos(2\theta) \mathsf{m} \mathsf{m}'\mathsf{m}'' 
		\end{aligned} & \mathsf{F} \\[2ex]
		\gamma\cos(\theta) \mathsf{m}'^2 + \nu \sin(\theta) \mathsf{m}\mathsf{m}'^2 + \lambda \sin(2\theta) \mathsf{m}\mathsf{m}' \mathsf{m}'' & \mathsf{M} 
	\end{dcases}
\end{equation}
	Since $\sfm(x)$ minimizes $F_A[\sfm]$ as $x \ra\pm\infty$, the term involving $F_A[\sfm]$ integrates to zero. Using the fact that $\sfm'^2$ has a smoothened-delta-function functional form centered on $r=R$, and that the form of $\sfm \sfm' \sfm''$ has the same sign profile as the derivative of $\sfm'^2$, the integrals yield
	\be \label{vtr} v_\t  = -\wt h -  \wt \l \cos(2\t)  \qq (\sfR)\ee 
	for $\sfR$,
	\be \label{vtf} v_\t = -\wt h - \wt \nu \sin(\pi/4+\t) -\wt \g \sin(\pi/4-\t) -  \wt \l \cos(2\t)  \qq( \sfF) \ee 
	for $\sfF$, and 
	\be \label{vtm} v_\t = - \wt h - \wt \nu \sin(\t) - \wt \g \cos(\theta) - \wt \l \sin(2\t) \qq(\sfM) \ee 
	for $\sfM$; here $\wt h$, $\wt \nu$, $\wt \g$, and $\wt \l$ are constants proportional to $h$, $\nu$, $\g$, and $\l$, respectively.
	
	This level of analysis does not tell us how rough the domains along different directions become, but it does tell us how minority domains are eroded, and yields dynamics in qualitative agreement with the behavior observed numerically in Fig.~\ref{fig:megafig}. The interpretation for $\sfR$ is simplest: $\wt h$ causes domains to ballistically expand or contract (depending on whether they align with or against the bias), and the $\cos(2\t)$ profile of the velocity ballistically squeezes domains by anisotropically compressing them along one axis. For $\sfF$ and $\sfM$, the terms proportional to $\wt \nu$ and $\wt \g$ combine to a single term of the form $a \sin(\t + \phi)$ for constants $a,\phi$. This term alone is readily seen to induce a uniform motion of the entire domain along the $\widehat\phi$ direction at speed $a$ (accounting for the behavior seen in Fig.~\ref{fig:megafig}). The terms proportional to $\wt \l$ then ballistically squeeze the domain (in the same sense as $\sfR$ for $\sfF$, and in a $\pi/4$-rotated sense for $\sfM$), and are the terms responsible for endowing the dynamics with a robust memory. This is the reason that we kept the naively less relevant terms proportional to $\l$ in \eqref{langevins} (a term in the Langevin equation must possess at least two spatial derivatives in order to do something beyond modifying the uniform component of the domain's velocity). 
	
	 For all of the squeezing codes, the equations \eqref{vtr}, \eqref{vtf}, \eqref{vtm} eventually produce domains whose boundaries have ``corners'', regions where the domain boundary has a radius of curvature independent of the domain's size (see e.g. the examples in Fig.~\ref{fig:megafig}). Understanding the domain wall dynamics near the corners requires an analysis beyond that presented above, which we defer to future work.


	\ss{Quenching into the memory phase} \label{ss:inftquench}
	
	We now examine what happens when we perform a quench from the disordered phase into the memory phase. We do this by initializing the system in a random state, and then quenching by evolving under noiseless dynamics. As illustrated in Fig.~\ref{fig:coarsening_histories}, the time-dependent magnetization patterns generated by this process can be quite rich. 
	
	The most basic question we can ask is the probability that the system eventually reaches one of the logical states $s_\pm$. For Toom's rule, as well as zero-temperature Glauber dynamics, this probability is strictly smaller than unity. This is because both types of dynamics have a large number of absorbing states, beyond just $s_\pm$ (indeed, one easily sees that for these dynamics, any state obtained from $s_\pm$ by flipping spins along a non-contractible loop of the torus is an absorbing state). 
	
	As a first result, we show that for the squeezing codes studied in this work, the logical states are the only exact absorbing states: 
	\begin{proposition}[unique absorbing states]
		Under noiseless asynchronous updates, the squeezing codes $\sfR,\sfF,\sfM$ have only $s_\pm$ as absorbing states. 
	\end{proposition}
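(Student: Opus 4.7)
The plan is to observe that under noiseless asynchronous dynamics an absorbing state $s$ must be invariant under both the $\wedge$- and $\vee$-updates at every site, since whenever a site updates either operation is selected with positive probability. A $\wedge$-update can only turn a $+1$ spin into a $-1$, so $\wedge$-invariance at $\bfr$ is a nontrivial constraint only when $s_\bfr=+1$; dually, $\vee$-invariance is nontrivial only when $s_\bfr=-1$. Reading off the resulting conditions from \eqref{sqzrule_r} and the definitions of $\sfF$ and $\sfM$, one obtains implications of the form ``$s_\bfr=+1\Rightarrow s_{\bfr+\bfv}=+1$ for every $\bfv\in\mcr^+$'' and ``$s_\bfr=-1\Rightarrow s_{\bfr+\bfw}=-1$ for every $\bfw\in\mcr^-$'', where
\begin{align*}
\sfR :\quad & \mcr^+ = \{\pm\uvy\},\quad \mcr^- = \{\pm\uvx\}, \\
\sfF :\quad & \mcr^+ = \{\uvx+\uvy,\,-\uvy\},\quad \mcr^- = \{\uvx,\,-\uvx-\uvy\}, \\
\sfM :\quad & \mcr^+ = \{\uvx,\,-\uvy\},\quad \mcr^- = \{\uvx,\,\uvy\}.
\end{align*}

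The second step is to iterate these implications around the torus. For $\sfR$ this is immediate: any $+1$ spin forces its entire vertical column to be $+1$, and any $-1$ spin forces its entire horizontal row to be $-1$; since every row meets every column, no configuration containing both signs can be absorbing. For $\sfF$ and $\sfM$, the two displacement vectors in $\mcr^+$ (and in $\mcr^-$) are noncollinear with $|\det|=1$, so the nonnegative integer combinations modulo $L\zz^2$ exhaust the entire $L\times L$ torus: writing any $\bfu\in\zz^2$ as $m\bfv_1+n\bfv_2$ over $\zz$, one can shift $m,n$ by positive multiples of $L$ to make them both $\geq 0$, using $(L-1)\bfv\equiv-\bfv\bmod L$. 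Hence a single $+1$ spin forces $s=s_+$, a single $-1$ forces $s=s_-$, and these possibilities are mutually exclusive.

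The only technical content is the coverage claim in the second step, namely that two noncollinear unit-determinant vectors semigroup-generate $\zz^2/L\zz^2$; this is elementary once stated, as sketched above. The $\sfR$ case is slightly degenerate, since $\mcr^+$ and $\mcr^-$ are each collinear, but it is handled separately by the row/column intersection argument. Combining the two steps shows that every absorbing configuration is spatially uniform, hence equals $s_+$ or $s_-$, which is exactly the stated proposition.
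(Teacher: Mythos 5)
Your proof is correct and follows the same basic strategy as the paper's: extract directional ``forcing'' implications from the absorption condition and iterate them around the torus. The $\sfF,\sfM$ cases are handled as in the paper, though you make the coverage step explicit with the observation that the two vectors in $\mcr^\pm$ have unit determinant and therefore semigroup-generate $\zz^2/L\zz^2$; the paper simply asserts ``iterating this process with periodic boundary conditions mandates $s=s_-$'' without that justification. Your treatment of $\sfR$ is genuinely different from the paper's: the paper propagates $-1$ row by row (a $-1$ forces its row to be $-1$, which then forces adjacent rows to be $-1$ via $\wedge$-invariance, and so on), whereas you derive a contradiction from the intersection of a forced $-1$ row and a forced $+1$ column. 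Both work; yours is slightly cleaner.

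One imprecision worth flagging: the opening justification that ``a $\wedge$-update can only turn a $+1$ spin into a $-1$, so $\wedge$-invariance at $\bfr$ is a nontrivial constraint only when $s_\bfr=+1$'' is true for $\sfR$ (where $\bfzero\in\mcr^\wedge$), but false for $\sfF$ and $\sfM$: their $\wedge$-updates, e.g. $s_\bfr\mapsto s_{\bfr+\uvx}\wedge s_{\bfr-\uvy}$ for $\sfM$, do not include $s_\bfr$ in the conjunction and can flip $-1\to+1$. So $\wedge$-invariance also imposes (disjunctive) constraints when $s_\bfr=-1$, which your accounting claims do not exist. This is harmless here --- the conjunctive implications you actually read off are correct consequences of absorption and suffice for the argument --- but the stated rationale for singling them out is off for two of the three rules.
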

	\begin{proof}
		The proof is quite direct. Consider a state $s$ with $s_\bfr =-1$. If the spin at $\bfr$ is to be protected from flipping under the dynamics, the other spins in $\{\bfr+\bfdel \, : \, \bfdel\in \mcr^\vee\}$ must all be $-1$. Requiring that the $-1$ spins in this set be protected from flipping then requires that all spins in the set $\{\bfr + \bfdel + \bfdel'\, : \, \bfdel, \bfdel' \in \mcr^\vee\}$ be $-1$, and so forth. For $\sfF,\sfM$, iterating this process with periodic boundary conditions mandates that $s=s_-$, and so the only absorbing states are $s_\pm$. For $\sfR$, this instead mandates that all spins with the same $y$ coordinate as $\bfr$ be $-1$.  However, if any of the spins on adjacent rows are $+1$, then they can be flipped to $-1$ by $\mcr^\wedge$ updates. Therefore, the spins on these rows must also all be $-1$, and hence the only absorbing states are again $s_\pm$. 
	\end{proof}
	
	When we quench with noiseless dynamics, the system will thus always find its way to a logical state at infinite times. However, this fact does not a priori say anything about the time it takes to reach the logical states, which can vary greatly depending on the dynamics. 
	
	For zero temperature Glauber dynamics, it is known from percolation arguments that the probability of a quench reaching a non-logical absorbing state is approximately $1/3$ \cite{spirin2001freezing}, and a similar result can be shown to hold for Toom's rule. It turns out that the $\sfF$ and $\sfM$ squeezing codes behave somewhat similarly, although the metastable states to which they relax are dynamical: with constant probability, the dynamics reaches a state containing large topologically nontrivial stripes of minority spins that ballistically propagate, $i.e.$ ``flock'', around the system. The flocks move along $\uvx$ for $ \sfM$ and $\uvx-\uvy$ for $\sfF$, as seen in the second and third rows of Fig.~\ref{fig:coarsening_histories}. This motion can be understood from the domain wall motions derived in \eqref{vtf} and \eqref{vtm}; for example, the latter shows that domains with unit normal $\pm \uvx$ move along $+\uvx$ at constant speed. The positions of the flock boundaries move diffusively, which sets the lifetime of the metastable states (the lifetimes of these states are thus $O(L^\alpha)$ for some constant $\alpha>1$, the determination of which would require an investigation into the roughness of interfaces in the squeezing codes). 
	While violations of detailed balance often do not manifest in macroscopic dynamics, this flocking phenomenon shows that the large-scale dynamics of the $\sfM$ and $\sfF$ models can be intrinsically nonequilibrium (a feature that will reappear in the critical exponent measurements of Sec.~\ref{sec:crit}). 
	
	In contrast, the $\sfR$ squeezing code {\it always} rapidly reaches a logical state: this can be directly seen from the coarsening dynamics of \eqref{vtr}, and is reflected in the first row of Fig.~\ref{fig:coarsening_histories}.
	
	\begin{figure}
		\centering 
		\includegraphics[width=.45\tw]{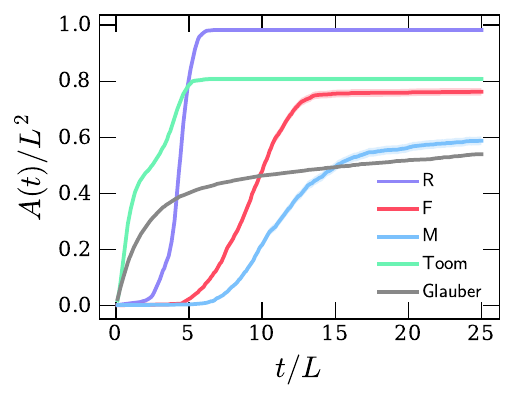} \caption{\label{fig:weighted_area} Time evolution of the average domain area following a noiseless quench from a random initial state on a system of size $L=200$, shown for the $\sfR,\sfF,\sfM$ squeezing codes, Toom's rule, and zero-temperature Ising-model Glauber dynamics. Each point on the curve is the average over 1000 independent runs of the dynamics. Only the $\sfR$ squeezing code quenches to a logical state with probability 1; all other dynamics have a constant probability of freezing out into a state that contains multiple domains for thermodynamically long times (infinitely long for Toom's rule and Glauber dynamics, and $\poly(L)$ for $\sfF$ and $\sfM$). }
	\end{figure}
	
	To demonstrate this more quantitatively, we numerically compute the average domain area
	\be \lan A(t) \ran = \left\lan \sum_{d \in {\sf Domains}(t)} {\sf Area}(d) \right \ran \ee 
	where ${\sf Domains}(t)$ is the set of domains in the system at time $t$ following the quench, ${\sf Area}(d)$ is the area enclosed by $d$, and $\lan \cdot \ran$ indicates an average over initial states and stochastic update schedules. When initialized in a random state, $\lan A(t)\ran$ is a small $O(L^0)$ number. On the other hand, when the system reaches a logical state, there is only one domain (of size $L^2$), and $\lan A(t)\ran = L^2$. 
	
	In Fig.~\ref{fig:weighted_area} we plot $\lan A(t)\ran / L^2$ for the $\sfR,\sfF,\sfM$ squeezing codes, Toom's rule, and zero-temperature Glauber dynamics for the 2D Ising model. All but $\sfR$ asymptote to a constant less than unity at long times (with Ising-Glauber taking the longest to saturate). For Ising-Glauber and Toom, these plateaus continue out to $t = \infty$. For $\sfF, \sfM$, the system will eventually reach $s_\pm$ (and hence $A = 1$) at times of order $O(L^{\alpha > 1})$, for the reasons discussed above.


	\section{Cluster mean-field and fluctuation-stabilized order}\label{sec:meanfield} 
	
	In this section, we analyze the squeezing codes in the dynamic mean field (MF) approximation. We will show that standard MF theory badly {\it underestimates} the existence of an ordered phase, falsely predicting that asynchronously-updated squeezing codes are disordered at {\it all} nonzero noise strengths. Since MF becomes a better approximation in larger dimensions, this suggests that the squeezing codes lose their ability to function as memories in larger dimensions, and we will see that this is indeed the case. We will also see how the memory phase can be recovered by including short-distance fluctuations into the MF analysis. Since a local MF approach does not appear to be possible for synchronously-updated automata, whose dynamics is not generated by a Lindbladian expressible as a sum of local operators, in the entirety of this section we will focus on the $\sfR,\sfF,\sfM$ automata under asynchronous updates. Extending the analysis to synchronous updates is an interesting problem for future work. 
	
	\ss{Squeezing codes in higher dimensions}
	
	In preparation for the discussion to follow, we first generalize the squeezing codes to general dimensions $d\geq2$. The generalization with the most natural $d\ra\infty$ limit, which we will focus on in what follows, 
	takes $d$ to be even, pairs the dimensions as $(1,2), \dots, (d-1,d)$, and subjects each pair $(i-1,i)$ to 2D squeezing dynamics. 
	This means that when a given site is chosen to be updated, a pair of dimensions and a choice of update ($\vee$ or $\wedge$) are chosen uniformly at random, the update is then applied with probability $1-p$, while noise is applied with probability $p$. 
	
	For odd $d>1$, we choose to pair up $d-1$ of the dimensions as before, and to take the remaining unpaired final dimension to undergo $\wedge$ updates. This breaks the $X_C$ symmetry of the dynamics, and we find that under asynchronous updates, this causes all of the squeezing codes to fail to be memories at any nonzero value of $p$. This is despite the fact that the arguments in the proof of theorem~\ref{thm:robust} can be used to show that these models are {\it always} memories under synchronous updates, regardless of $d$. Therefore, in odd dimensions, the $\sfR,\sfF,\sfM$ automata are synchronicity-protected in the way that the $\sfT$ automaton is in $d=2$. Since we are interested only in asynchronous updates in this section, we will assume $d$ to be even in what follows. 
	
	\ss{Master equations}
	
	To set up our MF analysis, we first derive master equations governing the dynamics of the magnetization. 
	When doing asynchronous updates, each spin has a probability $dt = 1/L^d$ of being updated at a given time step. Let the neighborhoods controlling the $\vee$ and $\wedge$ updates be denoted by $\mcr^{\vee/\wedge}_a$, where $a \in \{1,\dots,d/2\}$ labels the different two-dimensional planes on which updates happen. Then $\lan s_\bfr\ran$ evolves from one time step to the next according to 
	\bea \label{master} \p_t \lan & s_\bfr(t+dt) \ran  = -\lan s_\bfr(t) \ran + \eta p \\ & \,\,+ \frac{1-p}{d} \sum_{a=1}^{d/2} \left\lan \bigwedge_{\bfdel \in \mcr^\wedge_a} s_{\bfr+\bfdel} + \bigvee_{\bfdel' \in \mcr^\vee_a} s_{\bfr + \bfdel'} \right\ran,\eea 
	where we have taken the $dt\ra0$ limit. 
	This equation is exact, but solving it requires obtaining evolution equations for the terms like $\lan \bigwedge_{\bfdel\in \mcr^\wedge_a} s_{\bfr+\bfdel} \ran$; these involve expectations of yet higher body terms, leading to an infinite hierarchical family of equations. This infinite regress is halted using an appropriate type of mean-field ansatz, which truncates the equations after a certain weight of operator is reached, and generates a hierarchy of evolution equations that can be analyzed jointly. 

	\ss{The failure of mean-field}
	
	
	First, we discuss the standard MF analysis, which truncates \eqref{master} to a single equation by assuming that $n$-point functions of the spins factorize as 
	\be \label{mfass} \left\lan \prod_{i=1}^n s_{\bfr_i} \right\ran = \prod_{i=1}^n \lan s_{\bfr_i} \ran = m^n ,\ee 
	where we defined the expected magnetization $m \equiv \lan s_\bfr\ran$, with the expectation value assumed to be independent of $\bfr$ (as is the case for all of the steady states of interest). 
	
	As a warmup and to calibrate our expectations, we first perform the MF analysis for Toom's rule, for which the master equation reads 
	\be \p_t \lan s_\bfr \ran = - \lan s_\bfr\ran + \eta p + (1-p) \lan \maj(s_\bfr, s_{\bfr+\uvx},s_{\bfr+\uvy} )\ran.\ee 
	Factoring this with \eqref{mfass} gives (see App.~\ref{app:toom_mf} and \cite{lebowitz1990statistical})
	\be \p_tm = p\eta + \frac{1-3p}2m - \frac{1-p}2m^3 \quad ({\sf Toom}).\ee 
	The effective free energy\footnote{Even though we are out of equilibrium, since we have assumed $\lan s_\bfr\ran$ is independent of $\bfr$, the RHS of our MF equations are always polynomials in $m$. Such equations can always be interpreted in terms of relaxational dynamics under an appropriate free energy, simply by integrating with respect to $m$. } whose relaxational dynamics produces this equation via $\p_t m = -\p F_{\rm eff} / \p m$ is simply 
	\be F_{\rm eff} = -p\eta m + \frac{3p-1}4 m^2 + \frac{1-p}8m^4 \quad ({\sf Toom}).\ee 
	At zero bias $\eta=0$, $F_{\rm eff}$ develops multiple minima at $p_{c,MF} = 1/3$, which as expected, is an over-estimate of the true value of $p_c$ (around $14\%$). 
	
	At nonzero values of $\eta$, the MF phase diagram is obtained by finding the minimum of $F_{\rm eff}$ that is encountered by the logical state $m=-{\sf sgn}(\eta)$ aligned against the noise bias as it evolves according to the dynamic MF equation. When $\eta \neq 0$, this minimum is metastable, and a solution with ${\sf sgn}(m) = {\sf sgn}(\eta)$ always has lower free energy. The nontrivial statement about Toom's rule (and the squeezing codes) is that the lifetime of this metastable state is exponentially long (in $L$), but there is indeed no way of assessing this statement within the context of the present MF theory.\footnote{The {\it infinite} time non-equilibrium steady state is always dominated by configurations with ${\sf sgn}(m) = {\sf sgn}(\eta)$, so the phase diagram is only a dynamic one: the ``thermodynamics'' of the non-equilibrium steady states are trivial when $\eta\neq0$.} In any case, having an effective free energy with multiple minima is a necessary---but not sufficient---condition to have a memory.\footnote{Any dynamics that proceeds by performing 3-site majority votes will yield the same phase diagram, but only votes that occur in a specific geometric pattern (see \cite{pajouheshgar2025exploring}) will produce an extensively long lifetime for the metastable state. } In what follows we will assume that the metastable minima in $F_{\rm eff}$ indeed have extensively long lifetimes; for the squeezing codes this was shown analytically for synchronous updates and numerically for asynchronous ones in Sec.~\ref{sec:memories}. 
	
	We now perform the same analysis for the squeezing codes. The factorization of the RHS of \eqref{master} is accomplished using, for any set of sites $\mcv$, the relations 
	\bea \left\lan \bigwedge_{\bfr'\in \mcv}s_{\bfr'} \right\ran & =2 \lan \prod_{\bfr'\in \mcr} (1+s_{\bfr'})/2\ran - 1 \\ 
	\left\lan \bigvee_{\bfr'\in \mcv} s_{\bfr'} \right\ran & =1-2 \lan \prod_{\bfr'\in \mcr} (1-s_{\bfr'})/2\ran.\eea
	With this, the mean field factorization assumption \eqref{mfass} reduces \eqref{master} to 
	\be \p_t m = -m + \eta p + \frac{1-p}{2^{n-1}} \sum_{k=0}^{\lfloor (n-1)/2 \rfloor} {n \choose 2k+1} m^{2k+1},\ee 
	where $n = |\mcr^{\wedge/\vee}_a|$ ($n= 3$ for $\sfR$ and $n=2$ for $\sfF,\sfM$). 
	
	Consider first the rules $\sfF,\sfM$. Setting $n=2$, we obtain the extremely simple 
	\be \label{twosite_mf} \p_t m = p(\eta - m) \qq (\sfF,\sfM),\ee 
	which is the evolution equation we would obtain in the presence of noise alone. At the mean-field level, the error correction completely drops out of the dynamics, and the spins simply relax exponentially in time to the average value set by the noise bias. This corresponds to a trivial effective free energy of 
	\be F_{\rm eff} = -p\eta m + \frac p2m^2 \qq(\sfF,\sfM),\ee 
	which is, indeed, inconsistent with the existence of an ordered phase, regardless of $p$. This result is rather remarkable because MF theory usually {\it over}estimates the extent of the ordered phase (as was true for Toom's rule). Here, it misses the ordered phase entirely. 
	
	Fluctuation corrections to the standard MF factorization can, in principle, restore the existence of an ordered phase, but in $d$ dimensions, deviations from mean-field are expected to scale as $O(1/d)$, which we will confirm shortly below. Fluctuation corrections can thus yield an effective free energy with nontrivial minima only if they are larger than the $pm^2/2$ term, and therefore the critical noise strength must vanish as $d\ra\infty$ at least as $p_c \sim 1/d$ (we will see below that in fact $p_c \sim 1/d^2$). 
	
	Now consider $\sfR$ squeezing, for which $n=3$. We obtain 
	\be \label{threesite_mf} \p_t m = p\eta - \frac{1+3p}4m + \frac{1-p}4m^3 \qq (\sfR), \ee 
	corresponding to an effective free energy of 
	\be F_{\rm eff} = -p\eta m  + \frac{1+3p}8m^2 - \frac{1-p}{16}m^4\qq(\sfR).\ee 
	The $m^4$ term has the ``wrong'' sign, and the coefficient of the $m^2$ term is bounded below by $1/8$ at all values of $p$. Even as $p\ra0$, fluctuation corrections can only produce nontrivial minima in $F_{\rm eff}$ if $d$ is sufficiently small. This demonstrates that there is an upper critical dimension $d_c$, for which the model is always disordered when $d>d_c$. It is interesting to recall that in spite of this, $\sfR$ actually has a {\it higher} threshold (and larger erosion velocity) for $d=2$ than either $\sfF$ or $\sfM$.

	\begin{figure*}[t]
		\centering
		\setlength{\tabcolsep}{0pt}
		\begin{tabular}{@{}%
				m{.24\textwidth}%
				m{.24\textwidth}%
				@{\hspace{1.2em}\color[gray]{0.7}\vrule width 0.6pt\hspace{1.2em}}%
				m{.24\textwidth}%
				m{.24\textwidth}%
				@{}}
			\centering\arraybackslash {$\sfR_2 \, | \, 2d$} 
			\includegraphics[width=\linewidth]{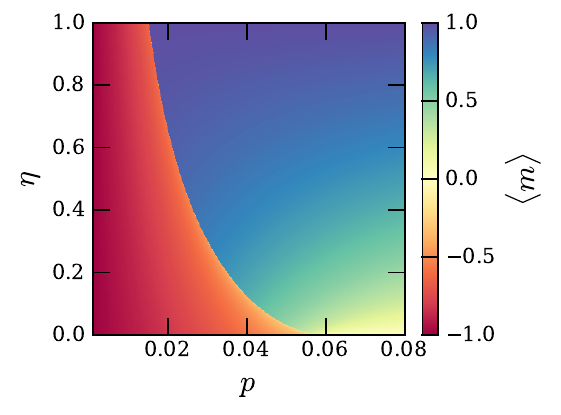} &
			\centering\arraybackslash{$\sfR_2 \, | \, 4d$} 
			\includegraphics[width=\linewidth]{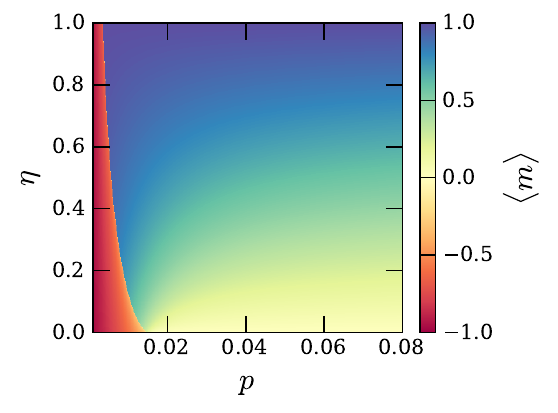} &
			\centering\arraybackslash{$\sfR_3 \, | \, 2d$} 
			\includegraphics[width=\linewidth]{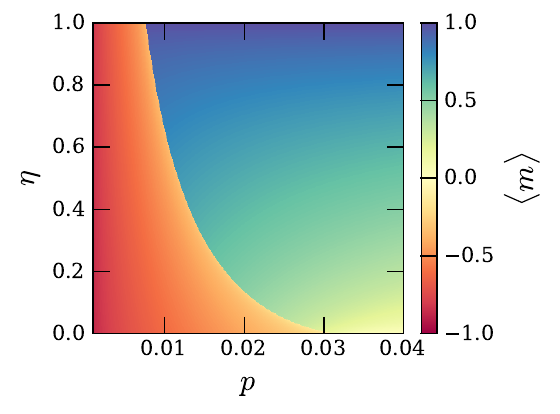} &
			\centering\arraybackslash{$\sfR_3 \, | \, 4d$} 
			\includegraphics[width=\linewidth]{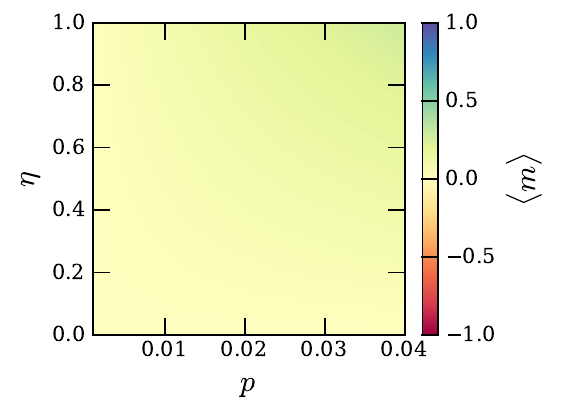}
		\end{tabular}
		\caption{Extended cluster mean-field phase diagrams in the $(p,\eta)$ plane for the  $\sfR_2$ (left) and $\sfR_3$ (right) squeezing rules in $d=2$ and $d=4$ dimensions. Each phase diagram is generated by initializing the system in $m=-1, f^+ = 0, f^-=0$ and iterating the MF equations until convergence.  }
		\label{fig:mffig}
	\end{figure*}

	\ss{Cluster expansions and the restoration of order}
	
	We now summarize calculations which show how the incorporation of fluctuations can change this conclusion, and restore the existence of an ordered phase in a way consistent with numerics. This is done by taking correlation functions to factorize beyond a range $a$ (as measured with the 1-norm): 
	\be \label{nntrunc} \lan s_\bfr s_{\bfr'} \ran = \begin{dcases} \lan s_\bfr\ran\lan s_{\bfr'}\ran & ||\bfr-\bfr'||_1 > a \\ 
		d_{\bfr,\bfr'}  & {\rm else} \end{dcases},\ee 
	where the $d_{\sfr,\sfr'}$ are additional fields that we retain in our analysis to incorporate short-scale fluctuations. In statistical physics, this approach is known as the ``cluster variational method'' \cite{kikuchi1951theory,pelizzola2005cluster}.
	
	With this assumption, any $n$-point function of the $s_\bfr$ may be expressed in terms of $\lan s_\bfr\ran$ and the $d_{\bfr,\bfr'}$ using rules for conditional probabilities (see App.~\ref{app:mf} and \cite{pelizzola2005cluster}). As an example, when $a=1$, we may factor a product of three neighboring spins as 
	\be \lan s_{\bfr-\uvx}s_\bfr s_{\bfr+\uvx}\ran =  \frac{d_{\bfr-\uvx,\bfr} d_{\bfr,\bfr+\uvx}}{\lan s_\bfr\ran}. \ee 
	
	This and similar relations let us truncate the hierarchy of equations generated by \eqref{master} to a finite number of ODEs, which we will refer to as the cluster MF equations. The derivation for the squeezing codes considered in this work is technically involved, and the details are deferred in their entirety to App.~\ref{app:mf}, where the cluster MF equations are generated systematically using the Doi-Peliti operator formalism. 
	
	The cluster MF analysis turns out to be simplest for a ``two-body'' version of the $\sfR$ squeezing code, which we will refer to as $\sfR_2$. In the language of \eqref{general_sqz}, it is defined by the regions 
	\bea \label{sqzrule_r2} \sfR_2\, : \, \mcr^\vee = \{\uvx,-\uvx\}, \qq \mcr^\wedge = \{\uvy,-\uvy\},\eea
	which differs from \eqref{sqzrule_r} only in the absence of $\bfzero$ in $\mcr^\vee$ and $\mcr^\wedge$. To better distinguish between $\sfR_2$ and the ``three-body'' rule $\sfR$, we will refer to the latter as $\sfR_3$. It is straightforward to show that $\sfR_2$ is a robust memory under synchronous updates, and has the same symmetries and qualitative coarsening behavior as $\sfR_3$. In fact, numerics reveal that $\sfR_2$ is a {\it better} memory than $\sfR_3$: in 2D with asynchronous updates and unbiased noise, it undergoes a transition at $p_{c,\sfR_2} \approx 3.9\%$ (compared with $p_{c,\sfR_3} \approx 2.5\%$). 
	
	Because the spins on the RHS of \eqref{sqzrule_r2} are next-nearest neighbors, running the cluster MF analysis on $\sfR_2$ requires setting $a=2$. Assuming that the steady states in the ordered phase do not break any spatial symmetries (an assumption which we may justify a posteriori from numerics), the cluster MF equations are most conveniently parametrized in terms of the magnetization $m = \lan s_\bfr\ran$ and linear combinations of its connected correlation functions along the $\uvx$ and $\uvy$ directions. To this end, we define 
	\be f^{l\pm} = (d_{\bfr,\bfr+l\uvx} - m^2) \pm (d_{\bfr,\bfr+l\uvy} - m^2),\ee 
	where $l\in\{1,2\}$, the choice of $\bfr$ is arbitrary by translation invariance, and where if $d>2$, we take $x,y$ to be the first two spatial directions. The $f^{l-}$ transform in the same way as $m$ under $X_C$, while the $f^{l+}$ are neutral. 
	
	\begin{widetext} 
		The calculation, given in App.~\ref{app:mf}, shows that the cluster MF equations for $\sfR_2$ are 
		\bea\label{r2mfs}
		\p_t m   & = p(\eta-m) + \frac{1-p}2 f^{2-} \\ 
		\p_t f^{1-} & = -2f^{1-} + \frac{1-p}d \(m(1-m^2) + f^{2-} - mf^{2+} \) \\ 
		\p_t f^{2-} & =  -2f^{2-} + \frac{1-p}d \big( f^{1-} + mf^{1+} + \frac1{1-m^2} \( f^{1-}f^{2+} + f^{1+}f^{2-} - m(f^{1-}f^{2-} + f^{1+} f^{2+})\)\big)\\ 
		\p_t f^{1+} & = -2f^{1+} + \frac{1-p}d \(1-m^2 + f^{2+} - mf^{2-} \) \\ 
		\p_t f^{2+} & =  -2f^{2+} + \frac{1-p}d \big( f^{1+} + mf^{1-} + \frac1{1-m^2} \( f^{1+}f^{2+} + f^{1-}f^{2-} - m(f^{1+}f^{2-} + f^{1-} f^{	2+})\)\big).
		\eea 
	\end{widetext}
	There are several things to note about these equations. First, note that they correctly reduce to \eqref{twosite_mf} upon neglecting fluctuations (viz. upon setting $f^{l\pm} = 0$). Second, one may verify that the disordered state, which has $m = f^{1-} = f^{2-} = 0$, possesses $X_C$-neutral fluctuations of size
	\be f^{1+} = \sqrt{f^{2+}} = \frac{1-\sqrt{1-\z^2}}\z, \ee 
	where $\z = \frac{1-p}d$.
	In the $d\ra\infty$ limit, we thus have 
	\be f^{1+} = \frac{1-p}{2d} + O(1/d^2),\ee 
	so that $f^{1+} \sim 1/d$ and $f^{2+} \sim 1/d^2$ to leading order, consistent with deviations from naive MF being suppressed as $1/d$.
	In the ordered state, one may similarly verify the scalings 
	\be m \sim d^0, \quad f^{1-} \sim 1/d,\quad f^{2-} \sim 1/d^2,\ee 
	and from this scaling and the equation for $\p_tm$, we thus conclude that $p_c \sim 1/d^2$. Indeed, one can find $p_c$ analytically by linearizing the mean field equations about the disordered state, and then determining the point at which the disordered solution becomes unstable. The details are unilluminating and will be skipped; here we quote only the result
	\be \label{r2pcmf} p_c = \frac1{1+4d^2}.\ee 
	In 2d, this predicts $p_c = 1/17 \approx 0.058$, which is about a factor of $1.5$ larger than the value observed numerically. 
	
	Given the collections of ODEs produced by the cluster MF, a phase diagram in the $(p,\eta)$ plane may be obtained by initializing the fields in the state $m=-{\sf sgn}(\eta),f^{l\pm} = 0$, and numerically updating the fields until convergence is achieved. 
	This is done for the $\sfR_2,\sfR_3$ squeezing rules in Fig.~\ref{fig:mffig}, for both two and four dimensions (the cluster MF equations for $\sfR_3$ are more complicated, and are deferred to App.~\ref{app:mf}). For $\sfR_2$, we find a phase diagram like in the schematic illustration of Fig.~\ref{fig:overview}, with the ordered phase being weaker in $d=4$ in accordance with \eqref{r2pcmf}. For $\sfR_3$, we find an ordered phase in $d=2$ and a completely disordered phase diagram in $d=4$. For $\sfR_3$ the cluster MF analysis is remarkably accurate: in 2D it predicts a phase transition at $p_{c} \approx 3\%$ (c.f. the estimate of $p_c\approx 2.5\%$ from Fig.~\ref{fig:memory_fig}), and in $4d$ its prediction of $p_c = 0$ can be numerically verified to be correct (in fact, in $4d$ $\sfR_3$ is not even an eroder---it is synchronicity-protected in the same way that $\sfT$ is in 2D).

	\section{Phase transitions and criticality}\label{sec:crit}
	
	We now investigate the phase transitions that occur in the squeezing codes as the memory is lost at zero bias, focusing exclusively on asynchronous updates (synchronous updates have different physics, and will be studied in detail elsewhere). This transition is numerically observed to be continuous for all of  $\sfR_2,\sfR_3,\sfF$, and $\sfM$. 
	
	\ss{Expectations}
	
	Before we begin, we set our expectations by reviewing what is known about non-equilibrium systems where a single non-conserved order parameter undergoes a continuous ordering transition.
	
	The consensus in the literature is that in such situations, the critical point is usually described as an effectively equilibrium system, with the non-equilibrium nature of the dynamics being irrelevant in the RG sense.\footnote{The Langevin equations derived in Sec.~\ref{ss:langevin} contain terms proportional to a single spatial derivative of the magnetization. While these are naively relevant by power-counting, they can be absorbed in a shift of the time coordinate, and do not affect the long-distance scaling of correlation functions.  } 
	An early work coming to this conclusion in the context of noisy cellular automata with conventional $\zt$ spin-flip symmetry is \cite{grinstein1985statistical}, where it was argued that phase transitions in such models belong to the dynamic Ising universality class, $i.e.$ critical model-A dynamics \cite{hohenberg1977theory}. This conclusion is known from numerics to be correct for Toom's rule, at least with regards to static correlation functions (we will discuss dynamic correlation functions in Toom's rule shortly). After this work, Ref.~\cite{bassler1994critical} used a 1-loop epsilon expansion calculation in $d=4-\ep$ to argue that model-A dynamics is stable with respect to {\it any} perturbation which keeps the model at criticality, even those which break symmetries and detailed balance (see also \cite{tauber2002effects}). Indeed, at present, the only exceptions to this scenario known to the authors require fine-tuning to a multicritical point \cite{young2020nonequilibrium,agrawal2024dynamical}. If this conclusion holds for squeezing codes, we should therefore expect model-A exponents, viz. a correlation length exponent of $\nu_A = 1$, a magnetization exponent of $\b_A = 0.125$, and a dynamic exponent of $z_A \approx 2.167$ (the latter being known from numerics and a 5-loop $\ep$ expansion calculation \cite{adzhemyan2022dynamic,nightingale1996dynamic}).

	\begin{table}[htbp]
		\centering
		\renewcommand{\arraystretch}{1.2}
		\begin{tabular}{c@{\hspace{1em}}c@{\hspace{1em}}c@{\hspace{1em}}c@{\hspace{1em}}c}
			
			& ${\sf R}_2$ & ${\sf F}$ & ${\sf M}$  & model-$A$ \\
			\toprule
			$p_c$ ($\%$) & $3.828(6)$  & $1.1431(15)$ & $0.3186(24)$ & n/a \\
			$\nu$ & $0.952(11)$   & $0.972(16)$   & $0.99(4)$  & $1.0$ \\
			$\b$ & $0.165(5)$ & $0.1826(20)$ & $0.227(5)$ & $0.125$ \\
			$z$   & $1.942(3)$ & $1.668(19)$  & $1.396(16)$ & $\approx2.167$  \\
			\midrule
			\bottomrule
		\end{tabular}
		\caption{Estimates of $p_c$ and critical exponents for the asynchronously updated automata at zero bias, compared with the critical exponents of model-$A$ dynamics. $p_c,\nu$, and $\b$ are obtained from finite-size collapses of $B$ and $m$ at different values of $p$ (Fig.~\ref{fig:static_stuff}), and the reported value of $z$ is obtained from collapsing dynamic quenches of $B$ at $p=p_c$ (Fig.~\ref{fig:Bt_collapses}; see also Fig.~\ref{fig:mt_quench}). The value of $z$ for model-A dynamics comes from \cite{adzhemyan2022dynamic,nightingale1996dynamic}. }
		\label{tab:exponents}
	\end{table}
	
	\ss{Statics} \label{ss:statics}

		\begin{figure*}
		\centering
		\includegraphics[width=0.32\textwidth]{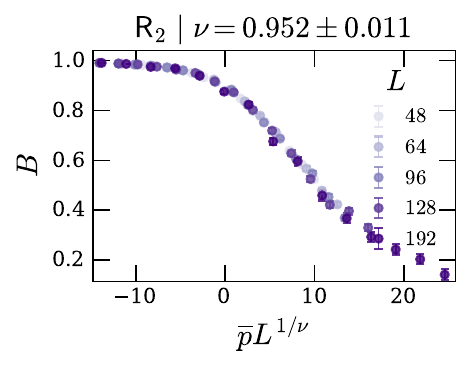} \includegraphics[width=.32\tw]{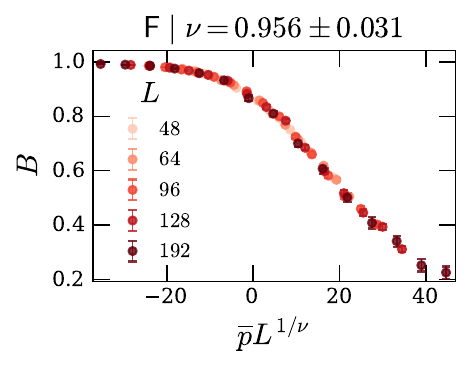}
		\includegraphics[width=.32\tw]{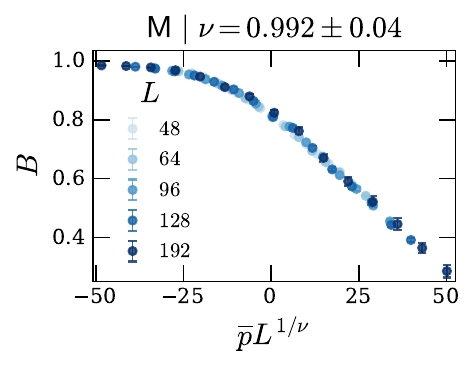} \\ 
		\includegraphics[width=0.32\textwidth]{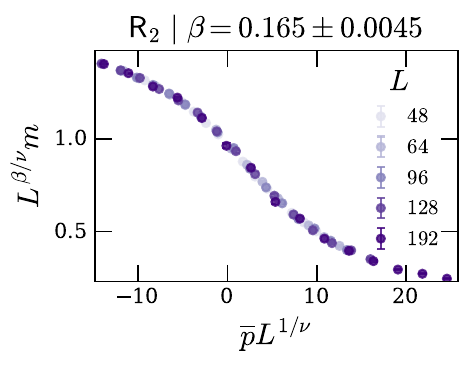} \includegraphics[width=.32\tw]{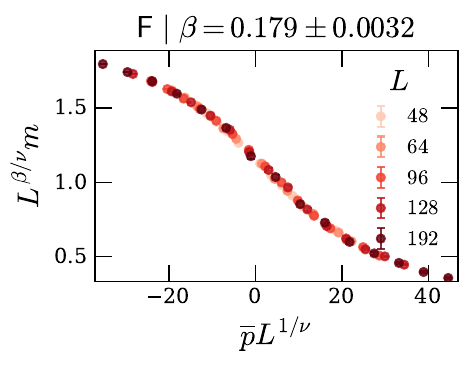}
	\includegraphics[width=.32\tw]{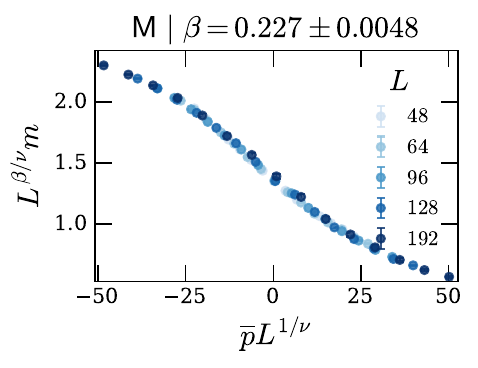}
		\caption{\label{fig:static_stuff} Scaling collapses of the Binder cumulant (top row) and magnetization (bottom row) for the $\sfR_2,\sfF$, and $\sfM$ squeezing codes (left to right).  {\it Top row:}  Scaled Binder cumulants near the critical point, with the estimated values of $\nu$ and their uncertainties shown in the subfigure headings. {\it Bottom row:} Scaled magnetization near the critical point, with the values of $\b$ shown in the subfigure headings. All code and data needed to reproduce these plots, and all other plots in this section, are available at \cite{code}.}
	\end{figure*}

	We first estimate the critical exponents associated with static correlation functions. By hyperscaling only two are independent, and we will therefore focus only on obtaining $\nu$ and $\b$. $\sfR_2$ and $\sfR_3$ appear to have very similar values of $\nu$ and $\b$ (which is unsurprising on account of their identical symmetries), and as such we will only present data for $\sfR_2$. The procedures we use for extracting critical exponents and determining the uncertainties thereof are described in detail in App.~\ref{app:numerics}.
	
	We identify $p_c,\nu$ and $\b$ by performing a scaling collapse on the magnetization $m$ and the Binder cumulant
	\be \label{bdef} B = \frac32 - \frac{\lan m^4 \ran}{2\lan m^2\ran^2},\ee 
	where in this section we define 
	\be m = \frac1{L^2} \left|\sum_\bfr s_\bfr  \right|\ee 
	as the absolute value of the magnetization. Our definition in \eqref{bdef} ensures that $B = 1$ at 
	the ordered phase with 
	$p=0$, that $B= 0$ at 
	the disordered phase with
	$p=1$. Right at the critical point, $B$ is a universal $L$-independent number whose value is determined by the universality class of the transition. 
	
	In Fig.~\ref{fig:static_stuff}, we plot scaling collapses of $B$ and $m$ for $\sfR_2,\sfF,$ and $\sfM$. Both $B$ and $L^{\b / \nu} m$ are plotted against $\ob p =  (p/p_c-1)$, with $p_c,\nu,\b$ simultaneously optimized to give the best possible collapse (for details, see App.~\ref{app:numerics}). This analysis yields the values collected in Tab.~\ref{tab:exponents}. In App.~\ref{app:numerics}, we also perform collapses that simultaneously include the (slightly noisier) magnetic susceptibility $\chi = L^2 (\lan m^2 \ran - \lan m\ran^2)$. Doing this only very slightly changes the values of $p_c,\nu,\b$, and yields values of $\g$ consistent with hyperscaling within error bars. 
	
	As expected from Sec.~\ref{ss:erosion}, $p_c$ is highest for $\sfR_2$ and lowest for $\sfM$, with the values of $p_c$ agreeing well with the estimates from Sec.~\ref{ss:asynch_numerics} derived from the computation of $\tmem$. The values of $\nu$ are very slightly smaller than the model-A value of $\nu_A = 1$, while the values of $\b$ are all appreciably larger than the model-A value $\b_A= 0.125$. While the values of $\nu$ and $\b$ are different for different squeezing codes, they are such that the magnetic susceptibility exponent assumes a value of $\g \approx 1.58$ for all three models (see App.~\ref{app:numerics}), to be compared with the model-A value of $\g_A = 1.75$. 
	
	As a check, we also perform a similar collapse with $\nu,\b$ fixed to $\nu_A,\b_A$, and the collapse is worse by a statistically significant amount (see  App.~\ref{app:numerics} for details). On the other hand, when we apply the same analysis to Toom's rule, our best scaling collapse (not shown) occurs at exactly model-A values (within error bars). 
	
	Summarizing, if the static correlations of the squeezing code critical points are indeed in the model-A universality class, they would need to have much stronger subleading corrections to scaling than Toom's rule. In App.~\ref{app:numerics} we investigate how the extracted values of the exponents drift with system size, and find no evidence for a gradual convergence to model-A values. We are thus led to tentatively claim that the static exponents lie in universality classes distinct from the Ising model, although future numerical work will be needed to make a definitive conclusion.

	\ss{Dynamics} 
	
		\begin{figure*}
		\centering
		\includegraphics[width=0.32\textwidth]{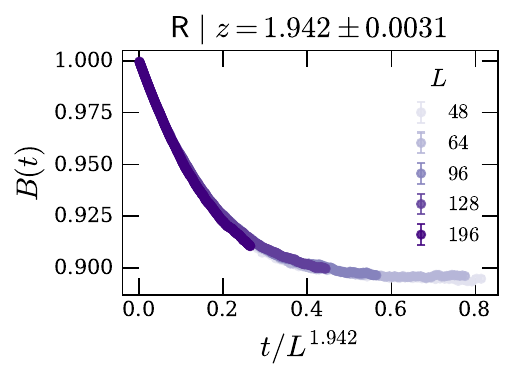} \includegraphics[width=.32\tw]{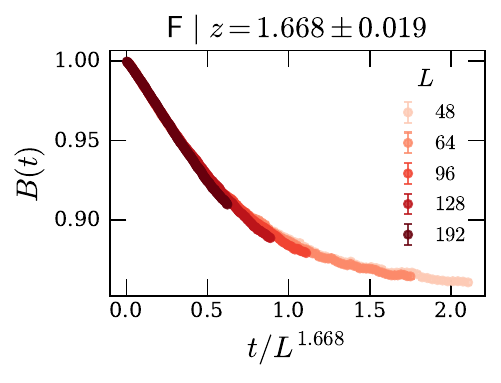}
		\includegraphics[width=.32\tw]{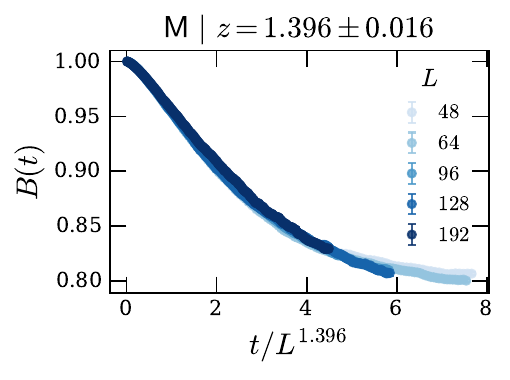} 
		\caption{\label{fig:Bt_collapses} Scaling collapses of the time-dependent Binder cumulant following a quench from the ordered state at $p=0$ to the critical point. The values used for $p_c$ are the same as those extracted from the collapses in Fig.~\ref{fig:static_stuff}. Each data point is an average of $15,000$ samples.}
	\end{figure*}

	

		Regardless of whether or not there is a slow drift of the static exponents to model-A values at large system sizes, results from dynamics provide a very strong indication that the squeezing code critical points indeed lie beyond the model-A {dynamic} universality class. To show this, we now  estimate the dynamic exponent $z$.
		
		\sss{Time-dependent Binder cumulant collapse}
		
		The first method we use is to perform a scaling collapse on the time-dependent Binder cumulant $B(t)$, following a quench from the ordered state at $p=0$ to the critical point at $p=p_c$ (with the value of $p_c$ identified using the collapses in Fig.~\ref{fig:static_stuff}). Since the Binder cumulant is dimensionless, on a system of size $L$ at $p=p_c$ it scales as 
		\be B(t,L) = \Psi_B(t/L^z) \ee 
		for a scaling function $\Psi_B(x)$ which begins at $\Psi_B(0) = 1$ and plateaus to a non-trivial late-time number $\Psi_B(\infty)$, whose value is characteristic of the universality
		class. With our conventions for $B$, the late-time value with periodic boundary conditions for model-A dynamics is \cite{kamieniarz1993universal}
		\be \Psi_B(\infty) \approx 0.916. \ee 
		
		The results of performing a collapse of $B(t)$ for each of the squeezing codes are shown in Fig.~\ref{fig:Bt_collapses}, and lead to the values of $z$ reported in Tab.~\ref{tab:exponents}. These values are all significantly smaller than the model-A value of $2.167$ and are all super-diffusive, each having $z<2$ within error bars. They also satisfy 
		\be \label{zordering} z_\sfM < z_\sfF < z_{\sfR_2},\ee
		so that the worse the memory (as measured by $p_c$), the smaller the value of $z$. 
		
		We additionally see that all of the squeezing codes have a value of $\Psi_B(\infty)$ less than that of model-A dynamics by statistically significant amounts: extracting the late-time asymptotics of the curves in Fig.~\ref{fig:Bt_collapses} and averaging them over $L$ gives 
		\be  \label{psibinf} \Psi_B(\infty) \approx 
		 \begin{dcases} 
		 	0.88 & \sfR_2 \\ 
		 	0.84 & \sfF\\ 
		 	0.76 & \sfM
		 \end{dcases}. \ee 
		The smaller $\Psi_B(\infty)$ is, the larger $\lan m^4\ran$ is relative to $\lan m^2 \ran^2$, and the more Gaussian the correlations. The ordering of $\Psi_B(\infty)$ in \eqref{psibinf} is thus intuitively consistent with the ordering $\b_{\sfR_2}< \b_\sfF < \b_\sfM$, since correlations away from the Gaussian fixed point lead to $\b$ being suppressed down from its mean-field value of $\b_{MF} = 1/2$. 
		
		\sss{Power-law decay in magnetization quench}
		
				\begin{figure}
			\centering
			\includegraphics[width=.49\tw]{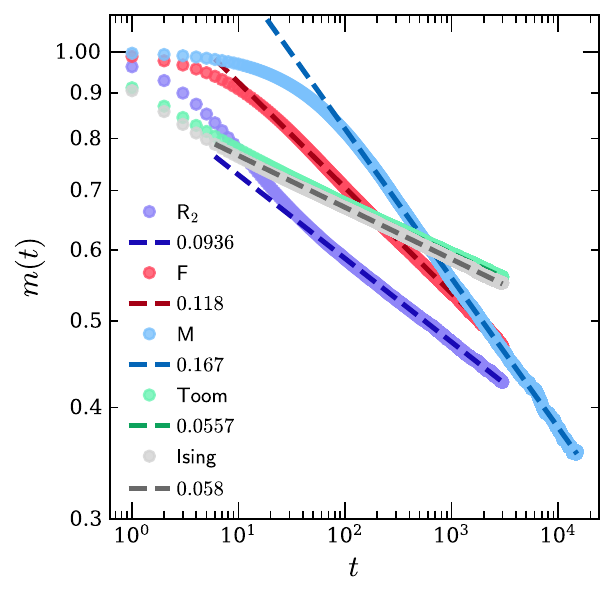} 
			\caption{\label{fig:mt_quench}  Time evolution of the magnetization following a quench from an ordered state to criticality for the squeezing codes, Toom's rule, and critical Glauber dynamics. In all cases, we set the system size at $L=300$, estimate $m(t)$ from $10^3$ samples (error bars not shown), and use the values of $p_c$ in tab.~\ref{tab:exponents}. For $\sfM$ we run the dynamics to $t_{\sf max} = 50L$; for all other automata we take $t_{\sf max} = 10L$. The dashed lines are fits to power laws $ m(t) \sim t^{-\t}$, with $\t $ as indicated in the legend.}
		\end{figure}
		
		\begin{figure}
			\includegraphics[width=.49\tw]{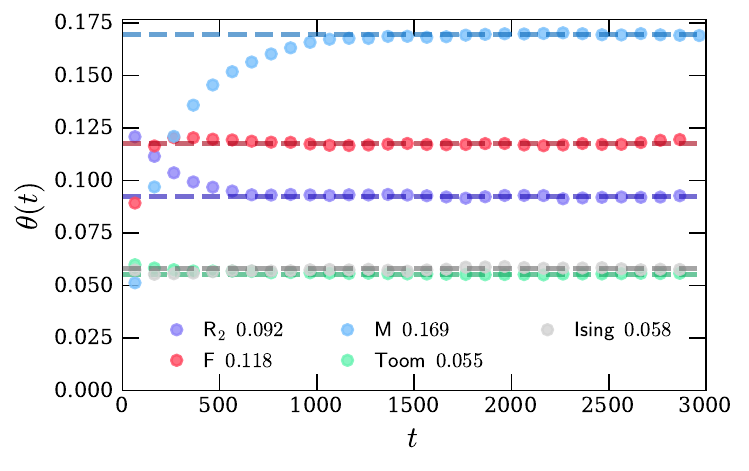}
			\caption{\label{fig:thetat} The time-dependent magnetization exponent $\t(t)$ (c.f. \eqref{thetadef}) for the quenches shown in Fig.~\ref{fig:mt_quench} up to time $t=3000$ (for $\sfM$, $\t(t)$ remains close to $0.17$ out to at least $t = 1.5 \times 10^4$). The horizontal dashed lines are averages of the data over the last $50\%$ of the displayed time interval, and the values of $\t$ at which they are drawn are indicated in the legend. }
		\end{figure}                
		
		As an alternate check on these values of $z$,  we simulate the same quench from $p=0$ to $p=p_c$, and fit the falloff in the time-dependent magnetization to the universal scaling form \cite{janssen1989new,hinrichsen2000non}
	\be\label{mquench_scaling}  m(t)  \sim t^{-\t}, \qq \t = \frac\b{\nu z},\ee
	which is valid for times $1 \ll t \lesssim L^z$.\footnote{As a sanity check, we find that the collapse in Fig.~\ref{fig:static_stuff} gives within error bars the same estimates of $p_c$ as does a search for the noise strength at which $m(t)$ is a clean power law. }
	
	The result of performing these quenches is shown in Fig.~\ref{fig:mt_quench} for systems of size $L=300$, where the squeezing codes are compared to Toom's rule and critical Ising-model Glauber dynamics. The power law behavior takes a while to set in for the squeezing codes (especially for $\sfM$), although we have checked that the onset time appears to be independent of $L$. The dynamics is run out to a maximum time of $t_{\sf max} = 10L$ for all squeezing codes except $\sfM$, for which we take $t_{\sf max} = 50L$ on account of the longer onset time.\footnote{With the exception of $\sfM$, $t_{\sf max}$ is comfortably smaller 
	than our estimates of $L^z$, and so the entire time window should be within the regime where \eqref{mquench_scaling} is applicable. For $\sfM$ this separation does not hold, but the extracted value of $\t$ gives a good fit to $\lan m(t)\ran$ for the entire time range past the initial onset. }
	
	The quality of the power-law scaling can be seen more transparently by plotting the time-dependent exponent 
	\be \label{thetadef} \t(t) = \log_{10}\(\frac{\lan m(t/10)\ran}{\lan m(t)\ran}\).\ee 
	This is done in Fig.~\ref{fig:thetat}, which demonstrates a good quality power-law dependence that sets in rapidly for Toom's rule and Glauber dynamics, less rapidly for $\sfR_{2}$ and $\sfF$, and rather slowly for $\sfM$. The quality of the fits observed here indicates that the values of $p_c$ identified from the scaling collapses of Fig.~\ref{fig:static_stuff} at smaller system sizes ($L\leq 196$) are indeed quite close to the true critical point. Additionally, the near-identical slopes for Toom's rule and Glauber dynamics show that the former is in the same {\it dynamic} universality class as model-A dynamics. 
	
	From these values of $\t$, the relation $\t = \b / \nu z$, and the values of $\b,\nu$ provided in Fig.~\ref{fig:static_stuff}, we extract the values 
	\be \label{zvalues_quench} z = \begin{dcases} 1.88(6) & \sfR_2 \\ 
	1.60(3) & \sfF\\ 
1.35(6) & \sfM \end{dcases}. \ee 
	The values in \eqref{zvalues_quench} are close to those obtained from the collapse of $B(t)$,  satisfy the ordering in \eqref{zordering}, and are all significantly less than 2. They are, however, all smaller than the values obtained from collapsing $B(t)$, and for $\sfR_2$ and $\sfF$ this difference is larger than one standard deviation. A more detailed analysis will need to be done to determine to what extent this discrepancy is due to the data in Fig.~\ref{fig:mt_quench} being taken at a larger system size than the data in Fig.~\ref{fig:static_stuff}, as compared to the quenches being done not exactly at $p=p_c$.

%

	\ss{Intrinsically non-equilibrium critical points}

	From the available data, we conclude that even if the static exponents $\b,\nu$ do eventually asymptote to model-A values at very large system sizes (which, according to the analysis of App.~\ref{app:numerics}, seems unlikely), the transitions in the squeezing codes nevertheless lie in a distinct {\it dynamic} universality class, separate from critical model-A dynamics. 
		
	What is the nature of these new universality classes? We will not answer this question in this work, although there is one immediate observation we can make from the estimated values of $z$. Ref.~\cite{masaoka2024rigorous} proved that any critical Markov process which is local and obeys detailed balance must have $z\geq2$. Our estimates of $z$ for all three squeezing codes are  less than 2, and are significantly smaller than 2 for $\sfF$ and $\sfM$. This means that not only are the critical points in these squeezing codes distinct from critical model-A dynamics, but also they are in fact {\it intrinsically non-equilibrium}, viz. do not admit an effective equilibrium description at long distances. This is in contrast to Toom's rule, which as we have just seen, has $z_{\sf Toom} \approx z_A$. The other continuous phase transitions in local stochastic models with $z < 2$ the authors are familiar with are KPZ and percolation, neither of which has similar exponents. Understanding what is happening in the present case of the squeezing codes thus constitutes an outstanding challenge for future work.

	\section{Synchronicity-protected memories} \label{sec:synch_protection} 
	
	In this section, we discuss in more detail the phenomenon of synchronicity protection observed in the $\sfT$ automaton, which acts as a robust memory only under synchronous updates. 
	
	\begin{figure}
		\includegraphics[width=.48\tw]{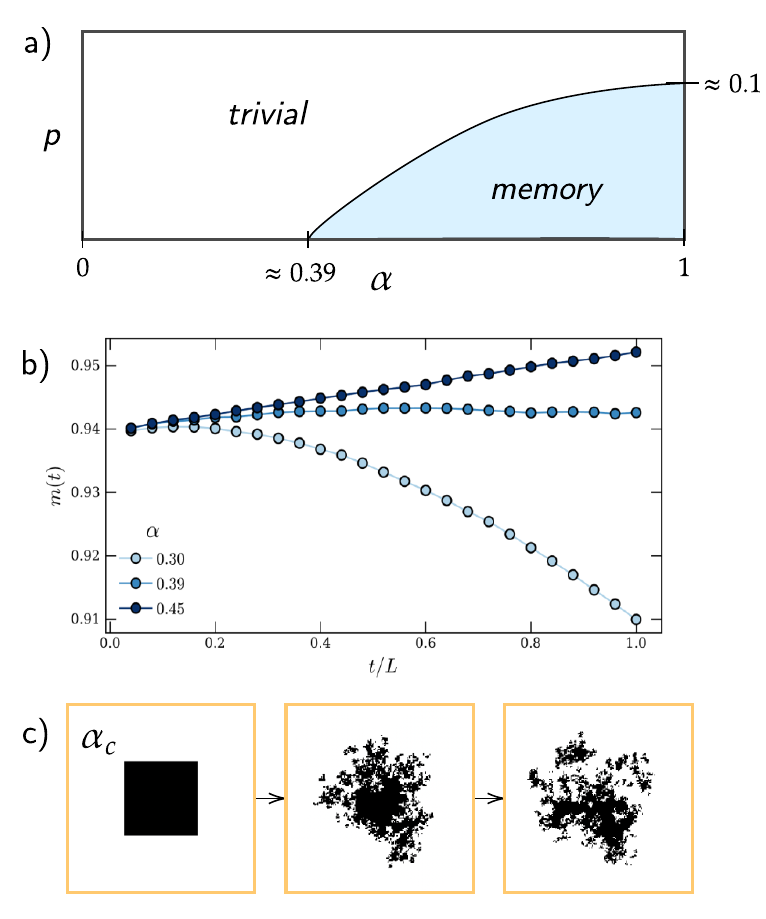} 
		\caption{\label{fig:synch_fig} The synchronicity transition in the $\sfT$ automaton. {\sf a)} A schematic phase diagram in the $(p,\a)$ plane. {\sf b)} The average magnetization $m(t)$ following a quench under noiseless dynamics in a state with a small circular minority domain; the critical point occurs at $\a_c \approx 0.385$. {\sf c)} Snapshots of the system undergoing such an evolution at $\a = \a_c$ (time moves left to right). The blob moves to the northwest. }
	\end{figure}
	
	We will say that an automaton rule $\mca$ is a {\it synchronicity-protected memory} if $\mca$ is a robust memory under synchronous updates, but not under asynchronous ones. In Sec.~\ref{ss:erosion}, we saw that $\sfT$ is synchronicity-protected. Additionally, in the discussion of Sec.~\ref{sec:meanfield}, we mentioned that {\it all} of the squeezing codes studied here are synchronicity-protected when defined in an odd number of spatial dimensions. As a further example, adding $s_\bfr$ to both $\vee$ and $\wedge$ updates of $\sfM$ turns out to produce a synchronicity-protected memory. 
	
	Going beyond this, we have empirically observed that if $\mca$ is a linearly-eroding squeezing code defined by regions $\mcr^\vee$ and $\mcr^\wedge$ which are not related by any symmetry, then $\mca$ is highly likely to be synchronicity protected (a randomly chosen example has $\mcr^\wedge = \{-\uvx,-\uvy,3\uvx+\uvy\}, \mcr^\vee = \{\uvx, 2\uvy, -\uvx+\uvy\}$). This, however, is not always the case: the automaton obtained by combining the $\sfR_2$ and $\sfR_3$ automata as $\mcr^\wedge = \{\uvy,\bfzero,-\uvy\}, \mcr^\vee = \{\uvx,-\uvx\}$ is not synchronicity-protected. 
	
	For a synchronicity-protected memory, it is possible to quantify the degree of synchronicity needed to sustain logical information. Following similar investigations in the cellular automaton literature \cite{fates2007asynchronism,pajouheshgar2025exploring}, we define updates with an intermediate degree of asynchronism $\a$ by taking, at each time step, a given site to update with probability $\a \in [0,1]$ (with the length of each step being $dt = 1/\a$). $\a=1$ thus corresponds to fully synchronous updates, while $\a=0$ corresponds to fully asynchronous updates. All of the synchronicity-protected memories we have investigated turn out to undergo a percolation-type transition between a robust memory and trivial disordered dynamics at an intermediate critical synchronicity parameter $0<\a_c<1$. 
	
	The synchronicity transition is illustrated in Fig.~\ref{fig:synch_fig} for the $\sfT$ automaton, for which $\a_c \approx 0.385$ at $p=0$ ($\a_c$ increases with $p$, leading to the phase diagram shown in the top panel of the figure). 	
	One way of probing this transition is to look at the time evolution of the magnetization $m(t)$ following a quench from an initial state with a small minority domain. When $\a>\a_c$, the dynamics is error-correcting, and the magnetization (ballistically) returns to its logical-state value as the minority domain is eroded. When $\a < \a_c$, by contrast, the minority domain turns into a blob of disordered spins which ballistically expands, causing the magnetization to drop towards $0$. At $\a=\a_c$, the size of the minority domain is unchanged in time, with the domain forming a disordered blob of unchanging size. This is illustrated in panels b and c of Fig.~\ref{fig:synch_fig}.

	In all examples considered so far, synchronicity is beneficial for maintaining a memory. However, we note that it is also possible for synchronicity to have deleterious effects. Examples of such {\it asynchronicity}-protected memories were constructed by making the $s_\pm$ states unstable in \cite{pajouheshgar2025exploring}, and the squeezing codes $\sfR,\sfF,\sfM$ can likely be made asynchronicity-protected in the same way. 
	
	\section{Outlook}\label{sec:outlook}
	
	This work has explored new ways of stabilizing robust many-body memories. We found examples of memories stabilized by fluctuations and by synchronicity, as well as new intrinsically non-equilibrium phase transitions. We end by discussing a few open questions raised by our work. 
	
	\ss{Mechanisms for robust memories } 
	
	One important goal is to elucidate the general mechanisms by which robust memories can be constructed.
	Staying within the context of squeezing codes, it would be valuable as a first step to rigorously identify the necessary and sufficient conditions that $\mcr^\wedge,\mcr^\vee$ must satisfy (c.f. \eqref{general_sqz}) in order for a squeezing code to constitute a robust memory under synchronous updates; one strategy for approaching this problem may be to make use of Toom's zero-set technology (see App.~\ref{app:zerosets}). Making fully rigorous statements for asynchronous updates seems to be more difficult, but it would nevertheless be valuable to obtain at least a qualitative understanding of the scenarios in which synchronicity is important.
	
	All of the robust memories in $d=2$ discovered so far protect only a single bit of information (with the exception of a 2D version of Gacs' infamous 1D automaton \cite{gacs1989self}). What distinct mechanisms exist for stabilizing a larger number of bits, and what is the physics of such memories? A first step towards addressing this question would be to look at $\zn$ generalizations of squeezing codes (of which there are several), whose coarsening dynamics and critical phenomena should also be interesting to understand. One may also consider extensions to dynamics with continuous symmetries, which may have the potential of storing a thermodynamically large amount of logical information, given that the memory states may be defined with an angular resolution that scales with system size.

	\ss{Universality classes of noise-robust dynamics}
	
	From a condensed matter perspective, it is natural to ask whether or not the various squeezing codes defined in this work belong to the same phase of matter. Answering this question requires defining exactly what one should mean by a ``phase'' in the context of non-equilibrium dynamics, and since doing so is in fact rather nontrivial, we will not elaborate on the details here (see e.g.  \cite{rakovszky2024defining}). We instead content ourselves with a few general remarks. 
	
	One notion of phase equivalence comes from the renormalization group, with two models of dynamics being in the same phase if they flow to the same fixed point under RG. In this sense, the different squeezing codes studied here should constitute distinct phases of matter, since they perform error correction in qualitatively different ways and with different symmetries, and a good RG scheme should preserve information about how error correction is performed along its flow. However, a satisfactory RG for noisy cellular automata has not yet been fully worked out, and developing one would be an important next step towards understanding the different universality classes of robust memories.
	
	Another notion of phase equivalence can be formulated in terms of ``smooth'' deformations to the dynamics. To this end, we may say that two probabilistic cellular automata $\mca_a,\mca_b$ are in the same ``memory phase'' if there exists a one-parameter family $\mca(\l)\, \l \in [0,1]$ of probabilistic cellular automata such that 
	\begin{enumerate} 
		\item $\mca(0) = \mca_a$ and $\mca(1) = \mca_b$;
		\item $\mca(\l)$ is a robust memory preserving the same logical bit(s) of information for all $\l \in [0,1]$; and 
		\item the logical states of $\mca_a$ are mapped bijectively to those of $\mca_b$ under a path where $\l$ is slowly tuned from $0$ to $1$.
	\end{enumerate}
	$\mca_a$ and $\mca_b$ are thus in the same memory phase if $\mca_a$ can be smoothly turned into $\mca_b$ while faithfully retaining encoded information during this process. The extent to which relationships hold between this notion of phase equivalence and a notion based on RG is not something we will concern ourselves with here. 
	
	The simplest one-parameter family of rules which interpolates between $\mca_a$ and $\mca_b$ is the linear interpolation $\mca(\l) = (1-\l)\mca_a+ \l \mca_b$, here meaning that the rule $\mca_a$ is applied with probability $1-\l$ and $\mca_b$ with probability $\l \in [0,1]$. 
    In general, the dynamics $\mca(\l)$ constructed in this way will {\it not} be a robust memory for intermediate values of $\l$. 	
	Nevertheless, in App.~\ref{app:samephase}, we give an explicit construction of a different one-parameter family of rules which smoothly interpolates between {\it any} two rules $\mca_a,\mca_b$ (possibly with different synchronicity parameters), provided that both rules stabilize the same number of logical bits, and that all logical states possess local order parameters. 
	This means that all of squeezing codes considered in this work, as well as Toom's rule and the models of Ref.~\cite{pajouheshgar2025exploring}, are in the same ``memory phase''. In this interpretation, the results of Sec.~\ref{sec:crit} mean that this single memory phase thus likely has multiple distinct universality classes of continuous phase transitions on the boundary between it and the trivial disordered phase.

	\ss{New non-equilibrium critical points}
	
	When it comes to phase transitions out of memory phases, the most immediate open question concerns the nature of the universality class(es) at the zero-bias phase transitions of the $\sfR,\sfF,\sfM$ squeezing codes. All three squeezing codes appear to be in distinct dynamic universality classes, none of which agree with the model-A expectation. It is also perhaps surprising that $\sfF$ and $\sfM$ appear to be in distinct universality classes, even though they possess essentially the same symmetry (viz. spin-flip combined with a spatial reflection). Most importantly, $\sfF,\sfM$ appear to be intrinsically non-equilibrium, not admitting any effective description in terms of an equilibrium free energy (unlike Toom's rule), and do not appear to be related to KPZ or percolation. The theoretical characterization of these critical points thus presents an outstanding challenge, which will be important to address in future work. 
	
	One may try to perturb model-A dynamics using the non-equilibrium terms in the Langevin equations of \eqref{langevins}, but unless the conclusions of the 1-loop $\ep$ expansion calculations of \cite{bassler1994critical,tauber2002effects} fail at higher orders, model-A dynamics will remain stable to all such perturbations. One possibility may be to instead use a non-equilibrium extension of conformal perturbation theory directly about the 2D Ising CFT (although we have not checked if the static correlations at these critical points obey conformal invariance).
	To shed more light onto the phase transition with numerics, one may consider continuously interpolating between a model with a model-A critical point (such as Toom's rule or thermal Ising dynamics) and one of the squeezing codes, and seeing how the universality class of the phase transition changes along the interpolation. 
	
	This work has focused on squeezing codes with a discrete $\zt$ symmetry intertwining spin flips with a spatial symmetry.
	Squeezing codes can also produce continuous phase transitions in models without any symmetry. An example is the automaton with $\mcr^\wedge = \{\uvy,\bfzero,-\uvy\}, \mcr^\vee = \{\uvx,-\uvx\}$ mentioned in Sec.~\ref{sec:synch_protection}, which numerics suggest has a continuous phase transition at a particular nonzero value of the bias $\eta$. Many more examples can be produced using the neural cellular automaton framework of \cite{pajouheshgar2025exploring}, whose critical points have not yet been systematically studied. 
	
	The numerics in Sec.~\ref{sec:crit} were presented exclusively for asynchronous updates, and we observe significantly different critical exponents for synchronous ones. The discrepancy between synchronous and asynchronous exponents is known to also occur in Toom's rule \cite{makowiec2000study,takeuchi2006can,ray2024protecting}, but the reasons for this difference, as well as the nature of the universality class to which the synchronous critical point belongs, remain unexplained. It is currently unclear to us what types of theoretical tools can be used to study the critical points of synchronous automata, as traditional methods like the MSR formalism require that the generator of the dynamics be expressible as a sum of local terms. 
	
	A final class of critical points yet to be explored are the transitions driven by synchronicity, such as those occurring in the $(p,\a)$ phase diagram of Fig.~\ref{fig:synch_fig}. The synchronicity-driven transitions appear to be related to percolation, but it is again unclear how to best study such transitions analytically.

    \ms 
	
	\section{Acknowledgments: } 
	
	We thank Ehud Altman, Shankar Balasubramanian, Margarita Davydova, Sarang Gopalakrishnan, David Huse, Vedika Khemani, Jong-Yeon Lee, Yaodong Li, Nick O'Dea, Tibor Rakovsky, Shengqi Sang, Grace Sommers, and Charles Stahl for discussions. We are especially grateful to Ruihua Fan for helpful discussions and collaboration on a related RG calculation. E.L. thanks Aditya Bhardwaj, Ehsan Pajouheshgar, and Nathaniel Selub for collaboration on related work, and Ehsan Pajouheshgar for help with the interactive visualization. E.L. was supported by a Miller research fellowship. All code supporting the numerical results in this paper is available at \cite{code}. 
	
	\appendix
	\makeatletter
	\let\@orig@addcontentsline\addcontentsline
	\renewcommand{\addcontentsline}[3]{%
		\edef\@app@file{#1}\edef\@app@toc{toc}%
		\edef\@app@type{#2}\edef\@app@keep{section}%
		\ifx\@app@file\@app@toc
			\ifx\@app@type\@app@keep
				\@orig@addcontentsline{#1}{#2}{#3}%
			\fi
		\else
			\@orig@addcontentsline{#1}{#2}{#3}%
		\fi}
	\makeatother

	\begin{widetext}

		\ss*{{\bf Appendix contents}}

		\begin{itemize}
			\setlength\itemsep{3pt}
			\item Sec.~\ref{app:numerics}, \emph{Details on the numerics} --- the full pipeline used to extract the critical exponents $\t = (\nu,\b,z,p_c)$ and their statistical and systematic uncertainties, comprising:
			\begin{itemize}
				\setlength\itemsep{1pt}
				\item Sec.~\ref{ss:collapse}, the determination of static scaling exponents: jackknife per-point errors, the Houdayer-Hartmann collapse cost, a joint Nelder-Mead fit across observables, a parametric bootstrap for $\s_{\rm stat}$, a leave-one-$L$-out estimate of $\s_{\rm sys}$, and a restricted-range $L_{\rm min}$ sweep probing residual confluent corrections;
				\item Sec.~\ref{ss:quench}, refined quench-based determinations of $z$: describes how collapses of the time-dependent Binder cumulant and power-law fits of the magnetization following quenches can be used to estimate $z$;
				\item Sec.~\ref{ss:jointchi}, joint collapses with $\chi$: provides scaling collapses that include the magnetic susceptibility;
				\item Sec.~\ref{ss:modelA_collapse}, collapses with model-A exponents: compares against collapses with model-A values.
			\end{itemize}
			\item Sec.~\ref{app:proof}, \emph{Proofs} --- proves the robustness statements in Sec.~\ref{ss:synchproof} using Toom's monotone eroder theorem. Verifies the eroder condition (and its dual) for each squeezing rule, gives the non-robustness obstruction when $\mcr^\vee = R_\pi(\mcr^\wedge)$, and rederives stability through the zero-set criterion.
			\item Sec.~\ref{app:mf}, \emph{Doi-Peliti and cluster mean-field} --- supplies the details of the cluster mean-field analysis of Sec.~\ref{sec:meanfield}. Develops a Doi-Peliti operator formalism for arbitrary noisy CA, applies it as a warmup to Toom's rule, and then derives the cluster MF equations for the $\sfR_2$ and $\sfR_3$ squeezing codes.
			\item Sec.~\ref{app:magnus}, \emph{Langevin equation from the Magnus expansion} --- derives the effective Langevin equation~\eqref{langevins} for the coarsening dynamics of the $\sfR$ rule in the Floquet Glauber setting of Sec.~\ref{ss:glauber}, via a Floquet-Magnus expansions.
			\item Sec.~\ref{app:samephase}, \emph{Connectivity of locally testable classical memories} --- proves that any two  classical memories with local order parameters that store the same number of logical bits are continuously connectible.
		\end{itemize}

		\section{Details on the numerics}\label{app:numerics}
		
		In this appendix, we provide details on the methodology behind the numerical results discussed in Sec.~\ref{sec:crit}. The code producing these results is available in full at \cite{code}.

		\ss{Finite-size scaling collapse and exponent uncertainties in equilibrium} \label{ss:collapse}

		The quantities we are interested in extracting from our Monte Carlo numerics are the three independent critical exponents $\nu,\b,z$ and the value $p_c$ of the critical noise strength, which we will notationally combine into a single vector 
		\be \t =(\nu,\b,z,p_c).\ee 
		In this section, we describe how we use Monte Carlo to estimate both the value of $\t$ and the uncertainty thereof. We will estimate all components of $\t$ using results {\it equilibrated} to the best of our numerical capability. An additional more powerful method for estimating $z$ from a dynamic quench---which relies on our ability to first accurately identify $p_c$---will be described in a subsequent subsection. 
		
		Our analysis combines three ingredients: 
		\begin{enumerate} 
			\item per-point
			errors on each measured observable, estimated by block jackknife (see below); 
			\item a Houdayer--Hartmann \cite{houdayer2004low} collapse cost that measures how well a trial set
			of exponents brings the measured curves onto a single master curve; and 
			\item a single joint fit of all observables to that cost,
			producing a central estimate for $\theta$. 
		\end{enumerate} 
		We then attach two kinds of uncertainty to $\theta$: statistical component $\sigma_{\rm stat}$ from a parametric
		bootstrap that propagates the per-point jackknife errors into the
		fit, and a systematic component $\sigma_{\rm sys}$ from a
		leave-one-$L$-out diagnostic that probes the sensitivity of the fit
		to confluent corrections beyond the leading scaling ansatz (both explained in detail below). The total
		quoted error in our reporting of the components of $\t$ is then 
		\be \label{sigmatot} \sigma_{\rm tot}=\sqrt{\sigma_{\rm stat}^2+\sigma_{\rm sys}^2}.\ee 		
		The results of running this procedure for the magnetization and Binder cumulant (which determines $p_c,\nu,$ and $\b$) were shown in the scaling collapses of Sec.~\ref{sec:crit} of the main text. 
		
		In the following, we 
		describe in more detail the procedure used to determine the values of the optimal exponents, as well as the errors in each term of \eqref{sigmatot}. 
		
		\sss{Ingredient 1: jackknife errors on measured observables}
		\label{sec:jackknife}
		
		For each system size $L$ and noise value $p$ we collect a time series
		of $n_{\rm data}$ samples of the magnetization on an
		equilibrated trajectory. From this series we measure 
		\be m = \left \langle \left | \frac1{L^2} \sum_i s_i \right|\right\ran,\ee
		the Binder cumulant
		\be B = \tfrac{1}{2}(3 - \langle m^4\rangle/\langle m^2\rangle^2),\ee 
		and
		the autocorrelation time $\tau_M$ of the magnetization as extracted from an
		exponential fit to the connected autocorrelation function of $M = |\sum_i s_i|$.\footnote{We also measure the susceptibility $\chi$, but do not include it in the fit because its associated critical exponent $\g$ is determined from $\nu$ and $\b$ by hyperscaling. In the results quoted in the main text, we have verified that including $\c$ in the fit gives a value of $\g$ that is consistent within error bars of the value dictated by hyperscaling, and does not significantly change the extracted values of $p_c,\nu,\b$.}
		
		We estimate the statistical error on each observable by a
		\emph{block jackknife}. The $n_{\rm data}$-long time series is divided
		into $n_b = 32$ contiguous blocks of equal block size
		$b = n_{\rm data}/n_b$. Consecutive samples within a block are
		typically correlated; samples in different blocks are effectively
		independent, provided $b \gg \tau_M$. We then form the $n_b$
		leave-one-block-out estimators
		\begin{equation}
			O_{(k)} \;=\; O\bigl(\{x_t\}_{t \notin \text{block }k}\bigr),
			\qquad k = 1,\dots,n_b,
		\end{equation}
		where $O \in \{ m,B,\tau_M\}$ and the input is the data $\{x _t\}$ of the time series with block $k$
		removed. The jackknife estimate of the standard error on $O$ is
		\begin{equation}
			\sigma_O^{\rm jk}
			\;=\;
			\sqrt{
				\frac{n_b - 1}{n_b}\,
				\sum_{k=1}^{n_b}\bigl(O_{(k)} - \bar{O}\bigr)^2
			},
			\qquad
			\bar{O} \;=\; \frac{1}{n_b}\sum_{k=1}^{n_b}O_{(k)}.
			\label{eq:jk-sigma}
		\end{equation}
		This estimator is asymptotically equivalent to the naive standard
		error for linear functionals and correctly inflates the error for
		nonlinear ones such as $B$. The $(n_b{-}1)/n_b$ prefactor
		reflects the fact that each leave-one-out resample shares $(n_b{-}1)/n_b$
		of the data with the full sample, so the individual $O_{(k)}$ are
		highly correlated and their naive sample variance underestimates the
		uncertainty. We obtain one value of $\sigma_O^{\rm jk}$ for every
		$(L, p, O)$ triplet.
		
		\sss{Ingredient 2: quantifying a scaling collapse}
		\label{sec:hh-cost}
		
		Near the critical point each observable $ O \in \{B,m,\tau_M\}$ is expected to satisfy a
		finite-size scaling form
		\begin{equation}
			O_L(p) \;\approx\; L^{\omega_O}\,
			\Phi_O\!\Bigl(\tfrac{p - p_c}{p_c}\,L^{1/\nu}\Bigr),
			\label{eq:fss-ansatz}
		\end{equation}
		with observable-specific rescaling exponents
		\begin{equation}
			\omega_B = 0,\qquad
			\omega_{m} = -\beta/\nu,\qquad
			\omega_{\tau_M} = +z,
		\end{equation}
		and a single dimensionless scaling function $\Phi_O$ per observable.
		If the true values of $\t$ are known, plotting the
		rescaled points $(X_{L,i}, Y_{L,i}) = (\,(p_{L,i} - p_c)/p_c \cdot
		L^{1/\nu},\; L^{-\omega_O}\,O_L(p_{L,i})\,)$ for all $L$ simultaneously
		collapses every curve onto a single master curve (modulo subleading corrections to scaling).
		
		Given a trial value of $\theta$, we quantify the
		residual deviation from an exact collapse using a variant of the
		Houdayer--Hartmann $\chi^2$~\cite{houdayer2004low}. For a given
		observable $O$, for each measured point $(L, i)$ we linearly
		interpolate the master curve at $X_{L,i}$ using the union of all
		measurements from the \emph{other} system sizes $L' \neq L$. Call the
		interpolated value $\hat Y_{L,i}$ and its propagated interpolation
		error $\hat\sigma_{L,i}$. Points whose $X_{L,i}$ falls outside the
		combined $X$-range of the other sizes are skipped, as the master curve
		cannot be sampled there. The cost is then the weighted mean-square
		deviation of each measured $Y_{L,i}$ from the others' interpolant:
		\begin{equation}
			S_O(\theta) \;=\;
			\frac{1}{|\mcc^O|}
			\sum_{(L,i) \in \mathcal{C}^O}
			\frac{(Y_{L,i}(\theta) - \hat Y_{L,i}(\theta))^2}
			{\sigma_{L,i}^2(\theta) + \hat\sigma_{L,i}^2(\theta)}\,.
			\label{eq:hh-cost}
		\end{equation}
		where $\mathcal{C}^O$ is the overlap set\footnote{Concretely: for each $L$, form the union of rescaled $X$-coordinates from every other size $L' \neq L$, $\mathcal{X}_{\neq L} = \bigcup_{L' \neq L}\{X_{L',i}(\theta)\}$, and let $[X_{\min}^{\neq L}, X_{\max}^{\neq L}]$ be its range. A point $(L,i)$ is included in $\mathcal{C}^O$ iff $X_{L,i}(\theta) \in [X_{\min}^{\neq L}, X_{\max}^{\neq L}]$, so that $\hat Y_{L,i}$ is obtained by interpolation rather than extrapolation. Note that $\mathcal{C}^O$ depends on $\theta$; varying $p_c$ or $\nu$ shifts and stretches every $X_{L,i}$, so the set is recomputed at every evaluation of the cost. In practice the points most often excluded are those near the edges of each size's $(p-p_c)$ sweep window, especially for the largest $L$.} and $\sigma_{L,i}$ is the jackknife error of
		the scaled observable, propagated analytically through the multiplicative
		$L^{\omega_O}$ factor (here $Y_{L,i}(\t)$ is the ``data'' and $\hat Y_{L,i}(\t)$ is the ``model''). Equation~\eqref{eq:hh-cost} is essentially a
		reduced $\chi^2$: if the scaling ansatz \eqref{eq:fss-ansatz} is
		exactly correct and the jackknife errors are honest (so that $Y_{L,i}(\theta) - \hat Y_{L,i}(\theta)$ is Gaussian random with variance half of the denominator), $S_O$ should
		fluctuate about unity.
		
		\sss{Ingredient 3: Joint fit across observables}
		\label{sec:joint-fit}
		
		A single observable's cost $S_O$ uses only two or three of the
		exponents (e.g.\ $(p_c,\nu)$ for $B$, $(p_c,\nu,\beta)$ for
		$m$), so minimising it independently for each
		observable generically produces different estimates of $(p_c,\nu)$. Because all four observables describe the same
		underlying critical point, these disagreements are unphysical: any
		remaining discrepancy reflects either finite-size corrections beyond
		the leading scaling form \eqref{eq:fss-ansatz} or statistical noise.
		We eliminate this ambiguity by fitting all four observables
		simultaneously with a single shared $(p_c,\nu)$,
		\begin{equation}
			\theta_* \;=\;
			\arg\min_{\theta} S_{\rm tot}(\theta),
			\qquad
			S_{\rm tot}(\theta)
			\;=\;
			\sum_{O \in \{B,\,m,\,\tau_M\}}
			S_O(\theta),
			\label{eq:joint-cost}
		\end{equation}
		
		The scaling exponents $\beta,z$ only enter their  own summand in
		\eqref{eq:joint-cost}; $(p_c,\nu)$ are coupled across every term and
		must therefore simultaneously collapse all four scaling functions. 
		We minimise \eqref{eq:joint-cost} with SciPy's Nelder-Mead simplex
		algorithm, with the values of model-A dynamics used as a starting point. The result $\theta_*$ is the
		joint-fit central estimate we report in the main text.
		
		\sss{Statistical uncertainty on $\t_*$: parametric bootstrap}
		\label{sec:bootstrap}
		
		The first piece of the error budget on $\theta_*$ comes from the
		finite Monte Carlo statistics. We propagate the per-point jackknife
		errors of Section~\ref{sec:jackknife} through the joint fit using a
		\emph{parametric bootstrap}. In this scheme we assume that each
		measured mean $\bar O_{L,i}$ is a realization of a Gaussian random
		variable centred on the (unknown) true value with standard deviation
		$\sigma_{L,i}^{{\rm jk},O}$, and we generate synthetic data replicas
		by drawing
		\begin{equation}
			O_{L,i}^{(k)} \;\sim\; \mathcal{N}\!\bigl(\bar O_{L,i},\,
			\sigma_{L,i}^{{\rm jk},O}\bigr),
			\qquad k = 1,\dots,K,
			\label{eq:param-boot-draw}
		\end{equation}
		independently at every $(L,i)$ and for every observable, with $K = 500$
		replicas. For each replica we re-run the same minimisation of
		\eqref{eq:joint-cost} (started at $\theta_*$), producing a
		set of fitted parameter vectors $\{\theta_*^{(k)}\}_{k=1}^K$. The
		empirical distribution of $\theta_*^{(k)}$ approximates the
		sampling distribution of $\theta_*$ under the measurement noise.
		
		We summarise the per-component statistical error by the half-width of
		the empirical 68\% percentile interval of the bootstrap distribution.
		Let $\alpha \in \{p_c, \nu, \beta, z\}$ index the components of
		$\theta$, write $(\theta_*^{(k)})_\alpha$ for the $\alpha$-th component
		of the $k$-th bootstrap-refitted parameter vector, and let $q_p(\cdot)$
		denote the empirical $p$-th percentile of its argument over the $K$
		replicas. We then define
		\begin{equation}
			\sigma_{{\rm stat},\alpha}
			\;=\;
			\tfrac12\bigl(q_{84}\!\bigl((\theta_*^{(k)})_\alpha\bigr) -
			q_{16}\!\bigl((\theta_*^{(k)})_\alpha\bigr)\bigr),
			\label{eq:sigma-stat}
		\end{equation}
		the half-width of the central 68\% interval of the marginal
		bootstrap distribution for that component.\footnote{We use this rather than the sample
			standard deviation because it avoids a
			Gaussian assumption on the bootstrap distribution, which is important
			for $p_c$ (whose posterior is typically skewed by the one-sided
			constraint $p_c > 0$), and for $\tau_M$ whose bootstrap
			distribution can be heavy-tailed when the fit window is narrow. }  This determines the first part of the error in \eqref{sigmatot}. 
		
		\sss{Systematic uncertainty: leave-one-$L$-out}
		\label{sec:sys-loo}
		
		The parametric bootstrap captures the statistical scatter that the
		finite-size data $\{(L,p,O)\}$ would exhibit \emph{if the leading
			scaling ansatz \eqref{eq:fss-ansatz} were exact}. It says nothing about
		how well that ansatz actually applies at the system sizes accessible
		to us. A residual $L^{-\omega_1}$ correction to scaling will pull the central fit by an amount that does not shrink as we increase the number of Monte Carlo trajectories, and is as such
		genuinely a systematic error.
		
		To estimate it, we use a simple leave-one-$L$-out
		diagnostic. Let $\mathcal{L} = \{L_1,\dots,L_n\}$ be the set of system
		sizes in the data. For each $j$, we drop the entire $L_j$-row from
		the data set, refit the remaining $\mathcal{L}\setminus\{L_j\}$ to
		\eqref{eq:joint-cost} with the same warm start and the same fixed
		parameters, and record the resulting fit
		$\theta_*^{(-j)}$. The spread of $\{\theta_*^{(-j)}\}_{j=1}^n$
		across exclusions tells us how strongly the central estimate depends
		on any one system size, viz. how much the leading-order ansatz
		would have to be amended at small $L$ to bring the points back onto a
		single master curve. We summarise that spread by the standard
		jackknife formula,
		\begin{equation}
			\sigma_{{\rm sys},\alpha}
			\;=\;
			\sqrt{
				\frac{n - 1}{n}\,
				\sum_{j=1}^{n}\bigl((\theta_*^{(-j)})_\alpha
				- (\bar{\theta_*})_\alpha\bigr)^2
			},
			\qquad
			\bar{\theta_*}_\alpha
			\;=\; \frac{1}{n}\sum_{j=1}^{n}(\theta_*^{(-j)})_\alpha,
			\label{eq:sigma-sys}
		\end{equation}
		mirroring the per-point jackknife formula \eqref{eq:jk-sigma} but
		applied at the level of the fit rather than the level of the time
		series (the diagnostic requires $n \geq 3$, since the
		Houdayer--Hartmann collapse needs at least two curves
		$\{L_{j'}\}_{j' \neq j}$ to interpolate between).
		In our data the dominant entry of $\sigma_{\rm sys}$ is typically the
		one obtained when the smallest $L$ is dropped; this is consistent with
		the smallest $L$ having the largest confluent correction. This determines the second component of the total error in \eqref{sigmatot}. 
		
		
		%
		
		\sss{Assessing exponent drift with restricted-range $L_{\rm min}$ sweep}
		\label{sec:lmin-sweep}
		
		The leave-one-$L$-out estimator of Sec.~\ref{sec:sys-loo} folds
		any sensitivity of the fit to single-$L$ removal into a single
		number $\sigma_{\rm sys}$, but the way that sensitivity arises is
		itself informative: a fit dominated by confluent corrections will
		respond \emph{monotonically} to dropping the smallest size, while a
		fit limited only by statistical noise will in general not. 
		To separate these two cases we run a complementary diagnostic which
		we call the \emph{$L_{\rm min}$ sweep}.
		
		Let $L_1 < L_2 < \dots < L_n$ be the sorted system sizes in the data.
		For each $k = 1,\dots,n-1$ we refit \eqref{eq:joint-cost} on the
		restricted set $\{L_k, L_{k+1}, \dots, L_n\}$, again warm-starting at
		$\theta_*$, and record the result $\theta_*^{(\geq k)}$. The
		$k=1$ entry is just the all-$L$ central fit; each subsequent entry
		discards one more small-$L$ row. Because the leading scaling form
		\eqref{eq:fss-ansatz} is only the asymptotic large-$L$ behaviour and
		the true Binder, magnetization, and autocorrelation
		time admit a confluent expansion of the form
		$O_L(p) = L^{\omega_O}\Phi_O\bigl(\tfrac{p-p_c}{p_c} L^{1/\nu}\bigr)
		\bigl[\,1 + a\,L^{-\omega_1} + \cdots\bigr]$ with a leading correction
		exponent $\omega_1 > 0$, the $L^{-\omega_1}$ contamination in the fit
		shrinks as the smallest $L$ in the data set is raised. The trajectory
		$\{\theta_*^{(\geq k)}\}$ as $k$ rises therefore probes whether the
		asymptotic regime is being reached:
		
		\begin{itemize}
			\item \textbf{Plateau.} If successive entries fall inside one
			$\sigma_{\rm stat}$ of each other, with no sign of a systematic
			trend, the fit has converged.
			\item \textbf{Monotone drift.} If the entries move in one
			consistent direction as $k$ rises, finite-size corrections are
			still pulling the fit at every kept $L$ and the asymptotic regime
			has not been reached.
			\item \textbf{Non-monotone scatter.} If the entries jump around
			without a consistent direction, the diagnostic is being limited by
			the statistical noise of the bootstrap-equivalent fit on a smaller
			data set, and the leave-one-$L$-out spread $\sigma_{\rm sys}$ is
			already accounted for elsewhere.
		\end{itemize}
		
		\begin{figure} 	\includegraphics[width=.31\tw]{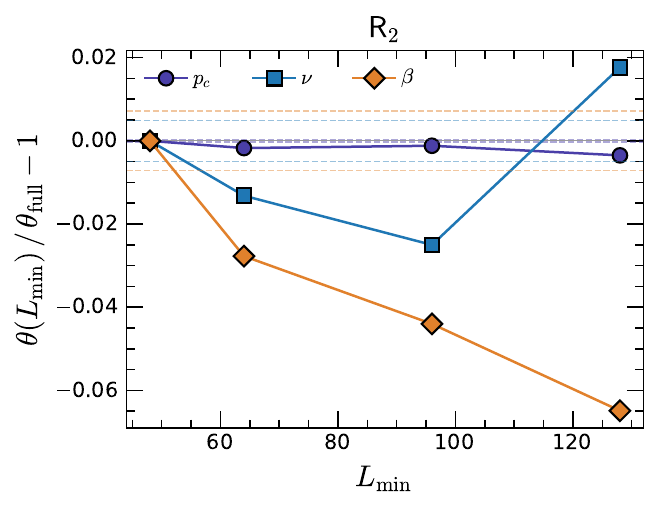} 
			\includegraphics[width=.31\tw]{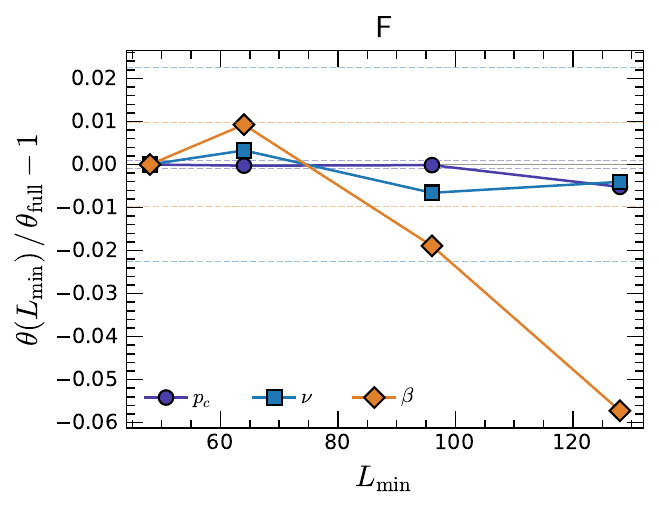} 
			\includegraphics[width=.31\tw]{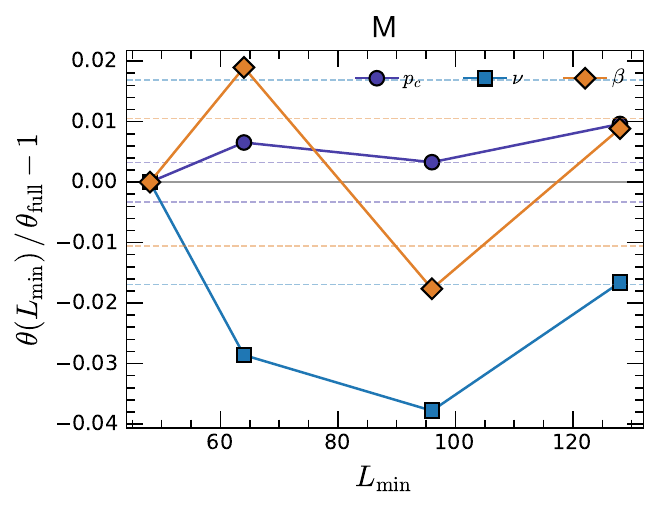} 
			\caption{\label{fig:exponent_drift} The result of doing the $L$-min sweep described in Sec.~\ref{sec:lmin-sweep} for $p_c,\nu$, and $\b$ on the data used to generate the scaling collapses in the main text (Fig.~\ref{fig:static_stuff}). Dashed lines are drawn at the statistical error $\sigma_{\rm stat}$ for each quantity.}
		\end{figure} 
	
		\begin{figure} 	\includegraphics[width=.31\tw]{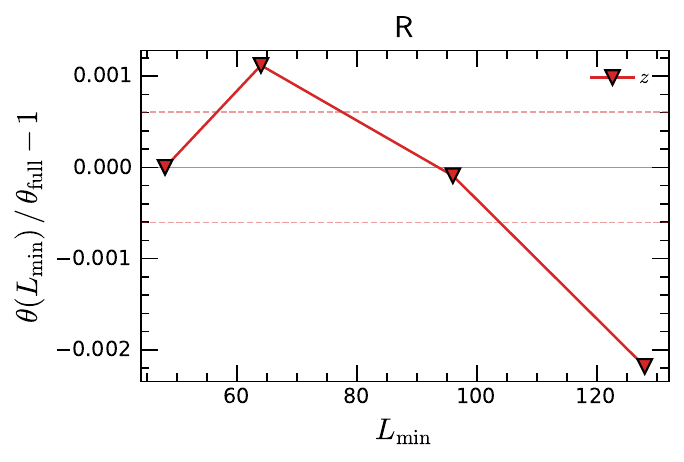} 
		\includegraphics[width=.31\tw]{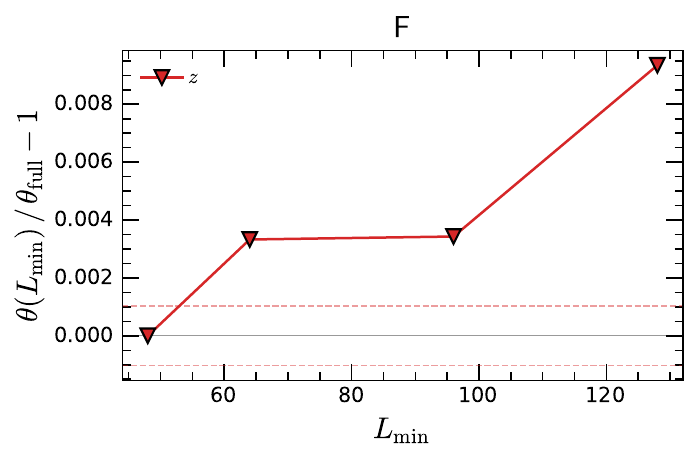} 
		\includegraphics[width=.31\tw]{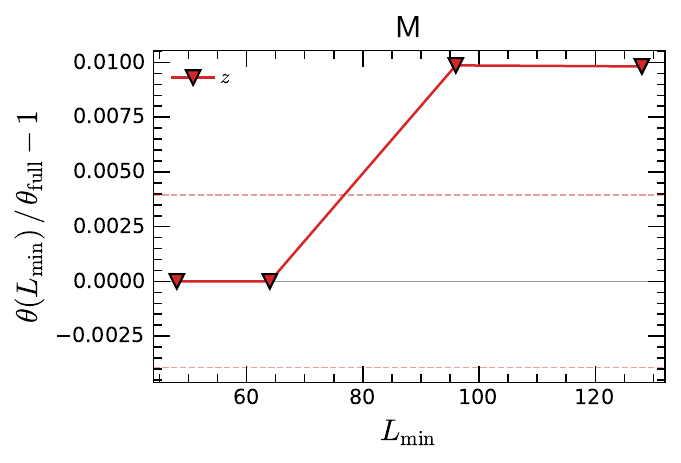} 
		\caption{\label{fig:z_drift} Drifts of the dynamic exponent $z$ as determined by the $L$-min sweep described in Sec.~\ref{sec:lmin-sweep} applied to the data used to generate the scaling collapses of the time-dependent Binder cumulant at criticality (Fig.~\ref{fig:Bt_collapses}). Dashed lines are drawn at the statistical error $\sigma_{\rm stat}$. }
	\end{figure} 
		
		The outcomes of doing this procedure for $p_c,\nu,\b$ for the three squeezing codes studied in the main text are shown in Fig.~\ref{fig:exponent_drift}, indicating that in most cases the exponents appear to have converged over the system sizes shown ($L=32$ to $L=196$); the biggest exception is $\b$ for $\sfR_2$, which may be systematically drifting towards a smaller value. The outcomes of doing this for the procedure of determining $z$ using the collapse of $B(t)$ at the critical point is shown in Fig.~\ref{fig:z_drift}. 
		
		\ss{Estimating $z$} \label{ss:quench}
		
		\sss{Magnetization autocorrelation time}

			\begin{figure}
			\includegraphics[width=.31\tw]{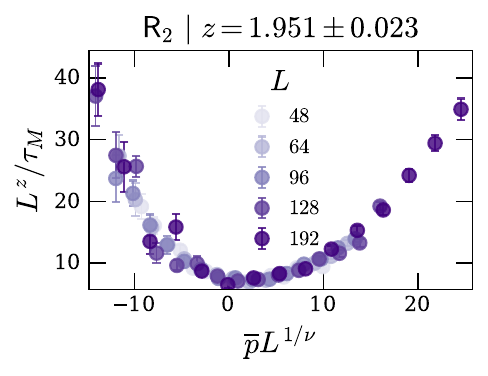}
			\includegraphics[width=.31\tw]{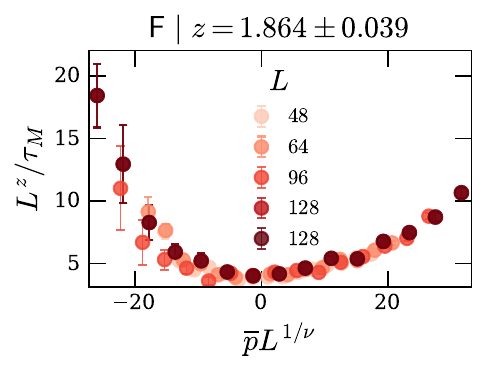}
			\includegraphics[width=.31\tw]{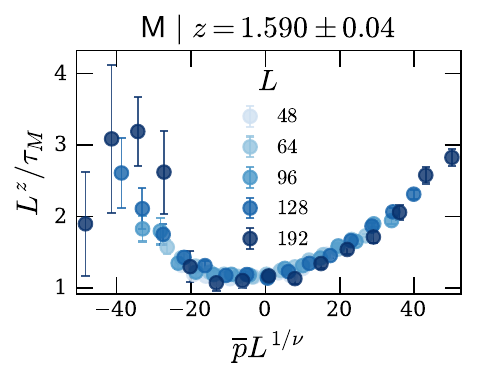}
			\caption{\label{fig:taumcollapse}
				Scaling collapses for the autocorrelation time of the magnetization, performed jointly with the collapses for the magnetization and binder cumulant shown in the main text.
			}
		\end{figure}
		Fig.~\ref{fig:taumcollapse} shows the collapses of $\tau_M$  (see below) that accompany these fits, which determines $z$. In practice, better-quality estimates for $z$ are given by looking at quenches (see the main text and Sec.~\ref{ss:quench} below), but we include this data here for completeness.  
		
		Fitting $\tau_M \sim L^z$ is however often a suboptimal way of determining $z$, since it relies on determining a slope of an exponential decay in the autocorrelation function, and in practice yields significantly noisier estimates for $z$ than the same data does for $\nu$ and $b$. In practice, we thus use two additional ways of estimating of the dynamical exponent $z$ from
		out-of-equilibrium quench simulations. Both run trajectories from
		an ordered initial state at $p = p_c$, sample observables every MC
		sweep, and average over $n_{\rm samples}$ independent trajectories;
		they differ in which observable they target and which scaling form
		they invert.
		
		\sss{Time-dependent Binder cumulant collapse}
		
		The trajectory-averaged Binder cumulant after the quench at the critical point obeys the
		dynamic finite-size scaling form
		\begin{equation}
			B(t, L) = \Psi_B(t/L^z),
			\label{eq:binds-t-collapse}
		\end{equation}
		with $\Psi_B(0) = 1$ and a non-trivial late-time plateau $\Psi_B(\infty)$ whose value is characteristic of the universality
		class. Treating $\{B(t_i, L_j)\}_{i,j}$ as a single
		``observable'' indexed by its time stamp instead of $p$, we minimise
		the same Houdayer-Hartmann collapse cost \eqref{eq:hh-cost} but with
		$z$ as the only free parameter, the rescaled coordinate set to $X =
		t/L^z$, and no multiplicative $L^{\omega}$ prefactor on $Y = B$. The
		per-point statistical errors $\sigma_{B(t_i, L_j)}$ feeding the cost
		come from a jackknife over the $n_{\rm samples}$ independent
		trajectories at each $L$ (each trajectory plays the role of a block
		in \eqref{eq:jk-sigma}). Statistical and systematic errors on $z$
		follow from running the bootstrap and leave-one-$L$-out machinery of
		Sections~\ref{sec:bootstrap}--\ref{sec:sys-loo} on the resulting
		cost surface, identically to the equilibrium fits.
		
		The collapse form \eqref{eq:binds-t-collapse} relies on the quench
		sitting exactly at $p = p_c$ (which we identify from the equilibrium Monte Carlo simulations discussed previously). Away from criticality, scaling function for $B$ also depends on $(p-p_c) L^{1/\nu}$; for large $L$ the curves at different $L$ approach
		distinct, non-universal late-time plateaus (closer to $1$ for $p < p_c$, and closer $0$ for $p > p_c$), and a one-parameter $t/L^z$ collapse onto a
		single master curve breaks down. This means that the
		quality of the collapse itself is a useful self-consistency check on
		the equilibrium-determined $p_c$. In the main text we do not investigate this explicitly, and directly use the value of $p_c$ reported from the equilibrium systems for all of the quenches. 
		
		\sss{Magnetization power-law decay}
		
		At $p = p_c$, the trajectory-averaged magnetization following the
		ordered-start quench decays. In the window
		$1 \ll t \ll L^z$, where $t$ is large enough to be in the asymptotic
		regime but small enough that finite-size effects have not yet
		arrested the decay, the magnetization $m(t)$ is independent of $L$. From the scaling form $m(t,L) = L^{-\b/\nu}\,\Psi_m(t/L^z)$, the $L$-dependence dictates that $\Psi_m(x) \approx x^{-\b/\nu z}$ for small $x$,  meaning that in this window 
		\begin{equation}
			m(t) \;\sim\; t^{-\theta},
			\qquad
			\theta =\b/(\nu z).
			\label{eq:mt-power-law}
		\end{equation}
		We extract $\theta$ from a linear fit of $\log m(t)$
		against $\log t$, restricted to a window in the asymptotic regime,\footnote{We default to the middle 50\% of the trajectory in $\log t$, with the
			endpoints cross-checked by varying them and confirming the fitted
			slope is stable.} and then invert \eqref{eq:mt-power-law} to obtain $z$, with $\b$ and $\nu$ taken from the equilibrium joint fit discussed previously. Statistical and systematic
		errors on $z$ are obtained by propagating the corresponding errors
		on $\theta$, $\b$, and $\nu$ in quadrature, treating the three as independent.
		
		\ss{Joint collapses with $\chi$}\label{ss:jointchi}
		
		In the main text, we performed scaling collapses using only the Binder cumulant and the  magnetization. These collapses give $p_c,\nu$, and $\b$, and hyperscaling relations allow the remaining static exponents to be recovered. As a sanity check, we may also perform a collapse that additionally involves the susceptibility $\chi$ of the magnetization (which we have not done in the main text, since $\chi$ is noisier). The results of doing this are shown in the following table: 		
		\begin{table}[h]
			\centering
			\renewcommand{\arraystretch}{1.3}
			\begin{tabular}{c|ccc}
				& $\sfR_2$ & $\sfF$ & $\sfM$ \\ \hline
				$p_c$       & $0.038273(32)$ & $0.011431(12)$ & $0.00315(4)$ \\
				$\nu$       & $0.950(18)$    & $0.965(22)$    & $1.03(4)$    \\
				$\beta$     & $0.1647(31)$   & $0.1799(32)$   & $0.219(9)$   \\
				$\gamma$    & $1.589(24)$    & $1.575(30)$    & $1.57(6)$    \\ \hline
				$2(\nu-\beta)$ & $1.570(37)$ & $1.571(45)$    & $1.62(8)$    \\
				$\gamma - 2(\nu-\beta)$ & $+0.02(4)$ & $+0.00(5)$ & $-0.05(10)$ \\
			\end{tabular}
			\caption{Critical point $p_c$ and critical exponents extracted from a joint Houdayer-Hartmann collapse of the Binder cumulant $B$, the magnetization $m$, and the susceptibility $\chi$ (the autocorrelation time $\tau_M$ is excluded from this fit). Uncertainties are $\sigma_{\rm tot}=\sqrt{\sigma_{\rm stat}^2+\sigma_{\rm sys}^2}$, with $\sigma_{\rm stat}$ obtained from a $K=100$ parametric bootstrap on the per-point block-jackknife errors and $\sigma_{\rm sys}$ from leave-one-$L$-out refits across the five system sizes $L\in\{48,64,96,128,192\}$. The bottom two rows compare the fitted $\gamma$ with the hyperscaling prediction $\gamma = d\nu - 2\beta = 2(\nu - \beta)$ in $d=2$.}
			\label{tab:four-param-exponents}
		\end{table}
		
		For all three rules the deviation of $p_c,\nu$, and $\b$ from the collapses in the main text are small (nearly within error bars), and extracted values of $\g$ agree with the predictions from hyperscaling with $d=2$ within statistical uncertainty: the deviation $\gamma - 2(\nu-\beta)$ is $+0.02 \pm 0.04$ ($\sfR$, $0.4\sigma$), $+0.00 \pm 0.05$ ($\sfF$, $0.1\sigma$), and $-0.05 \pm 0.10$ ($\sfM$, $0.5\sigma$). The notable feature is that $\gamma$ comes out essentially indistinguishable ($\approx 1.57 \pm 0.03$) across all three rules even though $\nu$ and $\beta$ individually shift between them; the variation in $\beta/\nu$ between rules tracks the variation in $\nu$ such that $\gamma$ stays roughly constant.

		\ss{Collapses with static model-A exponents}\label{ss:modelA_collapse}

			\begin{figure*}
			\centering
			\includegraphics[width=0.32\textwidth]{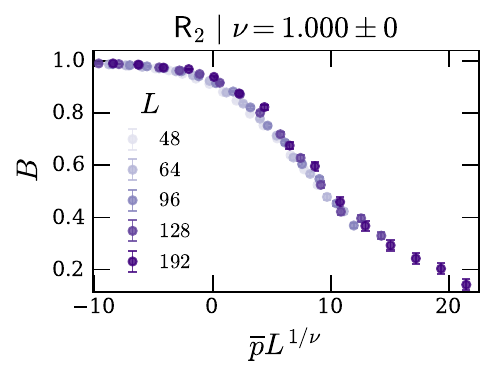} \includegraphics[width=.32\tw]{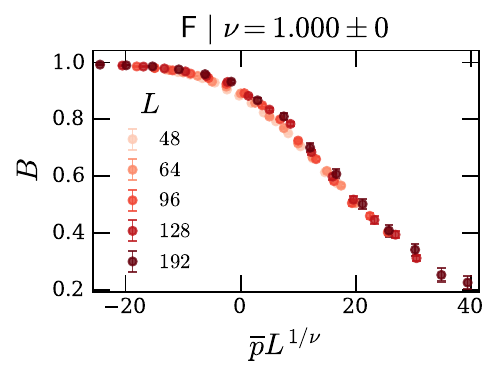}
			\includegraphics[width=.32\tw]{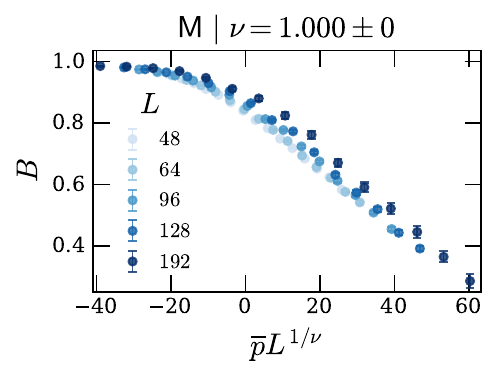} \\ 
				\includegraphics[width=0.32\textwidth]{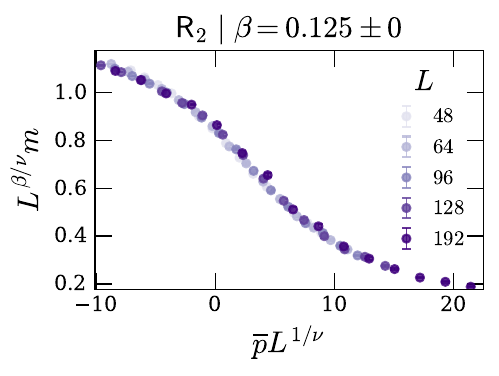} \includegraphics[width=.32\tw]{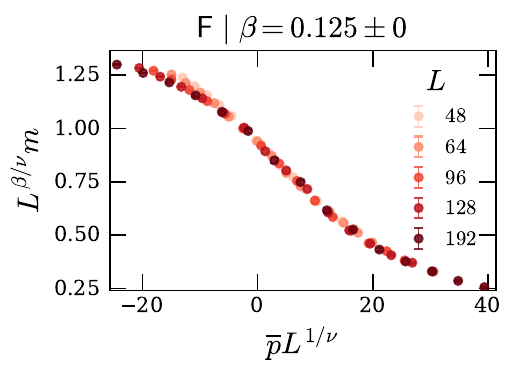}
			\includegraphics[width=.32\tw]{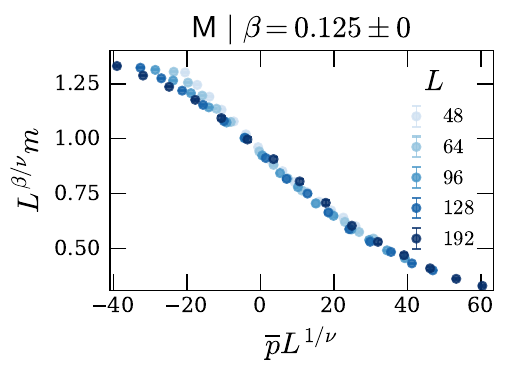}
			\caption{\label{fig:static_stuff_ising} Collapses of the Binder cumulant and magnetization using model-A exponents. The values of $p_c$ identified by this collapse are 
				$p_c = 0.037818 (\sfR)$, $p_c = 0.011213 (\sfF)$, and 
				$p_c = 0.0030319 (\sfM)$. }
		\end{figure*}

		For completeness, in this section we perform the same analysis as in Sec.~\ref{ss:statics}, but with the exponents $\nu,\b$ fixed to their model-A values of $\nu_A= 1, \b_A = 0.125$ (the only unknown we optimze the fit over is  thus the value of $p_c$). The results are shown in Fig.~\ref{fig:static_stuff_ising}. 
		
		Except in the case of $\sfM$ these collapses are perhaps visually not much worse than those in Fig.~\ref{fig:static_stuff}. However, examining the quality of the collapse reveals that the difference is indeed significant. To quantify this, we compare the Houdayer-Hartmann cost $S$ obtained at the two optima. Since $S$ is a reduced-$\chi^2$-like statistic (the average of $(Y-\hat Y)^2/(\sigma^2+\hat\sigma^2)$ over the overlap set $\mathcal{C}^O$), the appropriate test is the $\chi^2$-difference: under the null hypothesis that the model-A values describe the data, the statistic
		\be \chi^2_{\rm diff} = \sum_{O\in\{B,m\}} N^O_{\rm eff} \(S^{\rm model-A}_O - S^{\rm free}_O\) \ee
		is asymptotically distributed as $\chi^2_k$ with $k=2$ degrees of freedom (the number of fixed parameters $\nu,\b$). For each rule, the equivalent ``$n$-$\sigma$'' significance follows from the large-$\chi^2$ approximation $n_\sigma \approx \sqrt{\chi^2_{\rm diff} - k}$. The numbers, with $S_O$ averaged over $N^O_{\rm eff} \approx 70$ points per observable, are:
		\begin{center}
			\renewcommand{\arraystretch}{1.15}
			\begin{tabular}{c|cccc}
				& $S^{\rm free}_B + S^{\rm free}_m$ & $S^{\rm Ising}_B + S^{\rm Ising}_m$ & $\chi^2_{\rm diff}$ & $n_\sigma$ \\ \hline
				$\sfR$ & $2.6$  & $26.2$ & $1650$ & $41$ \\
				$\sfF$ & $2.2$  & $32.8$ & $2170$ & $47$ \\
				$\sfM$ & $4.7$  & $81.6$ & $5540$ & $74$ \\
			\end{tabular}
		\end{center}
		For all three rules the unconstrained per-observable $S$ values sit respectably close to unity (the value expected at the true optimum if the leading scaling ansatz is exact and the jackknife errors are honest), while the model-A-constrained $S$ values are an order of magnitude or more above this. The resulting $\chi^2_{\rm diff}$ exceeds the $k=2$ degrees of freedom by factors of $10^3$--$10^4$, corresponding to formal significances of $40\sigma$ ($\sfR$), $47\sigma$ ($\sfF$), and $74\sigma$ ($\sfM$). Even allowing for substantial underestimation of the per-point errors (e.g., inflating $\sigma_{\rm jk}$ by a factor of two would reduce each $n_\sigma$ by a factor of two), the static critical behaviour of all three rules is statistically incompatible with the model-A universality class at the system sizes studied.

		\section{Proofs}\label{app:proof}
		
		In this appendix we prove the results stated in Sec.~\ref{ss:synchproof}. To do this, we first need to explain Toom's monotone eroder theorem \cite{toom1980stable} (see also \cite{pajouheshgar2025exploring}). 
		
		We define a CA rule $\mca$ to be {\it monotonic} if it implements a monotonic Boolean function, viz. if $x \prec y \implies \mca(x) \prec \mca(y)$, where $x \prec y$ means that $x_\bfr \leq y_\bfr \, \forall \, \bfr$. Additionally, we define the {\it dual} $\mca^\neg$ of $\mca$ as 
		\be  \label{dualdef} \mca^\neg(s) = -\mca(-s).\ee 
		With these definitions, Toom's result \cite{toom1980stable} then states the following: {\it if $\mca$ and $\mca^\neg$ are both monotone eroders, then $\mca$ is a robust memory under synchronous updates}. 
		
		\ss{Proving stability}
		
		We now use this to prove theorem~\ref{thm:robust}. 
		
		\begin{proof} 
			If $\{s_i\}$ are a set of Boolean variables, any function of the form $f(\{s_i\}) = s_{i_1} \odot_1 s_{i_2} \cdots\odot_{n-1} s_{i_n}$ with $\odot_m \in \{\vee,\wedge\}$ is monotonic (non-monotonic functions would involve both variables $s_i$ and their negations $-s_i$). Thus the $\sfR,\sfF,\sfM,\sfT$ automata, as well as their duals, are all monotonic. Furthermore, given the $\wt X$ symmetries they obey, if they are eroders, then so too are their duals. We thus only need to show that each of $\sfR,\sfF,\sfM,\sfT$ are linear eroders under synchronous updates. That this is true is strongly suggested by Fig.~\ref{fig:megafig}, and here we sharpen this into a rigorous argument. 	
			An abstract proof using Toom's zero-sets can be given rather quickly, which we do in a latter subsection. In what follows, we instead give a more elementary proof. 
			
			Consider a state $s_{-,A}$ for which $A$ is a square of linear size $R$ centered at the origin $\bfzero$ (as in Fig.~\ref{fig:megafig}). By monotonicity, showing that each automaton erodes $s_{-,A}$ is sufficient for showing erosion. Indeed, monotonicity means that if $\mca$ erodes all square regions, then it erodes $s_{-,A}$ for any choice of $A$; this follows by enlarging an arbitrary $A$ to the smallest square region in which it is contained. 
			
			Showing erosion of square domains can be done quite directly by examining how large domain walls move under the dynamics. To explain this, we need some notation. For a CA rule $\mca$ and initial state, define the {\it damage} $\dam(t)$ as 
			\be \dam_\mca(t) = \{ \bfr \, : \, [\mca^t( s)]_\bfr = 1 \}.\ee
			In all of what follows, we will take the initial state to be $s_{-,A}$ with $A$ a square of linear size $R$, as above (with the total system size always taken to be much larger than $R$). 
			Additionally, define the half-spaces 
			\bea H^\pm_|(a) & = \{ (x,y) \, : \, x \gtrless a \} \\
			H^\pm_-(a) & = \{ (x,y) \, : \, y \gtrless a \} \\ 
			H^\pm_\diagup(a) & = \{(x,y) \, : \, x-y \gtrless a \} \\ 
			H^\pm_\diagdown (a) & = \{(x,y) \, : \, x+y \gtrless a\}.\eea 
			We will also use the symbol $d^\pm_o(a)$ to denote the spin configuration  $s_{+,H^\pm_o(a)}$, viz. a state on an infinite system with a domain wall along the border defined by $o \in \{|,-,\diagup,\diagdown\}$. 
			Finally, in what follows we will find it convenient to define a single time step as containing both a $\vee$ update {\it and} a $\wedge$ update, with the former coming first (the exact choice of ordering is unimportant). We will adopt this convention in the rest of this proof. 
			
			Consider first $\sfR$. We claim that\footnote{To lighten the notation, we will avoid putting floor functions on fractions like $R/2$ in what follows.}
			\be \label{rdam} \dam_\sfR(t) \subset H^-_-(R/2-t) \cap H^+_-(-R/2+t). \ee 
			This implies $\dam_\sfR(t) = \emptyset \,\, \forall \,\, t > R/2$, thus implying (linear) erosion. Showing \eqref{rdam} is straightforward. Indeed, consider a state $d^+_-(y)$, and consider a site $\bfr=(x,y)$. The $\wedge$ part of the updates in \eqref{sqzrule_r} then guarantees that $\sfR(s)_{(x,y)} = -1$, and this shows that $\dam_\sfR(t) \subset H^-_-(R/2-t)$. An identical argument shows that $\dam_\sfR(t) \subset H^+_-(-R/2+t)$. 
			
			For $\sfF$, we claim 
			\be \dam_\sfF(t) \subset  H^-_-(R/2) \cap H^+_-(-R/2+t), \ee
			so that $\dam_\sfR(t) = \emptyset \,\, \forall \,\, t > R$. This can be shown by considering the action of $\sfF$ on horizontal domain walls. Note that the domain is eroded by time $R$, giving an erosion speed twice as slow as that of $\sfR$; when the updates are made asynchronous, this difference is reflected in \eqref{erosion_speeds}. 
			
			For $\sfM$, we have
			\be \dam_\sfM(t) \subset H^+_\diagdown (-R) \cap H^-_\diagdown(R-2t)\ee 
			giving $\dam_\sfM(t) = \emptyset \,\, \forall \,\, t > R$; this similarly follows by examining the action of $\sfM$ on states $d^\pm_\diagdown(a)$, and also gives an erosion speed twice as slow as $\sfR$. 
			
			$\sfT$ is slightly more complicated: 
			\be \dam_\sfT(t) \subset H^+_\diagup(R-t) \cap H^+_-(-R/2+t) \cap H^-_|(R/2-t),  \ee 
			which follows from analyzing action of $\sfT$ on states $d^\pm_-,d^\pm_|,d^\pm_\diagup$. The intersection of the sets on the RHS is empty when $t > 2R$, giving an erosion speed four times as slow as $\sfR$.
		\end{proof} 
		Toom's rule, as well as the automata rigorously proved to be memories in \cite{pajouheshgar2025exploring}, are defined using majority votes over particular patterns of cells. Like the $\sfT$ rule, these types of dynamics have $X$ symmetry. This means that the domain walls in the states $d^\pm_o$ must move in the same direction under the dynamics. By contrast, the rules $\sfR,\sfF,\sfM$, which lack $X$ symmetry, operate differently: for these rules, there are choices of $o$ for which the domain walls in the states $d^+_o,d^-_o$ move in {\it opposite} directions. This behavior qualitatively distinguishes the dynamics from rules based on majority votes.  
		
		\ss{Proving non-robustness if $\mcr^\vee = R_\pi (\mcr^\wedge)$}
		
		In this subsection, we prove theorem~\ref{thm:nonrobust} from the main text. 
		
		\begin{proof}
			For conciseness, we use schematic language in this proof. The basic idea is that, as explained below, dynamics with this symmetry must erode minority domains in a curvature-driven way; when biased noise is added to drive ballistic expansion of minority domains, the error correcting dynamics is accordingly unable to correct errors quickly enough. 
			
			To illustrate this, consider any domain wall state $d^\pm_o$. Under $\mca$, the domain wall must either move, or remain fixed. Suppose it moves with velocity $\bfv$. Now if $\mcr^\vee = R_\pi(\mcr^\wedge)$, then the dynamics is invariant under $X \circ R_\pi$. $d^\pm_o$ is invariant under this symmetry, but $X \circ R_\pi \, : \, \bfv \mt - \bfv$. Thus the domain wall must also move along $-\bfv$, and this implies that in fact $\bfv = \bfzero$. 
			Therefore all states $d^\pm_o(a)$ must in fact be fixed under $\mca$. 
			
			This argument clearly does not rely on the DW being infinite in extent; any DW extending over more than 3 sites will similarly be invariant. This means that $\mca$ can act nontrivial only at corners between DWs of different orientations, which implies that the error-correcting dynamics of $\mca$ is curvature driven, and standard arguments then show that $\mca$ cannot be a robust memory (since any erosion of minority domains is not fast enough to compete against the ballistic expansion induced by biased noise). 	
		\end{proof}
		
		\ss{Proving stability using zero sets}\label{app:zerosets}
		
		Finally, we provide a quick alternate proof of theorem~\ref{thm:robust} using the ``zero-set'' technology developed by Toom \cite{toom1980stable}. 
		
		\begin{proof}
			For a site $\bfr$, define a {\it zero set} $Z_\bfr$ as a collection of sites for which 
			\be s_{\bfr'}(t) = 0 \, \,\forall \,\,\bfr'\in Z_\bfr \implies s_\bfr(t+1) = 0.\ee 
			Define a {\it minimal zero set} to a zero set of minimal size. Then Toom's zero-set theorem \cite{toom1980stable} says that $\mca$ is an eroder iff the intersection of the convex hulls of the minimal zero sets at any given point is empty: 
			\be \mca\,\, \text{an eroder} \lra \bigcap_{Z_\bfr \in {\sf MinZero}_\bfr } {\sf conv}(Z_\bfr) = \emptyset,\ee 
			where ${\sf MinZero}_\bfr$ is the set of minimal zero sets at $\bfr$, and ${\sf conv}$ denotes the convex hull. 
			
			Consider then calculating the minimal zero sets of the automata under consideration. For convenience, we will group the $\wedge$ and $\vee$ updates into a single larger update (in that order). We will thus use notation where $t\in 2\zz+1$ is in the ``middle'' of a time step. In order for $s_\bfr(t+2) = 0$, we require that $s_{\bfr'}(t+1) = 0 \,\, \forall \,\, \bfr' \in \mcr^\vee_\bfr$, and the ways of ensuring that this happens define the elements of ${\sf MinZero}_\bfr$. 
			
			To complete the proof, we explicitly find 1) two disjoint elements of ${\sf MinZero}_\bfr$ for $\sfR,\sfM,\sfF$, and 2) three elements of ${\sf MinZero}_\bfr$ with trivial intersection for $\sfT$. This is done in Fig.~\ref{fig:zerosets}. 		
		\end{proof}

		\begin{figure*}
			\includegraphics[width=.85\tw]{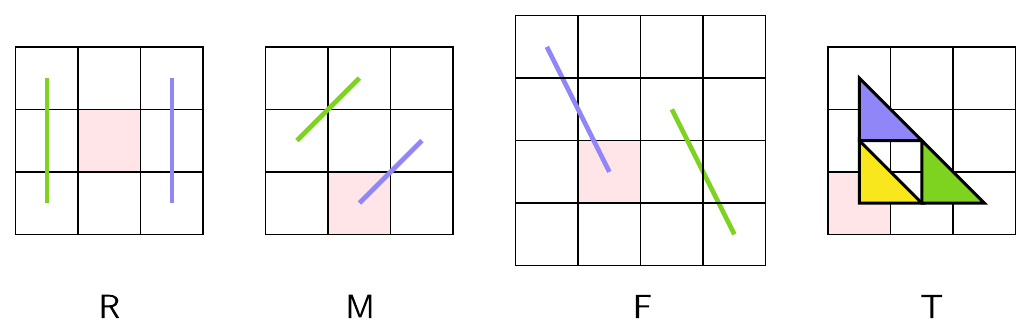} 
			\caption{\label{fig:zerosets} Examples of  minimal zero sets for the rules $\sfR,\sfM,\sfF,\sfT$, used in the proof of theorem~\ref{thm:robust}. The green, purple, and (for $\sfT$) yellow regions are the convex hulls of particular zero sets for the lattice point (square cell) shaded in red. For $\sfR,\sfM,\sfF$ the indicated sets are disjoint, while for $\sfT$ they overlap pairwise, but their intersection is empty.  }
		\end{figure*}

		\section{Doi-Peliti and cluster mean-field}\label{app:mf}
		
		In this appendix we provide the details of the cluster mean-field approach discussed in Sec.~\ref{sec:meanfield}. 
		
		We will begin in Sec.~\ref{ss:operator_formalism} by using the Doi-Peliti technique to derive a framework from which the dynamic cluster MF equations for any noisy cellular automaton may be derived. In Sec.~\ref{app:toom_mf}, we will apply this formalism to Toom's rule, where the analysis is very simple. Then in Secs.~\ref{ss:s2mf} and \ref{ss:r3mf} we derive cluster MF equations for the $\sfR_2$ and $\sfR_3$ squeezing codes, respectively. A similar analysis can be carried out for the $\sfF,\sfM$ squeezing codes, but the details are rather tedious, and will be omitted. 
		
		
		\ss{Operator formalism of automaton dynamics}	\label{ss:operator_formalism}
		
		In the calculations to follow, we will use the Doi-Peliti operator formalism \cite{doi1976second} to derive dynamic cluster mean-field equations for general noisy two-state asynchronous automata. To lighten the notation somewhat, we will use italic roman letters $i,j,\dots$ for site indices rather than the boldface $\bfr,\bfr',\dots$ of the main text. 
		
		The operator formalism is based on writing the evolution of the full probability distribution $P(\{ s_i\})$ as a Markov equation of the form $\k{\p_t P} = +H \k{P}$ for a Fock-space operator $H$ which generates the dynamics. In what follows, we will abuse notation slightly by defining $|+\ran$ as the un-normalized uniform sum over all configurations, so that normalization of $P$ means $\lan +|P\ran =1$. Conservation of probability mandates that $\k+ \in {\rm ker}(H^T)$, and means that for any operator $O$, we may write an evolution equation for the expectation value $\lan O\ran$ as 
		\be \p_t \lan O \ran = \lan + | [O,H] | P\ran.\ee 
		The general approach adopted below will be to calculate these evolution equations for a few simple observables (such as the magnetization and short-ranged correlation functions thereof), simplifying the expectation value on the RHS using an appropriate type of mean field ansatz. 
		
		We are interested in two-state automata, and as such will write $H$ in terms of the operators 
		\be a_i \equiv \frac{X_i-iY_i}2,\qq n_i \equiv \frac{1-Z_i}2,\qq \ob n_i \equiv 1-n_i,\ee
		where $X_i,Y_i,Z_i$ are Pauli matrices at site $i$, with the $n_i,a_i$ satisfying the usual relations
		\be [n_i,a_j] = -\d_{i,j} a_j,\qq  [n_i,a_j^\da] = \d_{i,j}a_j^\da.\ee  
		$H$ is constructed by writing down and enumerating the ways that spins can flip under the dynamics, while taking care to ensure that $\lan +|H = 0$. As a simple example, diffusion of hardcore particles on a lattice would be controlled by the Hamiltonian 
		\be H = \sum_{\lan i,j \ran } (a_i^\da a_j - \ob n_i n_j),\ee 
		where the second term ensures probability conservation $\lan + | H = 0$ on account of the relations 
		\be \lan +| a_i = \lan +|n_i,\qq \lan +|a_i^\da = \lan +|\ob n_i.\ee 
		
		Suppose that the noise-free automaton applies $l$ different types of updates, each occurring with probability $q_a, a = 1,\dots,l$ (e.g. in two dimensions, $\sfR_{2,3}$ both contain two types of updates, each occurring with probability $1/2$). Then in general, the noisy automata dynamics under consideration---where, as in the main text, noise of bias $\eta$ is applied with strength $p$---can always be formulated using the Hamiltonian 
		\bea H & = \sum_i \Bigg( (a^\da_i - \ob n_i) \((1-p) \sum_{a=1}^l q_a \Pi^{a,-1\ra 1}_i + p \frac{1+\eta}2\) + (a_i - n_i) \((1-p) \sum_{a =1}^l q_a \Pi^{a,1\ra -1}_i + p \frac{1-\eta}2\) \Bigg),\eea 
		where $\Pi^{a,\pm 1 \ra \mp 1}_i$ is a projector onto the states that would result in the spin at site $i$ flipping from $\pm1$ to $\mp1$ under the action of the update rule labeled by $a$. For the dynamics of interest to us, these projectors will always be constructed as a polynomial in the operators $n_j$ for $j$ nearby $i$. 
		Since we will only be interested in the dynamics of expectation values of products of the $n_i$, we only need to calculate the commutator of $H$ with $n_i$. This is easily done: let a given such operator be written as $O = \prod_{i \in I} n_i$ for some finite set of indices $I$. Then we just need the commutator  
		\bea \lan +| [n_i,H] & =  \lan + | \ob n_i \((1-p)\sum_a q_a \Pi^{a,-1\ra1}_i + p \frac{1+\eta}2\) - \lan + | n_i \( (1-p)\sum_a q_a \Pi_i^{a,1\ra -1} + p \frac{1-\eta}2\) \\ 
		& = \lan + | (1-p) \sum_a q_a (\Pi_i^{a,-1\ra 1} - \Pi_i^{a,1\ra -1}) + \lan + | p \( \frac{1+\eta}2 - n_i\).\eea 
		To simplify this, define $\Pi^{a,\ra 1}_i$ as the projector onto all states for which an application of the automaton update $a$ would result in the spin at site $i$ being $+1$. We may then write 
		\be \Pi^{a,-1\ra1}_i = \Pi^{a,\ra1}_i - \Pi^{a,1\ra1}_i,\ee 
		so that 
		\be \Pi^{a,-1\ra1}_i - \Pi^{a,1\ra-1}_i = \Pi^{a,\ra1}_i - \Pi^{a,1\ra1}_i - \Pi^{a,1\ra-1}_i = \Pi^{a,\ra1}_i - n_i.\ee 
		
		With this, a short calculation gives 
		\bea \p_t \lan O \ran & = \sum_{i\in I} \left \lan \prod_{j \in I_i}n_j \( (1-p) \( \sum_{a = 1}^l q_a(\Pi^{a,\ra1}_i  - n_i)  \) + p \(\frac{1+\eta}2 - n_i\)\) \right \ran\eea 
		where we defined $I_i = I \setminus \{ i \}$.
		
		Even in the simplest case where $O = n_i$, this equation is in general untractable: the RHS will involve expectation values of products of the $n_j$ (unless each $\Pi^{a,\ra1}_i = n_{k_\a}$ for some site $k_a$, in which case the dynamics is trivial). To solve for the dynamics of $\lan n_i\ran$ we must therefore also calculate evolution equations for multi-body products of the $n_i$, which in turn involve yet higher-order expectation values; as discussed in the main text, the infinite hierarchy of equations thus generated must be truncated by assuming that correlation functions factorize at sufficiently large distances.

		\ss{Warmup: Toom's rule} \label{app:toom_mf}
		
		Before discussing how the truncation works for the squeezing codes---for which higher body correlation functions need to be included in the hierarchy---we first review how mean field theory works for the simpler (and less interesting) case of Toom's rule, which we will analyze in $d=2$ (see also \cite{bennett1985role} and \cite{lebowitz1990statistical}).
		
		The noiseless automaton possesses only a single type of update, for which a $-1$ spin at site $i$ flips under the dynamics if both $s_{i+x}, s_{i+y}$ are $+1$, and analogously for a $+1$ spin. The relevant projector is accordingly 
		\be \Pi^{\ra 1}_i = n_i n_{i+x} + n_{i+x} n_{i+y} + n_i n_{i+y} - 2 n_in_{i+x} n_{i+y}.\ee 
		
		The simplest mean field ansatz assumes that correlation functions on different sites factorize: 
		\be \left\lan \prod_{i \in I} n_i \right\ran \ra \prod_{i\in I} \lan n_i\ran.\ee 
		Under this assumption, and rewriting things in terms of the expected magnetization 
		\be \lan n_i \ran = \frac{1+m_i}2,\ee 
		we obtain 
		\be \p_t m_i = p(\eta - m_i) + \frac{1-p}2\(-m_i + m_{i+x} + m_{i+y} - m_i m_{i+x}m_{i+y}\).\ee 
		
		If we specify to spatially uniform configurations $m_i = m$, the mean-field equations will involve only a single variable $m$. In this case, there is no possibility of non-reciprocity in the equations of motion, and we may always write $\p_t m = -\d F_{\rm eff} / \d m$ for some effective free energy $F_{\rm eff}$. In the present setting, doing so gives 
		\be \p_t m = p\eta + \frac{1-3p}2 m - \frac{1-p}2 m^3,\ee 
		corresponding to an effective free energy of 
		\be F_{\rm eff} = -p\eta m + \frac{3p-1}4 m^2 + \frac{1-p}8 m^4.\ee 
		This free energy is bounded, and at $\eta=0$ has a second-order phase transition which in mean field occurs at $p_{c,MF} = 1/3$.

		\ss{$\sfR_2$} \label{ss:s2mf}
		
		We now turn to the $\sfR_2$ squeezing code defined in \eqref{sqzrule_r2}. 
		We will in fact consider a slightly more general setup in which $\wedge$ squeezing ($-1$ expansion) occurs along the first $(d-r)/2$ spatial directions, and $\vee$ squeezing ($+1$ expansion) occurs along the remaining $(d+r)/2$ directions, with each direction of update occurring with equal probability $q_a = 1/d$, $a = 1,\dots,d$. This dynamics possesses a $\zt$ symmetry generated by a spin flip and a reflection only when $d$ is even and $r=0$, and we will see momentarily that order (with asynchronous updates) is possible only when this symmetry is present. 
		
		The projectors in question are 
		\be \Pi^{a,\ra1}_i = \begin{dcases} n_{i+a}n_{i-a}  \qq & a \leq \frac{d-r}2 \\ 1 - \ob n_{i-a} \ob n_{i+a}
			\qq& a > \frac{d-r}2 \end{dcases} \ee 
		For any integer $l>0$, define the ``domain wall'' operators
		\bea  d^{la}_i & \equiv n_i n_{i+la} \\ 
		\ob d^{la}_i & \equiv \ob n_i \ob n_{i+la}\eea 
		and define 
		\be D_i \equiv \sum_{a=1}^{(d-r)/2} d^{2a}_i - \sum_{a=(d-r)/2+1}^d \ob d^{2a}_i .\ee 
		Then the time evolution of a generic product of $n_i$ operators is given by 
		\be \p_t \lan O \ran  = \sum_{i \in I} \left\lan \prod_{j\in I_i} \( p\( \frac{1+\eta}2 - n_i\) +  \frac{1-p}d\( \frac{d+r}{2} - dn_i + D_i\)\)\right\ran \ee 
		
		\sss{Fluctuationless mean field} 
		
		Let us first examine the fluctuationless limit, where we take all connected correlation functions of $n_i$ to vanish. We will furthermore assume that the expectation value of $n_i$ is translation invariant,\footnote{An assumption which empirically is seen to be true in the steady state.} as keeping track of spatial derivatives in the mean field equations will not be important for any of the main points to follow. With these assumptions we have 
		\be \lan D \ran = (d+r) \( \lan n \ran - \frac12 \) - r\lan n\ran^2 ,\ee 
		and so 
		\be \p_t \lan n \ran = p \( \frac{1+\eta}2 - \lan n \ran \) + \frac{1-p}d r \lan n \ran(1 - \lan n \ran).\ee 
		We will usually find it more intuitive to write our final evolution equations in terms of the magnetization 
		\be m_i = 2n_i -1,\ee 
		for which (omitting the $\lan\,\ran$s)
		\be \p_t m = p(\eta_{\rm eff} - m) -r \frac{1-p}{2d}  m^2 \ee 
		where the effective bias is $\eta_{\rm eff} = \eta + \frac{1-p}{2dp}r$. This evolution equation corresponds to an effective free energy of 
		\be F_{\rm eff} = - p \eta_{\rm eff} m +  \frac p2 m^2 + r\frac{1-p}{6d} m^3.\ee 
		
		There are a few things to take away from this. First, consider the symmetric case where $d$ is even and $r=0$. We then get the extremely simple
		\be \p_t m = p(\eta - m),\ee 
		which is the evolution equation we would get in the presence of noise alone: $m$ simply relaxes to the average value $\eta$ of the noise over a timescale $1/p$. As discussed in the main text, the effects of the error-correction being performed by the $\sfR_2$ automaton are completely absent when fluctuations of the magnetization are neglected, and there is no possibility of having an ordered phase in this limit. It is possible for the presence of fluctuations to change this conclusion, but since fluctuations enter only at an order of at most $1/d$, the critical noise strength must vanish with $d$ at least as fast as $p_c \sim 1/d$ (we will see that in fact $p_c \sim 1/d^2$). 
		
		Secondly, when $r\neq 0$, the absence of symmetry is reflected in both the modification of $\eta$, and in the generation of an $m^3$ term in the effective free energy. This gives a free energy with only one local minimum (the other extremal point is a local maximum), which is not compatible with the existence of a stable memory. Furthermore, the size of both the induced bias and the $m^2$ term appearing in $\p_t m$ are $O(p^0)$ as $p\ra 0$. Thus our conclusion about the absence of two stable fixed points for $m$ in the range $[-1,1]$ will remain true beyond the present mean field approximation as long as $d$ is large enough. This means that there can be no order in large enough $d$ with asynchronous updates if $r\neq 0$, viz. if the $\zt$ symmetry is not present. We will later numerically check that in fact $p_c = 0$ when $d=3$: thus with asynchronous updates, $\sfR_2$ can only order when $d \in 2\zz$ and $r=0$. We will accordingly specify to this case in the remainder of this subsection. 
		
		\sss{Adding fluctuations} 
		
		Assuming that the symmetries that permute among the first $d/2$ dimensions and the last $d/2$ are not spontaneously broken (this statement is vacuous when $d=2$, and when $d>2$ it is numerically observed to be true), we may write 
		\be \frac{\lan D \ran}d = \frac{\lan d^{2x} \ran - \lan \ob d^{2y} \ran}2,\ee 
		where $x$ is one of the first $d/2$ coordinates, and $y$ is one of the last $d/2$. The exact evolution equation for $\lan n \ran$ is then 
		\bea \p_t \lan n\ran & = \frac12\( 1+\eta p - 2\lan n\ran + (1-p)(\lan d^{2x}\ran - \lan \ob d^{2y} \ran) \).
		\eea 
		We thus need to solve for the as-yet unknown expectation values $\lan d^{2x}\ran,\lan \ob d^{2y}\ran$. A straightforward calculation gives 
		\bea \p_t \lan d^{la}\ran & = (1+\eta p)\lan n \ran - 2 \lan d^{la}\ran +2 \frac{1-p}d \lan n_j D_{j+la}\ran\eea 
		where we have used $\lan n_j D_{j+la}\ran = \lan n_{j+la} D_j\ran$.\footnote{Translation implies $\lan n_{j+la}D_j\ran = \lan n_j D_{j-la}\ran$ and then $\lan d^{la}\ran =\lan d^{-la} \ran$.} By symmetry, we similarly obtain
		\be \p_t \lan \ob d^{la}\ran = (1-\eta p)\lan \ob n \ran - 2\lan \ob d^{la}\ran - 2 \frac{1-p}d \lan \ob n_j D_{j+la}\ran.\ee 
		Again under these assumptions, we have 
		\be \lan n_j D_{j+2a}\ran=  \lan n_j n_{j+x}n_{j+3x}\ran  + (d/2-1) \lan n_j n_{j+2x-w} n_{j+2x+w}\ran - \frac d2 \lan n_j \ob n_{j+2x-y} \ob n_{j+2x+y}\ran \qq (a \leq  d/2) \ee 
		and 
		\be \lan \ob n_j D_{j+2a}\ran= - \lan \ob n_j \ob n_{j+y}\ob n_{j+3y}\ran  - (d/2-1) \lan \ob n_j \ob n_{j+2y-z} \ob n_{j+2y+z}\ran + \frac d2 \lan \ob n_j n_{j+2x-y} n_{j+2x+y}\ran \qq (a > d/2) \ee
		where $w\neq x$ is an arbitrary coordinate with $w \leq d/2$ and $z\neq y$ is a coordinate with $z>d/2$. 
		
		We will truncate the hierarchy of cluster mean-field equations by only keeping correlations between nearest-neighbors and those next-nearest-neighbors which are of the form $(i,i\pm2a)$ for some $a$ (thus for example we do not keep correlations between nnn pairs like $(i,i+a+b)$ for $a\neq b$---this choice keeps the minimal number of variables needed to get an ordered phase). This means that we take, e.g.,
		\be \lan n_j n_{j+2a-b} n_{j+2a+b} \ran \ra \lan n\ran \lan d^{2b} \ran \ee 
		and
		\be \lan n_j n_{j+a} n_{j+3a}\ran \ra \frac{\lan d^{1a} \ran \lan d^{2a}\ran}{\lan n\ran}.\ee 
		Letting $P(\Pi_{i\in I} n_i)$ be the probability of finding $n_i=+1$ at all sites in the set $I$, these equations follow from 
		\bea P(n_j n_{j+a} n_{j+3a}) & = P(n_j|n_{j+a}n_{j+3a}) P(n_{j+a}n_{j+3a}) \\ & = P(n_j|n_{j+a}) P(n_{j+a}n_{j+3a}) \\ & = P(n_jn_{j+a})P(n_{j+a}n_{j+3a})/P(n_{j+a}) \\ & = \lan d^{1a}\ran \lan d^{2a}\ran / \lan n\ran,\eea
		where our factorization assumption enters in the second equality. 
		
		Using these factorizations, we have 
		\be \lan n_j D_{j+2a}\ran = \frac{\lan d^{1x} \ran \lan d^{2x}\ran }{\lan n \ran} + (d/2-1) \lan n \ran \lan d^{2x}\ran - \frac d2\lan n \ran \lan \ob d^{2y}\ran  \qq (a\leq d/2) \ee 
		and 
		\be \lan \ob n_j D_{j+2a}\ran = -\frac{\lan \ob d^{1y} \ran \lan \ob d^{2y}\ran }{\lan \ob n \ran} - (d/2-1) \lan \ob n \ran \lan \ob d^{2y}\ran + \frac d2\lan \ob n \ran \lan d^{2x}\ran  \qq (a>d/2). \ee 
		Similarly, 
		\be \lan n_j D_{j+a}\ran = \lan d^{2x}\ran (1-\lan n \ran) + \frac d2 \lan n \ran (\lan d^{2x}\ran - \lan \ob d^{2y}\ran ) \qq(a\leq d/2)\ee 
		and 
		\be \lan \ob n_j D_{j+a}\ran = -\lan \ob d^{2y}\ran (1-\lan \ob n \ran) + \frac d2\lan \ob n \ran (\lan d^{2x} \ran - \lan \ob d^{2y}\ran ) \qq (a>d/2)\ee 
		where we used relations like $n_j^2 = n_j$ and $\lan n_j n_{j+a-b} n_{j+a+b} \ran \ra  \lan n \ran \lan d^{2b}\ran$ for $a \neq b$ (the latter following from our neglect of connected diagonal next-nearest-neighbor correlations). 
		
		Collecting results, our cluster mean field equations read 
		\bea 
		\p_t \lan n\ran & = \frac12\( 1+\eta p - 2\lan n\ran + (1-p)(\lan d^{2x}\ran - \lan \ob d^{2y} \ran) \) \\ 
		\p_t \lan d^{1x}\ran & = (1+\eta p)\lan n \ran -2\lan d^{1x}\ran + 2\frac{1-p}d\( \lan d^{2x}\ran \lan \ob n \ran + \frac d2 \lan n \ran (\lan d^{2x} \ran - \lan \ob d^{2y} \ran )\) \\ 
		\p_t \lan \ob d^{1y}\ran & = (1-\eta p)\lan \ob n \ran -2\lan \ob d^{1y}\ran-  2\frac{1-p}d\( -\lan\ob  d^{2y}\ran \lan n \ran + \frac d2 \lan \ob n \ran (\lan d^{2x} \ran - \lan \ob d^{2y} \ran )\) \\ 
		\p_t \lan d^{2x} \ran & = (1+\eta p)\lan n \ran - 2\lan d^{2x}\ran + 2 \frac{1-p}d \(\lan d^{2x}\ran \(\frac{\lan d^{1x}\ran }{\lan n \ran} - \lan n \ran \) + \frac d2 \lan n \ran (\lan d^{2x}\ran - \lan \ob d^{2y} \ran ) \)\\ 
		\p_t \lan \ob d^{2y} \ran & = (1-\eta p)\lan \ob n \ran - 2\lan \ob d^{2y}\ran - 2 \frac{1-p}d \(-\lan \ob d^{2y}\ran \(\frac{\lan \ob d^{1y}\ran }{\lan \ob n \ran} - \lan \ob n \ran \) + \frac d2 \lan \ob n \ran (\lan d^{2x}\ran - \lan \ob d^{2y} \ran ) \)
		\eea 
		
		It will be more useful to write these equations in terms of variables that transform in definite representations of the $\zt$ symmetry, viz. the magnetization and the linear combinations 
		\be d^{l\pm} \equiv \frac{d^{lx} \pm \ob d^{ly}}2.\ee 
		$d^{l+}$ is neutral under the $\zt$ symmetry, while $d^{l-}$ transforms in the same way as $m$. 
		Some algebra turns the MF equations into (omitting $\lan\, \ran$s)
		\bea \p_t m & = p\eta - m + 2(1-p) d^{2-} \\ 
		\p_t d^{1+} & = \frac{1+\eta p m}2 - 2d^{1+} + \frac{1-p}d \(d^{2+} + (d-1)md^{2-}\) \\ 
		\p_t d^{1-} & = \frac{\eta p + m}2 - 2d^{1-} + \frac{1-p}d\( -m d^{2+} + (1+d)d^{2-}\) \\ 
		\p_t d^{2+} & = \frac{1+\eta pm}2 - 2d^{2+} + 4\frac{1-p}{d(1-m^2 )} \( d^{2+} d^{1+} + d^{2-}d^{1-} -m(d^{2+}d^{1-} + d^{2-}d^{1+})\)   + \frac{1-p}d\((d-1)md^{2-} - d^{2+}\) \\ 
		\p_t d^{2-} & = \frac{m+\eta p}2 - 2 d^{2-} + 4\frac{1-p}{d(1-m^2)}\(d^{1+}d^{2-} + d^{1-}d^{2+} - m(d^{2+}d^{1+} + d^{2-}d^{1-})\) \\ 
		& \qq + \frac{1-p}d\((d-1)d^{2-} - md^{2+} \) \eea 
		
		The equations further simplify if we write them not in terms of the $d^{l\pm}$, but rather in terms of the connected correlation functions of the magnetization. Define the correlation functions 
		\be f^{la}_i \equiv \lan m_i m_{i+la} \ran  - \lan m_i \ran\lan m_{i+la} \ran, \ee 
		together with their linear combinations 
		\be f^{l \pm} \equiv \frac{f^{lx} \pm f^{ly} }2.\ee 
		The $f^{l\pm}$ will vanish both at $p=0$ and $p=1$, where the system has no nontrivial spatial correlations. A short calculation gives the following relations between $d^{l \pm}$ and $f^{l \pm}$: (again assuming translation invariance and dropping spatial indices)
		\bea \lan d^{l+} \ran & = \frac14(f^{l+} + 1+ m^2) \\
		\lan d^{l-} \ran & = \frac14(f^{l-} + 2 m ).
		\eea 
		Substituting this in our previous MF equations yields, after algebraic simplification, the equations quoted in the main text: 
		\bea
		\p_t m  & = p(\eta-m) + \frac{1-p}2 f^{2-} \\ 
		\p_t f^{1-} & = -2f^{1-} + \frac{1-p}d \(m(1-m^2) + f^{2-} - mf^{2+} \) \\ 
		\p_t f^{2-} & =  -2f^{2-} + \frac{1-p}d \( f^{1-} + mf^{1+} + \frac1{1-m^2} \( f^{1-}f^{2+} + f^{1+}f^{2-} - m(f^{1-}f^{2-} + f^{1+} f^{2+})\)\)\\ 
		\p_t f^{1+} & = -2f^{1+} + \frac{1-p}d \(1-m^2 + f^{2+} - mf^{2-} \) \\ 
		\p_t f^{2+} & =  -2f^{2+} + \frac{1-p}d \( f^{1+} + mf^{1-} + \frac1{1-m^2} \( f^{1+}f^{2+} + f^{1-}f^{2-} - m(f^{1+}f^{2-} + f^{1-} f^{	2+})\)\)
		\eea

		\begin{figure*}[t]
			\centering{\includegraphics[width=0.32\textwidth]{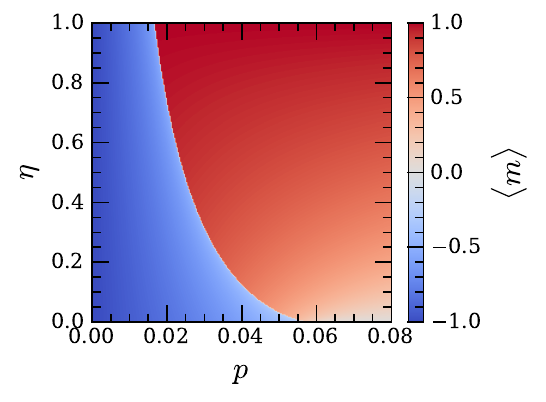}} 
    {\includegraphics[width=0.32\textwidth]{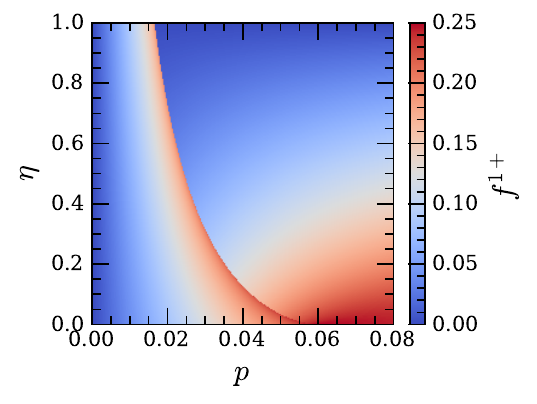}} 
			{\includegraphics[width=0.32\textwidth]{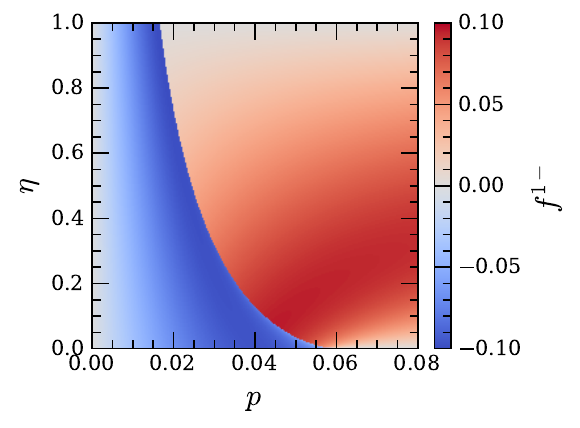}} 
			\caption{Mean-field phase diagrams for the $\sfR_2$ automaton in $d=2$. {\it left:} magnetization, {\it center:} $\zt$-neutral nearest neighbor fluctuations, and {\it right:} $\zt$-odd nearest neighbor fluctuations. The plots are made by starting from a logical state with $m = -1, f^{1\pm} = 0$ and evolving under the dynamical mean-field equations until a steady state is reached.  }
			\label{fig:s2mfpd}
		\end{figure*}
		
		The phase diagram in $d=2$ one obtains from these equations is shown in Fig.~\ref{fig:s2mfpd}.

		\ss{$\sfR_3$} \label{ss:r3mf}
		
		We now consider the three-variable rule $\sfR_3$, which as we will see has rather different physics. Like $\sfR_2$, similar arguments show that $\sfR_3$ is disordered under asynchronous updates in all cases where a $\zt$ symmetry is not present. For this reason we will continue to specify to even dimensions, with the number of $\wedge$ and $\vee$ squeezing directions both equal to $d/2$. 
		
		The relevant projector that determines the dynamics is now cubic in the number operators: 
		\be \Pi^{a,\ra1}_i = \begin{dcases} n_{i+a}n_in_{i-a}  \qq & a \leq \frac{d}2 \\ 1 - \ob n_{i-a} \ob n_i\ob n_{i+a}
			\qq& a > \frac{d}2 \end{dcases}. \ee 
		Define the operators 
		\be D^1_i \equiv \sum_{a=1}^{d/2} d^{2a}_i,\qq D^2_i \equiv \sum_{a=d/2+1}^d \ob d^{2a}_i.\ee 
		Then a generic product of number operators evolves as
		\be \label{s3product} \p_t\lan O\ran =  \sum_{i \in I} \left\lan \prod_{j\in I_i} \( p\( \frac{1+\eta}2 - n_i\) +  (1-p)\( \frac{1}{2} - n_i + \frac1d(n_iD_i^1-\ob n_i D^2_i)\)\)\right\ran. \ee 
		
		\sss{Fluctuationless mean field}
		
		We begin by looking at the limit in which all correlation functions factorize and are translation invariant. In this limit, we have 
		\be \frac1d\lan n_i D^1_i - \ob n_i D^2_i \ran = \frac18(3m + m^3).\ee 
		We then find 
		\be \p_t m = p\eta - \frac{1+3p}4 m + \frac{1-p}4 m^3\ee 
		which corresponds to a free energy of 
		\be F_{\rm eff} = -p \eta m + \frac{1+3p}8m^2 - \frac{1-p}{16}m^4.\ee 
		
		Like $\sfR_2$, no ordering is possible within mean field theory. Unlike $\sfR_2$, fluctuation corrections to this result cannot produce order once $d$ is large enough, since in the present case the mass term in $F_{\rm eff}$ is {\it  positive} and non-vanishing in the limit $p\ra0$. Thus there must be a critical dimension $d_c$ such that the system is disordered for {\it all} $p$ as long as $d > d_c$. This dimension turns out to be $d_c=2$, so that $\sfR_3$ with asynchronous updates orders {\it only} in two dimensions. 
		
		\sss{Adding fluctuations}
		
		The expectation values $\lan n_j D^1_j\ran, \lan \ob n_j D^2_j\ran$ on the RHS of \eqref{s3product} involve sums of operators of the form $\lan n_i n_{i+a} n_{i+2a}\ran$. Since these do not factorize even when only {\it nearest} neighbors are taken into account, we will simplify the calculation by doing an extended cluster mean field theory involving only the $d^{1a}$, taking correlation functions of the $n_j$ to factorize beyond nearest neighbors (for $\sfR_2$ we were forced to account for next nearest neighbor correlations since $\p_t\lan n \ran$ would otherwise trivially lead to $\lan n \ran = (1+\eta)/2$). In this scheme, and working with the assumption that translation and the relevant reflection symmetries are unbroken, we have 
		\be \lan n_j D^1_j\ran =  \sum_{a=1}^{d/2}\frac{\lan d^{1a}_{j-a}\ran\lan d^{1a}_j \ran}{\lan n_j\ran},\qq \lan \ob n_j D^2_j\ran = \sum_{b=d/2+1}^d \frac{\lan \ob d^{1b}_{i-b}\ran\lan \ob d^{1b}_i \ran}{\lan \ob n_j \ran}.\ee

		We then just need the evolution equations for $\lan d^{1a}_j\ran$. These are (here as before $x$ is any coordinate less than $d/2+1$, and $y$ is any coordinate greater than $d/2$)
		\bea \p_t \lan d^{1x}_j \ran & =\frac{1+\eta p}2 \lan n_j +n_{j+x}\ran - 2\lan d^{1x}_j\ran + \frac{1-p}d\( \lan d^{1x}_j (D^1_{j+x} +D^1_j)\ran - \lan n_j \ob n_{j+x} D^2_{j+x} + n_{j+x}\ob n_j D^2_j \ran \)  \\ 
		\p_t \lan \ob d^{1y}_j \ran & =\frac{1-\eta p}2 \lan \ob n_j +\ob n_{j+y}\ran - 2\lan \ob d^{1y}_j\ran + \frac{1-p}d\( \lan \ob d^{1y}_j (D^2_{j+y} +D^2_j)\ran - \lan \ob n_j n_{j+y} D^1_{j+y} + \ob n_{j+y} n_j D^1_j \ran \)  \eea 
		For spatially uniform solutions, this simplifies to 
		\bea \p_t \lan d^{1x}\ran & = (1+\eta p)\lan n \ran -2\lan d^{1x}\ran + 2 \frac{1-p}d\lan n_{j+x}n_j D^1_j - n_{j+x}\ob n_j D^2_j\ran \\ 
		\p_t \lan d^{1y} \ran & = (1-\eta p) \lan \ob n\ran - 2 \lan \ob d^{1y}\ran - 2 \frac{1-p}d \lan \ob n_j n_{j+y} D_{j+y}^1 - \ob n_j \ob n_{j+y} D^2_{j+y} \ran.\eea 
		
		Our factorization assumption lets us simplify the four-point functions above, giving e.g. 
		\bea \lan d^{1x}_j (D^1_{j+x} + D^1_j)\ran & = \lan d^{1x}_j \ran \De^{1x} \( \frac{\lan d^{1x}_{j-x} \ran}{\lan n_j\ran} + \sum_{b\neq a}^{d/2} \frac{\lan d^{1b}_{j-b} \ran\lan d^{1b}_j \ran }{\lan n_j\ran^2} \)\eea 
		where 
		\be \De^{la}(\mco_j) \equiv \mco_j + \mco_{j+la}.\ee 
		We also have 
		\bea \lan n_j \ob n_{j+x} D^2_{j+x} + n_{j+x}\ob n_j D^2_j\ran & = \sum_{b=d/2+1}^d \( \frac{ (\lan n_j\ran -\lan d^{1x}_j\ran)\lan\ob d^{1b}_{j+x} \ran \lan \ob d^{1b}_{j-b+x} \ran }{\lan \ob n_{j+x}\ran^2} + \frac{(\lan n_{j+x}\ran - \lan d^{1x}_j \ran) \lan \ob d^{1b}_j\ran \lan \ob d^{1b}_{j-b} \ran }{\lan \ob n_j \ran^2} \) \eea 
		
		\bea\lan n_{j+x}n_j D^1_j \ran &= \lan n_{j+x} n_j n_{j-x}\ran + (d/2-1)\lan n_{j+x}n_jn_{j-w}n_{j+w} \ran \\ 
		& \ra \frac{\lan d^{1x}\ran^2}{\lan n \ran} + (d/2-1) \frac{\lan d^{1x}\ran^3}{\lan n\ran^2}\eea 
		and similarly 
		\bea \lan n_{j+x} \ob n_j D^2_j\ran & \ra \frac d2\frac{\lan \ob d^{1y}\ran^2}{\lan \ob n \ran^2} (\lan n \ran - \lan d^{1x}\ran)\\ 
		\lan \ob n_j n_{j+y} D^1_{j+y}\ran& \ra \frac d2\frac{\lan d^{1x}\ran^2}{\lan n\ran^2} (\lan \ob n \ran - \lan \ob d^{1y}\ran ) \\ 
		\lan \ob n_j \ob n_{j+y} D^2_{j+y}\ran & \ra \frac{\lan \ob d^{1y}\ran^2}{\lan \ob n \ran} + (d/2-1) \frac{\lan \ob d^{1y} \ran^3}{\lan \ob n \ran^2}.
		\eea 
		
		Recapitulating, the mean field equations are 
		\bea \p_t \lan n \ran & = \frac{1+\eta p}2 - \lan n \ran + \frac{1-p}2 \( \frac{\lan d^{1x} \ran^2}{\lan n \ran } - \frac{\lan \ob d^{1y}\ran^2}{\lan \ob n \ran} \)\\ 
		\p_t \lan d^{1x} \ran & = (1+\eta p)\lan n \ran - 2 \lan d^{1x}\ran + (1-p)\( \frac{\lan d^{1x}\ran^2}{\lan n\ran} \( \frac2d + \(1-\frac2d\)\frac{\lan d^{1x}\ran}{\lan n\ran}\) - \frac{\lan \ob d^{1y} \ran^2}{\lan \ob n\ran^2}(\lan n \ran - \lan d^{1x}\ran) \)  \\ 
		\p_t \lan \ob d^{1y}\ran & = (1-\eta p)\lan \ob n \ran - 2 \lan \ob d^{1y}\ran + (1-p)\( \frac{\lan \ob d^{1y}\ran^2}{\lan \ob n\ran} \( \frac2d + \(1-\frac2d\)\frac{\lan \ob d^{1y}\ran}{\lan \ob n\ran}\) - \frac{\lan d^{1x} \ran^2}{\lan  n\ran^2}(\lan \ob n \ran - \lan \ob d^{1y}\ran) \)  
		\eea

		Written in terms of variables $m,d^\pm$ which have definite charges under the $\zt$ symmetry, this becomes 
		\bea \p_t m & = \eta p -m + \frac{4(1-p)}{1-m^2} \(2d^+d^- - m((d^+)^2 + (d^-)^2)\) \\ 
		\p_t d^+ & = \frac{1+\eta pm}2 - 2d^+ \\ & + 2\frac{1-p}{(1-m^2)^2} \Big( ((d^+)^2+(d^-)^2) \(- 1 + 2/d - (3+2/d)m^2 + 4(1-1/d)(1+m^2)d^+ + (8/d)m d^- \) \Big)  \\ 
		&  + 2d^+d^- \( (3-2/d)m + (2/d+1)m^3 - 4/d(1+m^2)d^-  - 8(1-1/d)md^+\) \\ 
		\p_t d^- & = \frac{m+\eta p}2 - 2d^-  \\ & + 2\frac{1-p}{(1-m^2)^2} \Big( ((d^+)^2+(d^-)^2) \( - (3+2/d)m   -(1-2/d)m^3 +4(1-1/d)(1+m^2)d^- + (8/d)md^+ \)  \\ 
		&  + 2d^+d^- \( 1+2/d + (3-2/d)m^2 - 4/d (1+m^2)d^+ - 8(1-1/d)md^-\)   \Big)  
		\eea 
		Finally, switching from $d^\pm$ to the connected correlation functions $f^\pm$ of the magnetization, after lengthy algebraic steps, produces 
		\bea \p_t m & = \eta p - \frac{1+3 p}4 m + \frac{1-p}4m^3 + \frac{1-p}4 \( \frac{2 f^+f^- - m((f^+)^2 + (f^-)^2)}{1-m^2} + 2(mf^+ + f^-) \) \\ 
		\p_t f^- & = -2f^-  + \frac{1-p}{2d} \Big( 2m(1-m^2) + (1+d+(d-3)m^2)f^- - 2f^+m + 2df^-f^+  \\ 
		& \quad -\frac1{1-m^2} \(2m ((f^-)^2 + (f^+)^2) + 2(1-3m^2)f^+f^- \) \\ 
		& \quad + \frac1{(1-m^2)^2} \big(2m(f^+)^3 - 2(2d-3)m(f^-)^2f^+ + (d-3)(1+m^2)f^-(f^+)^2  + (d-1)(1+m^2)(f^-)^3 \big) \Big)
		\\ 
		\p_t f^+ & = -2f^+  + \frac{1-p}{2d} \Big( 1-m^4 + (1+d+(d-3)m^2)f^+ - 2f^-m + 2d(f^+)^2  \\ 
		& \quad -\frac1{1-m^2} \(4m f^-f^+ +  (1-3m^2)((f^+)^2 + (f^-)^2) \) \\ 
		& \quad + \frac1{(1-m^2)^2} \big(2m(f^-)^3 - 2(2d-3)mf^-(f^+)^2 + (d-3)(1+m^2)(f^-)^2f^+  + (d-1)(1+m^2)(f^+)^3 \big) \Big).
		\eea 
		
		The existence of an ordered phase is most easily established by examining the stability of the disordered fixed point.  Spontaneous symmetry breaking can be ruled out when $d$ is large enough by showing that the disordered fixed point remains stable for all $p$ when $d$ is sufficiently large. To this end, we first solve $\p_t f^+ = 0$ in the $d\ra\infty$ limit. This is done by expanding $f^+$ as a series in $1/d$. The leading $O(d^0)$ part is easily checked to vanish\footnote{The equation for the $O(1)$ part is solved either by $f^+=0$ or by $f^+= \frac2{\sqrt{1-p}} -1$. This latter solution is unphysical, since $f^+$ is bounded from above by $1$.} (as it should, on account of fluctuations vanishing as $d\ra\infty$), while the leading $1/d$ part 
		gives an equation that is easily solved to produce 
		\be f^+ = \frac{1-p}{d(3+p)} + O(1/d^2).\ee 
		
		\begin{figure*}[t]
			\centering
            {\includegraphics[width=0.32\textwidth]{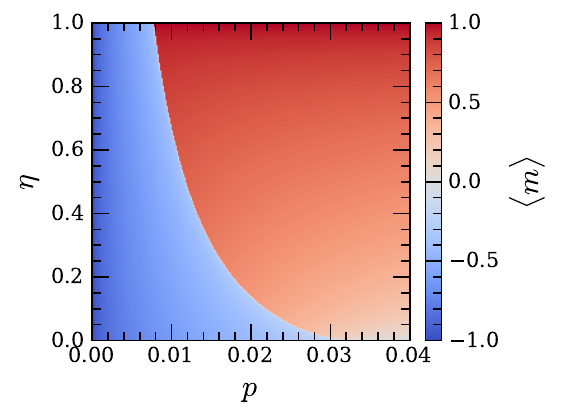}} 
			{\includegraphics[width=0.32\textwidth]{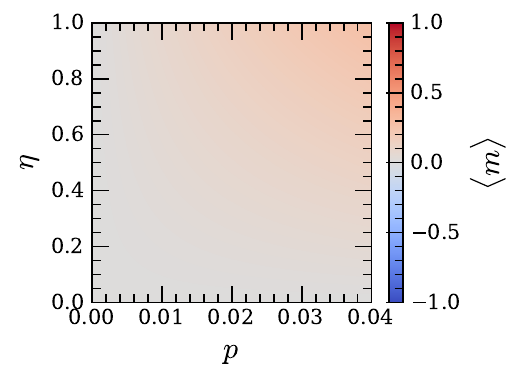}} 
			\caption{Mean-field phase diagrams for magnetization of the $\sfR_3$ automaton's steady state in $d=2$ (left), where the automaton retains an ordered phase, and $d=4$ (right), where it is disordered for all $p,\eta$.  }
			\label{fig:s3mfpd}
		\end{figure*}

		The mass matrix $M^{ab} = - \d \p_t \phi^a / \d \phi^b$ for $\phi = (m,f^-)$ is 
		\be M = \bpm \frac{1+3p}4 - \frac{1-p}4 f^+(2-f^+) & -\frac{1-p}2 (1+f^+) \\ 
		\frac{1-p}d (-(f^+)^3 + (f^+)^2 + f^+ - 1) & 2-\frac{1-p}{2d} ((f^+)^2 (d-3) + 2(d-1) f^+ + d + 1 ) \epm,\ee 
		with $M \neq M^T$ by virtue of the non-equilibrium nature of the present problem. 
		Since $f^+ \sim 1/d$, we have 
		\be M|_{d\ra\infty} = \bpm \frac{1+3p}4 & -\frac{1-p}2 \\ 0 & \frac{3+p}2 \epm + O(1/d).\ee 
		Both eigenvalues of $M$ are thus positive and $O(1)$ even when $p = 0$, provided $d$ is large enough. This shows that the disordered fixed point is always stable for $d > d_c$, with finite $d_c$. In the present mean field analysis it turns out that $d_c = 2$, which agrees with the critical dimension seen in numerics. Solutions of the mean field equations for the magnetization in $d=2,4$ are shown in Fig.~\ref{fig:s3mfpd}.

		\section{Langevin equation from the Magnus expansion} \label{app:magnus}
		
		In this section, we derive the Langevin equation \eqref{langevins} describing the coarsening dynamics of the $\sfR$ squeezing code within the context of the Floquet Glauber dynamics discussed in Sec.~\ref{ss:glauber}. We will assume that the driving period $\o\inv$ is much longer than the local relaxation time, so that the system's evolution is locally in equilibrium. In this limit we expect the dynamics to be well described by a Langevin equation of the form (ignoring the noise for simplicity and using slightly different notation from the main text; here $a$ denotes a spatial direction and repeated indices are summed)
		\be \p_t m =  (\ob K_a + \wt K_a \cos(\o t)) \p_a^2 m - (\ob h + \wt h \cos(\o t)) - V'(m)\ee 
		where the prime on the interaction potential $V$ denotes a derivative with respect to $m$. 
		
		In the limit where the amplitudes $\wt K_a, \wt h$ of the drive are small compared to $\o$, the micromotion induced by the drive can be effectively rotated away using the Floquet-Magnus expansion, producing an effective Langevin equation driven by a time-independent drift operator. For this to be well-controlled, we need to thus work in the regime 
		\be \wt J, \wt h \ll \o \ll \ob J,\ee 
		with the second inequality coming from the local relaxation time being $\sim 1/\ob J$. From our general picture of how squeezing works, we do not expect working in this regime to compromise the existence of a threshold. 
		
		The Magnus expansion is performed using the techniques of \cite{higashikawa2018floquet}, which perform a usual Magnus expansion familiar in quantum mechanics to the differential operator that appears in the Fokker-Planck equation. We will write the Langevin equation as $\p_t m = f(m,\o t)$, with the drift term expanded in Fourier harmonics as 
		\be f(m,\o t) = \sum_l f_l(m) e^{il\o t}.\ee  
		For us only $f_0$ and $f_1 = f_{-1}$ are nonzero, with 
		\be f_0 = \ob K_a \p_a^2 m - \ob h - V'(m),\qq f_{\pm1} = \frac12 (\wt K_a \p_a^2m - \wt h).\ee
		In this case, the results of \cite{higashikawa2018floquet} simplify to give an effective drift force $f_{\rm eff}(m)$ (with no explicit time dependence) of 
		\be f_{\rm eff} = - \frac1{\o^2} [f_1,[f_0,f_1]]\ee 
		to leading order in $1/\o$, where the commutator is defined as a Lie bracket on field space: for two functionals $g,h$ of $m$, 
		\be ([g,h])[m(x)] = \int d^dy \, \( g[m(y)] \frac{\d }{\d m(y)} h[m(x)] - h[m(y)] \frac{\d }{\d m(y)} g[m(x)] \). \ee
		Note that this expansion is essentially done in $f_{l\neq 0} / \o$: thus we only require that $\o$ be large compared to the amplitude of the drive, and not necessarily compared to the size of the static part $f_0$. 
		
		To calculate $f_{\rm eff}$, we use the following commutators: 
		\bea [c, V'] & = c V'' \\ 
		[c, \p_a^2 m] & = 0 \\ 
		[\p_a^2m,\p_b^2m] & = 0 \\ 
		[\p_a^2 m, V'] & = -(\p_b m)^2V''' ,\eea 
		where $c$ stands for a constant operator (for us either $\wt h $ or $\ob h$).\footnote{The second line comes from $([c,\p_a^2m])[m(x)] = c\int_y \p_{a,y}^2 \d(x-y) = 0$. } Then one derives 
		\be [f_0,f_1] = -\wt h V'' - \wt K_b (\p_bm)^2V''' .\ee 
		Taking the commutator of this with $f_1$, some algebra gives 
		\bea [f_1,[f_0,f_1]] & = 2\wt h \wt K_a (\p_am)^2V''''  + \wt h^2 V''' - \wt K_a \wt K_b [\p_a^2 m , (\p_bm)^2V''' ] .\eea 
		Evaluating the last commutator under the simplifying assumption that $\d^5 V / \d m^5 = 0$ and collecting terms, we arrive at\footnote{If we were to keep track of the noise as well, we would find a renormalization of the diffusion constant. Since we are, in any case, not assuming FDT, this renormalization will not be important to explicitly calculate. }
		\bea \p_t m & = \ob K_a \p_a^2 m - \ob h - \( V' + \frac{\wt h^2}{\o^2} V'''\) - \l_a (\p_am)^2   - 2\wt K_a \wt K_b \( 2\p_a^2 \p_b m \p_b m V''' + 2 \p_am\p_bm \p_a\p_bm  V '''' + (\p_a \p_bm)^2 V'''\)\eea 
		where 
		\be \l_a \equiv \frac{2 \wt h \wt K_a V''''}{\o^2}.\ee 
		This yields 
		\be \p_tm = K\D^2m + \wt K_a (\D_a m)^2 - V'(m) - \ob h,\ee 
		together with additional interactions that are unimportant for determining the qualitative behavior of the coarsening dynamics. This is of the same form as the equation for $\sfR$ in \eqref{langevins} after taking $\wt K_x = -\wt K_y$ and adjusting notation appropriately.

		\section{Connectivity of locally testable classical memories} \label{app:samephase}
		
		Let us define a {\it locally testable classical memory} to be a classical memory whose logical state can be read off by measuring expectation values of local operators. In this section, we argue that any two locally testable classical memories with the same number of logical bits are ``continuously connectible''. For us, this means that given any two memories $\mca,\mcb$ with the same number of logical bits, there always exists a 1-parameter family of memories $\mcm(t)$ with $\mcm(0) = \mca, \mcm(1)=\mcb$, such that logical information is preserved along the path $\mcm(t)$, and that for small $\ep$, the rules $\mcm(t)$ and $\mcm(t+\epsilon)$ differ by a small amount, in the sense that the probability for $\mcm(t)$ and $\mcm(t+\epsilon)$ to act differently on a given site at a given time vanishes with $\ep\ra0$. While this definition and the statements made in the section below can be made mathematically rigorous, for space reasons we will content ourselves with a more concise schematic treatment.
		
		The aforementioned claim is perhaps not so obvious a priori, since taking $\mcm(t) = (1-t) \mca + t \mcb$ to linearly interpolate between $\mca$ and $\mcb$---meaning that $\mca$ updates are applied with probability $1-t$ and $\mcb$ updates with probability $t$---will usually {\it not} define a memory-preserving path. As an explicit example, this linear interpolation fails if $\mca$ is a squeezing code and $\mcb = \mca^\neg$ is its dual, c.f. \eqref{dualdef} (it however appears to succeed if $\mca$ is a 2D squeezing code and $\mcb$ is Toom's rule). Nevertheless, we argue below that a memory-preserving path can always be constructed. 
		
		Our argument proceeds by explicitly constructing such a path. Suppose $\mca$ acts on spin system with local state space $\mch$. Consider the automaton $\mca \tp \unit$ acting on $\mch \tp \mch_{an}$, where $\unit$ is a trivial automaton and $\mch_{an}$ is a copy of $\mch$. We will construct a memory-preserving path between $\mca \tp \unit$ and $\mcb \tp \unit$.\footnote{If one does not want to double the number of bits, the ancillae in $\mch_{an}$ can be taken instead to be a subsystem of the original bits.} This path is constructed according to the following procedure:
		\begin{enumerate}
			\item Let $\mch_{an}$ evolve under an arbitrary memory-preserving automaton $\mcc$ which has the same number of locally-testable logical states as $\mca$, with $\mch_{an}$ initialized in an arbitrary logical state. During this stage, the automaton dynamics is a smooth interpolation between $\mca \tp \unit$ and $\mca \tp \mcc$, with the first factor remaining constant. 
			\item Copy the logical information preserved by $\mca$ into $\mch_{an}$. This can be done using non-reciprocal dynamics in a smooth way by slowly turning on a bias which aligns bits in $\mch_{an}$ with bits in $\mch$.
			\item Now that the original logical information is safely stored in $\mch_{an}$, deform the automaton from $\mca \tp \mcc$ to $\mcb \tp \mcc$ along an arbitrary path which keeps the second factor fixed. 
			\item Re-copy the information from $\mch_{an}$ to $\mch$. 
			\item Finally, turn off the error-correcting dynamics on $\mch_{an}$ by moving along a path from $\mcb \tp \mcc$ to $\mcb \tp \unit$ that keeps the first factor fixed. 
		\end{enumerate}
		This provides a map between a memory evolving under $\mca$ (plus a collection of trivial ancillae) to a memory evolving under $\mcb$ (plus the same ancillae), which maps logical states of $\mca$ to those of $\mcb$ in a bijective way, which is the desired result. 
		
	\end{widetext} 

        \bibliography{refs}
		\end{document}